\newcommand{\mylabel}[2]{#2\def\@currentlabel{#2}\label{#1}}
\def\eqref{\@ifstar\@eqref\@@eqref}
\def\@eqref#1{\textup{\tagform@{\ref*{#1}}}}
\def\@@eqref#1{\textup{\tagform@{\ref{#1}}}}
\newtheorem{theorem}{Theorem}[section]
\newtheorem{lemma}{Lemma}[section]
\newtheorem{proposition}{Proposition}[section]
\newtheorem{definition}{Definition}[section]
\newtheorem{corollary}{Corollary}[section]
\theoremstyle{remark}
\newtheorem{remark}{Remark}[section]
\tikzstyle{startstop} = []
\tikzstyle{process} = [rectangle, minimum width=2.5cm, minimum height=0.8cm, text centered, draw=black, fill=white!20]
\tikzstyle{arrow} = [thick,->,>=stealth]
\def\vA{\mathbf{A}}
\def\va{\mathbf{a}}
\def\vbb{\mathbf{B}}
\def\vbbb{\mathbf{b}}
\def\vC{\mathbf{C}}
\def\vD{{\mathbf{D}}}
\def\vdd{\mathbf{d}}
\def\vi{\mathbf{i}}
\def\vm{\pmb{\mathscr{M}}}
\def\vN{N}
\def\vp{\beta}
\def\vP{\mathbf{Pr}}
\def\vt{\tau}
\def\vU{\mathbf{U}}
\def\vu{\mathbf{u}}
\def\vv{\mathbf{v}}
\def\vW{\mathbf{W}}
\def\vM{\mathbf{M}}
\def\vX{\pmb{\mathscr{X}}}
\def\vx{\mathbf{X}}
\def\vxx{\mathbf{x}}
\def\vY{\mathbf{Y}}
\def\vy{\mathbf{y}}
\def\vZ{\mathbf{Z}}
\def\vPsi{\bm{\Psi}}
\def\vPhi{\bm{\Phi}}
\def\vF{\mathbf{F}}
\def\vS{\mathbf{S}}
\def\Id{\mathbf{I}}
\newcommand{\Real}{\mathbb{R}}
\newcommand{\eps}{\varepsilon}
\newcommand{\nrmsqr}[1]{\| #1 \|_2^2}
\newcommand{\qtext}[1]{\quad\text{#1}\quad}
\newcommand{\sE}{\mathcal{E}}
\begin{document}

\title{Faster Johnson-Lindenstrauss Transforms via Kronecker Products}
	
\author{
Ruhui Jin\thanks{Department of Mathematics, University of Texas, Austin, TX},\quad
Tamara G. Kolda \thanks{Sandia National Laboratories, Livermore, CA},\quad
and Rachel Ward\footnotemark[1]}
	
	\bigskip
	
	\bigskip
	
	\bigskip

	\bigskip
	\bigskip

\maketitle

\begin{abstract}
The Kronecker product is an important matrix operation with a wide range of applications in signal processing, graph theory, quantum computing and deep learning.  In this work, we introduce a generalization of the fast Johnson-Lindenstrauss projection for embedding vectors with Kronecker product structure, the \emph{Kronecker fast Johnson-Lindenstrauss transform} (KFJLT).  
The KFJLT reduces the embedding cost by an exponential factor of the standard fast Johnson-Lindenstrauss transform (FJLT)'s cost when applied to vectors with Kronecker structure, by avoiding explicitly forming the full Kronecker products.  
We prove that this computational gain comes with only a small price in embedding power: consider a finite set of $p$ points in a tensor product of $d$ constituent Euclidean spaces $\bigotimes_{k=d}^{1}\Real^{n_k}$, and let $N = \prod_{k=1}^{d}n_k$.  With high probability, a random KFJLT matrix of dimension  $m \times N$ embeds the set of points up to multiplicative distortion $(1\pm \eps)$ provided $m \gtrsim \varepsilon^{-2} \, \log^{2d - 1} (p) \, \log N$.   
We conclude by describing a direct application of the KFJLT to the efficient solution of large-scale Kronecker-structured least squares problems for fitting the CP tensor decomposition.
\end{abstract}

\begin{keywords}
Johnson-Lindenstrauss embedding, fast Johnson-Lindenstrauss transform (FJLT), Kronecker structure, concentration inequality, restricted isometry property.
\end{keywords}

\section{Introduction}
Dimensionality reduction is commonly used in data
analysis to project high-dimensional data onto a
lower-dimensional space while preserving as much information as
possible.
The powerful \emph{Johnson-Lindenstrauss lemma} proves the existence of a class of linear maps which provide low-distortion embeddings of an arbitrary number of points from high-dimensional Euclidean space into an arbitrarily lower dimensional space \cite{JL84,DG03}. 
A (distributional) Johnson-Lindenstrauss transform (JLT) is a random linear map which provides such an embedding with high probability, and a \emph{fast} JL transform (FJLT) exploiting fast matrix-vector multiplies of the FFT significantly reduces the complexity of the embedding with only a minor increase in the embedding dimension \cite{AC06,AC09,AL10}. We consider the dimensionality reduction problem for high-dimensional subspaces with \emph{structure},
specifically, subspaces corresponding to a tensor product of lower-dimensional Euclidean spaces.
In this case, we can dramatically reduce the embedding complexity with only a small increase in the embedding dimension (see \cref{savings}). 

\begin{figure}[tbhp]
\label{savings}
\centering 
\subfloat[embedding time]{\includegraphics[width=0.5\textwidth]{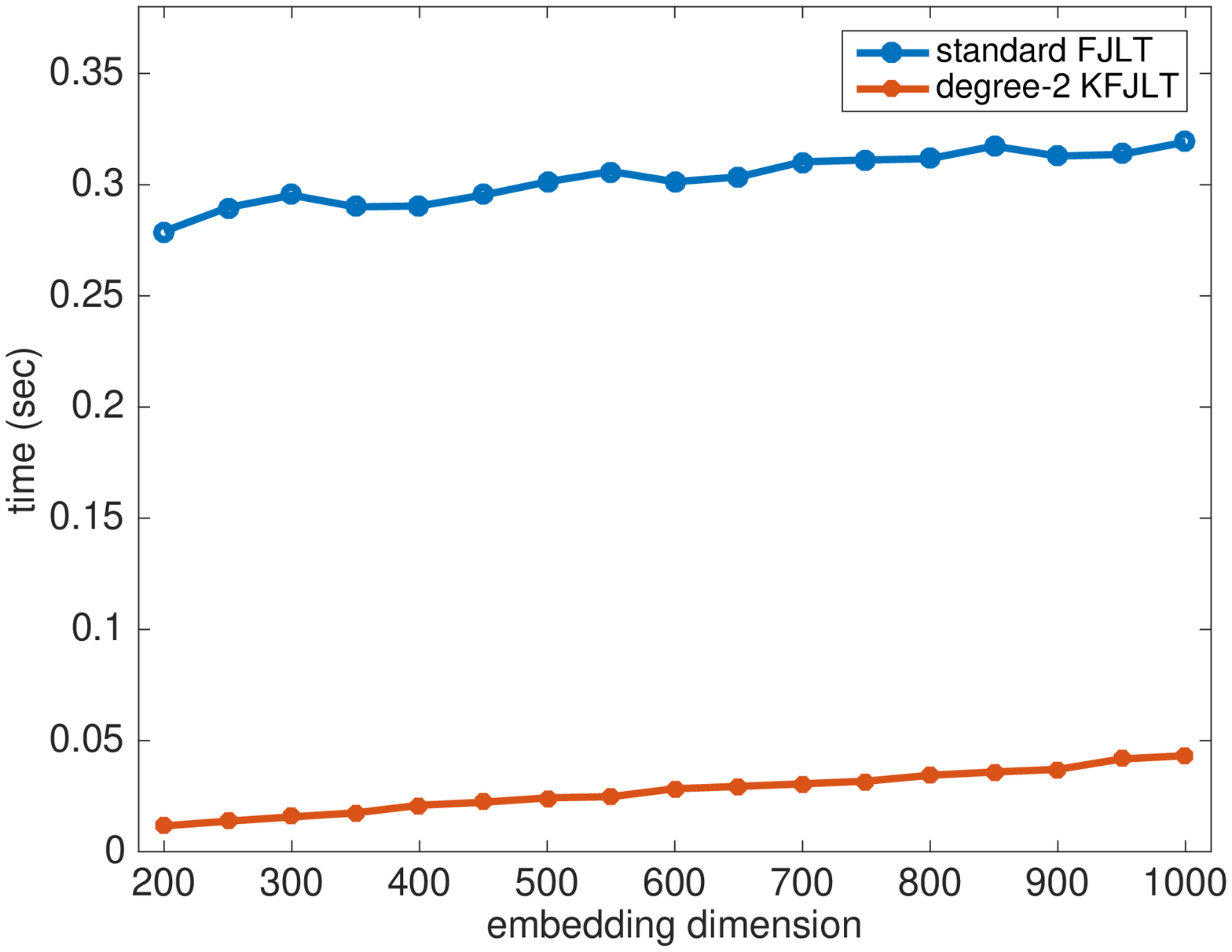}}~~
\subfloat[embedding distortion]{\includegraphics[width=0.5\textwidth]{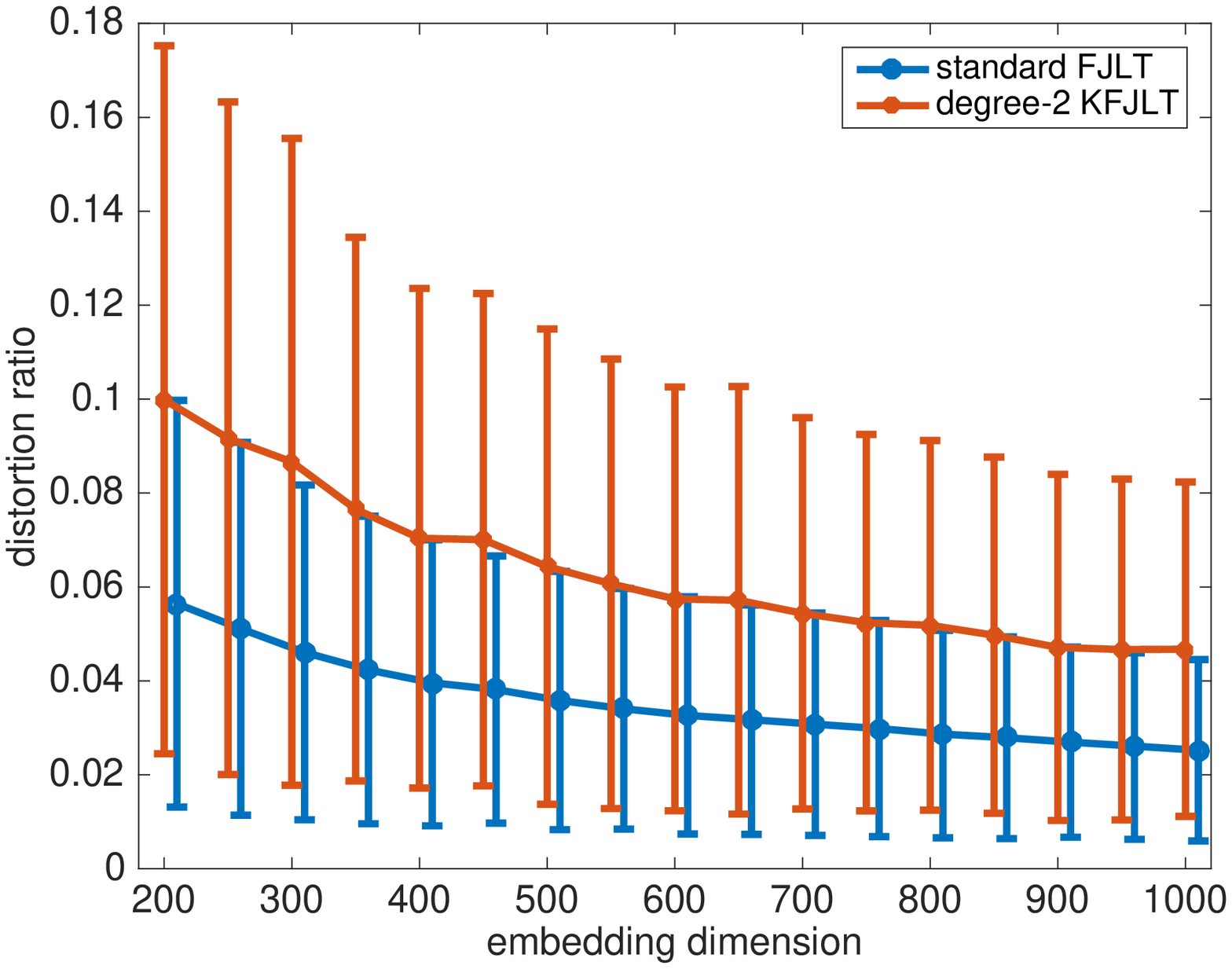}}
\caption{Comparison of the embedding time and distortion between a standard FJLT and a Kronecker FJLT on vectors with Kronecker structure $\Real^{125} \otimes \Real^{125}$ using $\mathrm{MATLAB~ R}2015\mathrm{a}$ fft() and Tensor Toolbox $v3.1$ \cite{TTB} on a standard MacBook Pro $2016$ with $16$ GB of memory. The left figure shows the total embedding time for all $1000$ appropriately structured Kronecker vectors. The right shows the average embedding distortion with vertical lines to indicate the standard deviation over the $1000$ trials. Each component vector consist of normally distributed elements. In the right figure, the x-axis of the standard FJLT plot is shifted by $+10$ to show the error bars clearly.}
\end{figure}

\subsection{Review of JLT and FJLT}
\label{sec:review-jlt-fjlt}

We briefly review JLT and FJLT.
Suppose we have a set $\sE \subset \Real^N$ of $p$ points.
A JLT is a (random) linear map $\vPhi$ from $\Real^N$ to $\Real^m$ with $m$ as small as $m = O(\varepsilon^{-2} \, \log p)$ in the optimal scaling \cite{JL84, LN17} such that with high probability with respect to the draw of $\vPhi$, the transformed points have at most $(1 \pm \eps)$ multiplicative distortion, i.e.,
\begin{equation}
  \label{eq:distorition}
  (1-\eps) \nrmsqr{\vxx} \leq \nrmsqr{\vPhi \vxx} \leq (1+\eps) \nrmsqr{\vxx}
  \qtext{for all} \vxx \in \sE,
\end{equation}
or, more concisely, we denote this as
\begin{equation}
\|\vPhi \vxx\|_2^2 = (1 \pm \eps)\|\vxx\|_2^2  \qtext{for all} \vxx \in \sE.
\end{equation}

The Gaussian testing matrix 
\begin{equation}
  \label{eq:jlt_example}
  \vPhi \in \Real^{m_{\text{g}} \times N}
  \qtext{with}
  m_\text{g}  = O(\varepsilon^{-2}\, \log p)
  \qtext{and  i.i.d.}
  \phi_{ij} \sim \mathcal{N}(0,1/m)
\end{equation}
is a particular JLT which achieves the optimally small $m$ as a function of $p$ and $\varepsilon$.
Although this is a powerful result, the cost of the per-point transformation $\vxx \rightarrow \vPhi \vxx$ is $O(m_{\text{g}}N)$, limiting applicability. 
To reduce the cost, the \emph{fast} JLT (FJLT) was introduced to employ fast matrix-vector multiplication \cite{AC06,AC09,AL10}.
An example FJLT is of the form
\begin{equation}
  \begin{gathered}
  \label{eq:fjlt_example}
  \vPhi = \sqrt{\frac{N}{m_{\text{f}}}}\,\vS \vF \vD_{\xi} \in \mathbb{C}^{m_{\text{f}}\times N} 
  \qtext{with}
  m_\text{f}=O\Big(\varepsilon^{-2}\, \log p \, \log^4(\log p)  \, \log N \Big) {\text{\cite{KW11,HR16}}}, \\
  \begin{split}
    \vS \in \Real^{m_{\text{f}} \times N} &= \text{$m_\text{f}$ random rows of the $N \times N$ identity matrix},\\
    \vF \in \mathbb{C}^{N \times N} &= \text{(unitary) discrete Fourier transform of dimension $N$},\\
    \vD_{\xi} \in \Real^{N \times N} &= \text{diagonal matrix with diagonal entries $\xi(i)$, and}\\
\xi\in \Real^N &= \text{vector with independent random entries drawn uniformly from $\{-1, +1\}$}.
  \end{split}    
  \end{gathered}
\end{equation}
On the one hand, the embedding dimension, $m_\text{f}$, in the FJLT is increased by a factor of $\log^4(\log p)  \, \log N$ as compared to the optimal JLT.  On the other hand, the per-point transformation cost is reduced from $O(m_{\text{g}}N)$ to $O(N \log N + m_{\text{f}})$.

\subsection{Our contribution: Kronecker FJLT}
The Kronecker product is an important matrix operation with a wide range of applications in signal processing \cite{DR93, DB12}, graph theory \cite{LCKFG10}, quantum computing \cite{FW98}, deep learning \cite{MG15}, and inverse problems \cite{JCCL19,CLNW19}, just to name a few.

In this work, we propose a \emph{Kronecker} FJLT (KFJLT) of the following form
\begin{equation}
  \begin{gathered}
  \label{eq:kfjlt_example}
   \begin{split}
     \vPhi &= \sqrt{\frac{N}{m_{\text{kron}}}}\,\vS \bigotimes_{k=d}^1 ( \vF_k \vD_{\xi_k} ) \in \mathbb{C}^{m_{\text{kron}} \times N},\\
 \qtext{with}
  m_\text{kron}&=O\Big(\varepsilon^{-2} \, \log^{2d-1} (p) \, \log^4(\log p) \, \log N \Big),
  \qtext{and} N = \prod_{k=1}^d n_k,\\
     \vS \in \Real^{m_{\text{kron}} \times N} &= \text{$m_\text{kron}$ random rows of the $N \times N$ identity matrix},\\
    \vF_{k} \in \mathbb{C}^{n_k \times n_k} &= \text{(unitary) discrete Fourier transform of dimension $n_k$,}\\
    \vD_{\xi_k} \in \Real^{n_k \times n_k} &= \text{diagonal matrix with diagonal entries $\xi_k(i)$, and}\\
    \xi_{k}\in \Real^{n_k} &= \text{vector with independent random entries drawn uniformly from $\{-1, +1\}$}.
  \end{split}    
  \end{gathered}
\end{equation}
The $\vS$ matrix is unchanged, but $\vF\vD_\xi$ has been replaced by a Kronecker product. We call $d$ the \emph{degree} of the KFJLT.

Suppose each vector $\vxx \in \sE$ belongs to a tensor product space: 
\begin{displaymath}
  \vxx = \bigotimes_{k=d}^1 \vxx_k \in \Real^N
  \text{ where } \vxx_k \in \Real^{n_k},
  \qtext{i.e.,}
  x(i) = \prod_{k=1}^d x_k(i_k)
  \text{ where } i = 1+\sum_{k=1}^d (i_k-1)\prod_{\ell=1}^{k -1} n_\ell.
\end{displaymath}
The KFJLT in this scenario reduces the per-point transformation cost to \\
$O\left( \sum_{k=1}^d n_k \log n_k + m_{\text{kron}} \right)$. As compared to the FJLT, the necessary embedding dimension has increased by a factor of $\log^{2d-2} (p) $. When $d=1$, the KFJLT reduces to the standard FJLT. This idea was proposed in the context of matrix sketching for the least squares problems in fitting the CANDECOMP/PARAFAC (CP) tensor decomposition \cite{BBK18}; however, there was no proof such a transform was a JLT. In this work, we prove that this is a JLT and that the embedding dimension is only slightly worse than in the FJLT case.

\subsection{Related work}
Sun et~al.~\cite{SGTU18} proposed a related tensor-product embedding construction called the tensor random projection (TRP). The TRP is a low-memory framework for random maps formed by a row-wise Kronecker product of common embedding matrices; for example, Gaussian testing matrices and sparse random projections. The authors provide theoretical analysis for the case of the component random maps being two Gaussian matrices. 
The TRP idea was previously used in \cite{BBB15} for tensor interpolative decomposition, but without any theoretical guarantees.   Our theoretical embedding results are favorable compared to those in \cite{SGTU18} in several key aspects: our embedding bound applies to fast JLTs which support fast matrix multiplications, our embedding bound holds for the general degree-$d$ case while they only consider the degree-$2$ case, and even in the degree-$2$ case, the necessary embedding dimension we provide is $O(\eps^{-2} \, \log^3 (p))$, which is significantly smaller than the $O(\eps^{-2} \, \log^{8} (p))$ proved in \cite{SGTU18}.

More peripherally, \textsc{TensorSketch} developed in \cite{PP13}, \cite{PP13} is a popular dimension reduction technique utilizing FFT and fast convolution to recover the Kronecker product of \textsc{CountSketch}ed \cite{CCC04} vectors, but TensorSketch is not a JLT.  Diao et al. \cite{DSSW18} extends the applications of \textsc{TensorSketch} to accelerating Kronecker regression problems by creating oblivious subspace embedding (OSE) \cite{ANW14} without explicitly forming Kronecker products for coefficient matrices.

Simultaneous work \cite{KPVWZ19} develops a tensor sketching method for approximating structured polynomial kernels. One part of the paper introduces the so-called TensorSHRT, which is similar to the KFJLT in our case. However, our result, which  main distinction is the authors only prove bounds particular to the subsampled Hadamard matrix, while our result applies more broadly to a general class of column-randomized RIP matrices including the subsampled Hadamard matrix, the subsampled FFT we focus on,  and bounded orthonormal system constructions more generally  \cite{HR16}.   The embedding dimension in  \cite{KPVWZ19} required to satisfy the Strong JL property is $m \sim \eps^{-2} \log(1/\eta)^{d+1}$, which is better than our bound $m \sim \eps^{-2} \log(1/\eta)^{2d-1}$ once $d \geq 3$.  An interesting open question is whether the bound $m \sim \eps^{-2} \log(1/(\eta))^{d+1}$ holds for the general RIP-based tensor JL constructions considered here.

Another simultaneous work \cite{MB19} on embedding properties of the KFJLT uses a different proof to show that the KFJLT is a JLT transform once $m \gtrsim \varepsilon^{-2} \,\log (p)\,\prod_{k=1}^d \log (n_k\, p) \, \log N,$ compared to ours \\
($m \gtrsim \varepsilon^{-2} \, \log^{2d - 1} (d\, p) \, \log N$). However, their JLT embedding results hold only for Kronecker-structured vectors, while our main result holds for generic vectors. As the main application of the KFJLT -- in accelerating the algorithm CPRAND-MIX for tensor least squares fitting -- involves application of the KFJLT to arbitrary (not Kronecker-structured vectors) vectors as part of the one -time pre-computation step before applying Alternating Least Squares (see CPRAND-MIX for more details). Our result can apply to arbitrary vectors directly, while their approach requires a more complicated argument to achieve the least-squares result.

Work building upon this work \cite{INRZ19} proposes a new type of mode-wise embedding construction for the tensor subspace embedding. In particular, one of the main result in their paper is built on our result Theorem 2.1 with a coarser covering net argument. As a consequence, they show an improved subspace embedding property for the KFJLT given certain coherence assumptions. 

\subsection{Organization of the paper}
The paper is organized as follows. Section $2$ presents our main result concerning the embedding dimension bound of KFJLT (\cref{main}), and its theoretical implications for the application of KFJLT in an algorithm for CP tensor decomposition  (\cref{sample size}). The derivation of the latter result relies on the result from \cref{main}. 
We present the proof of \cref{main} in Section $3$ using the concentration property of a matrix with a so-called restricted isometry property and randomized column signs from the Kronecker Rademacher vector given in \cref{analog} and the best known bound of the RIP matrices (\cref{RIPsharp} from \cite{HR16}). The proof of \cref{analog} via the probability bound shown in \cref{prob bound} is also contained in this section.
Section $4$ shows the proof of \cref{prob bound}, which simply follows a more generalized result given in \cref{generalized prob bound} via induction on the degree $d$. 
We conclude by presenting the numerical results in Section $5$. 
The proof of an intermediate result \cref{modified} (used for \cref{generalized prob bound}) is shown in \cref{modified proof}. We briefly introduce the randomized method of fitting CP tensor model via the alternating least squares in \cref{CPRAND-MIX}.

\subsection{Notation}
We denote by $\|\,\|_{2}$ and $\|\,\|_{\infty}$ respectively the $\ell_2$ and $\ell_\infty$ norms of a vector, and $\|\,\|, ~\|\,\|_{F}$  respectively the spectral and Frobenius norm of a matrix. We use the Euler script uppercase letter $\vX$ to denote tensors, the Roman script uppercase letter $\vx$ to denote matrices, the Roman script lowercase letter $\vxx$ to denote vectors, and simple lowercase letter $x$ to denote a scalar entry. We use the capital letter $I$ for index sets and lowercase letter $i$ for single indices. A \emph{Rademacher vector} $\xi \in \mathbb{R}^N$ refers to a random vector whose entries are independent random variables $\xi(i) = \{+1 ~\text{with probability}~ 1/2, -1 ~\text{with probability} ~ 1/2\}$ for $i\in [N]$. We write $\Id_N \in \Real^{N \times N}$ as the $N$ by $N$ identity matrix. For a vector $\vxx \in \Real^N$, we use $\vD_{\vxx}\in \Real^{N\times N}$ to refer to the diagonal matrix satisfying $\vD_{\vxx} (i,i) = \vxx(i)$ for $i\in [N]$.

\section{Main results for KFJLT}
The major part of our work is analyzing the vector embedding property of the KFJLT in order to provide a theoretical bound for the embedding dimension. 
\begin{theorem}
\label{main}
Fix $d \geq 1$ and $\varepsilon, \eta \in (0,1)$.  Fix integers $n_1, n_2, \dots, n_d$ and $N = \prod_{k=1}^d n_k$. Consider a finite set $\sE\subset \mathbb{R}^N$ of cardinality $|\sE|=p \geq N$.  Suppose the KFJLT $\vPhi \in \mathbb{C}^{m_{\text{kron}} \times N}$ defined in \cref{eq:kfjlt_example} has embedding dimension 
\begin{equation}
\label{embedding dimension}
m_{\text{kron}} \geq C\,\left[\varepsilon^{-2}\, \log^{2d - 1} \left( \frac{dp}{\eta} \right) \, \log^4 \left( \frac{\log^d(\frac{dp}{\eta})}{\eps}\right) \,\log N \right].
\end{equation}
Then
\begin{equation*}
\vP\Big(\| \vPhi \vxx\| _{2}^{2} = (1\pm\varepsilon)\| \vxx\| _{2}^{2}, ~~\forall \vxx \in \sE\Big) \geq 1-\eta-2^{-\Omega(\log N)}.
\end{equation*}
Above, $C > 0$ in \cref{embedding dimension} is a universal constant.
\end{theorem}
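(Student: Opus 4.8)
The plan is to factor the KFJLT into a random restricted-isometry-property (RIP) matrix times a Kronecker-structured sign matrix, and then combine an RIP-to-JLT transfer principle with a union bound over $\sE$. Writing $\vPhi = \vA\vD_\xi$ with $\vA = \sqrt{N/m_{\text{kron}}}\,\vS \bigotimes_{k=d}^1 \vF_k$ the normalized subsampled Kronecker--Fourier sensing matrix and $\vD_\xi = \bigotimes_{k=d}^1 \vD_{\xi_k}$ the diagonal of the \emph{Kronecker Rademacher vector} $\xi = \bigotimes_{k=d}^1 \xi_k$, the key structural observation is the mixed-product identity $\bigotimes_{k=d}^1(\vF_k\vD_{\xi_k}) = \big(\bigotimes_{k=d}^1 \vF_k\big)\vD_\xi$, which cleanly separates the randomness into a fixed partial-Fourier matrix and random signs --- signs that are, however, no longer jointly independent.

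First I would fix an RIP order $s$ and level $\delta$ and invoke the sharp partial-Fourier RIP estimate \cref{RIPsharp} from \cite{HR16}: the matrix $\vA$ has the restricted isometry property of order $s$ with constant $\delta$ off an event of probability $2^{-\Omega(\log N)}$, provided $m_{\text{kron}} \gtrsim \delta^{-2}\, s\, \log^2 s \, \log N$. I would then condition on $\vA$ being a good RIP matrix, so that the only remaining randomness is in $\xi$. Next, for a single fixed $\vxx \in \sE$, I would apply \cref{analog}, the Kronecker analogue of the Krahmer--Ward transfer principle, which asserts that an RIP matrix composed with Kronecker Rademacher signs concentrates, $\|\vA\vD_\xi\vxx\|_2^2 = (1\pm\eps)\|\vxx\|_2^2$, with controlled failure probability.

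The heart of the matter is the proof of \cref{analog}, which rests on the chaos tail estimate \cref{prob bound}. Expanding, $\|\vA\vD_\xi\vxx\|_2^2 = \sum_{i,i'}(\vA^*\vA)_{i,i'}\,x(i)x(i')\prod_{k=1}^d \xi_k(i_k)\xi_k(i'_k)$; since $\mathbb{E}[\xi(i)\xi(i')] = \mathbf{1}[i=i']$, the diagonal part equals $\sum_i \|\vA e_i\|_2^2\, x(i)^2 = (1\pm\delta)\|\vxx\|_2^2$ by the RIP, while the off-diagonal part is a Rademacher chaos of order up to $2d$ in the underlying independent signs. I would bound its tail by induction on the degree $d$ through the more general statement \cref{generalized prob bound}: conditioning on $\xi_2,\dots,\xi_d$ reduces the object to an order-$2$ Rademacher chaos in $\xi_1$ whose coefficient matrix is controlled by the RIP (a Hanson--Wright / quadratic-form estimate), while the residual expectation over $\xi_1$ leaves a degree-$(d-1)$ Kronecker chaos handled by the inductive hypothesis. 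Tracking how the sub-exponential tails compound, one expects each peeled Kronecker factor to inflate the required RIP order by a $\log^2$ factor over the base case $\log^1$ at $d=1$, i.e. $1 + 2(d-1) = 2d-1$, which is precisely the source of the $\log^{2d-1}$ exponent.

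Finally I would balance parameters and union bound. Choosing $\delta \asymp \eps$ (up to logarithmic corrections) and $s \asymp \eps^{-2}\log^{2d-1}(dp/\eta)$ makes the per-point failure at most $\eta/p$; a union bound over the $|\sE| = p$ points gives total failure $\eta$ from the signs plus $2^{-\Omega(\log N)}$ from the RIP event. Substituting these choices into the RIP requirement $m_{\text{kron}} \gtrsim \delta^{-2} s \log^2 s \log N$ and simplifying the logarithmic terms in $s$ reproduces \cref{embedding dimension}, with the lower-order factor $\log^4\!\big(\log^d(dp/\eta)/\eps\big)$ inherited from the $\log$ factors in the partial-Fourier RIP estimate. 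The main obstacle is exactly the Kronecker concentration inequality \cref{analog}/\cref{prob bound}: because the sign vector is a tensor product of independent Rademacher vectors rather than a single independent vector, the norm is a high-order ($2d$) chaos instead of a quadratic form, so the one-shot Hanson--Wright argument underlying the classical FJLT does not apply, and the delicate part is organizing the degree induction so that the accumulated logarithmic losses remain as small as $\log^{2d-1}$.
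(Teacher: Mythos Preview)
Your overall architecture---factor $\vPhi=\vA\vD_\xi$ via the mixed-product identity, get $(2s,\delta)$-RIP for $\vA$ from \cref{RIPsharp}, transfer to a JLT via the Kronecker concentration inequality \cref{analog}/\cref{prob bound}, and union-bound over $\sE$---matches the paper exactly. The gap is in the parameter balancing, and it is not cosmetic.

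The concentration inequality \cref{analog} does \emph{not} give distortion $\eps\asymp\delta$. Its conclusion is $\|\vPsi\vD_\zeta\vxx\|_2^2=(1\pm\eps)\|\vxx\|_2^2$ with $\eps=2\delta\,s^{d-1}$: each layer of the induction in \cref{generalized prob bound} (peeling one Rademacher factor via \cref{induction}) multiplies the achievable distortion by $s$, so after $d-1$ peels the RIP level $\delta$ has been amplified to $\delta s^{d-1}$. Consequently the correct choices are $s\asymp\log(dp/\eta)$ (just large enough to make the tail $\exp(-s/128)$ beat $\eta/p$) and $\delta\asymp\eps/s^{d-1}\asymp\eps/\log^{d-1}(dp/\eta)$. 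Plugging these into $m\gtrsim\delta^{-2}s\,\log^2(s/\delta)\log^2(1/\delta)\log N$ gives $\delta^{-2}s\asymp\eps^{-2}s^{2d-1}\asymp\eps^{-2}\log^{2d-1}(dp/\eta)$, and the leftover $\log^2(s/\delta)\log^2(1/\delta)$ becomes the $\log^4\!\big(\log^d(dp/\eta)/\eps\big)$ factor. That is the actual source of the exponent $2d-1$: it is $2(d-1)$ from $\delta^{-2}$ plus $1$ from the bare $s$, not a ``$\log^2$ per peel'' inflation of the RIP order.

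Your stated choice $\delta\asymp\eps$, $s\asymp\eps^{-2}\log^{2d-1}(dp/\eta)$ is inconsistent with $\eps=2\delta s^{d-1}$ (it would force $s\asymp1$), and if you ignore that constraint and plug it into the RIP requirement you get $m\gtrsim\eps^{-4}\log^{2d-1}(\cdots)\log N$, an extra $\eps^{-2}$ off from \cref{embedding dimension}. So the missing idea is precisely that the Kronecker structure degrades the \emph{distortion} (forcing $\delta\ll\eps$), not the RIP order; once you internalize $\eps=2\delta s^{d-1}$ the rest of your outline goes through.
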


\begin{remark}
\label{d=1}
For $d=1$, the embedding $\vPhi$ reduces to the standard FJLT corresponding to a random subsampled DFT matrix with randomized column signs.  In this case, the results of \cref{main} are already known, see \cite{KW11}, and stated above for completeness.  The result for $d \geq 2$ is proved in this paper. 
\end{remark}

\begin{remark}
The main theorem holds not only for subsampled discrete Fourier matrices, but more broadly to any matrix satisfying the restricted isometry property (see \cref{rip}) -- including  subsampled Hadamard transforms and DCT matrices -- with randomized column signs.  We focus on the FFT-based fast JL transform for clarity of presentation.
\end{remark}

\begin{remark}
In the KFJLT construction, the randomness in the embedding construction decreases as the degree $d$ increases. Specifically, the Kronecker product of independent random matrices $\bigotimes_{k=d}^{1} \vD_{\xi_{k}}$ consists of $\sum_{k=1}^{d} n_{k}$ bits, compared to $N = \prod_{k=1}^{d} n_{k}$ bits which would be used to construct a standard FJLT. This reduction in randomness is the source of the additional factor of $\log^{2d-2} (p)$ in the number of measurements $m$ required to achieve the quality of approximation compared to the standard FJLT. While we suspect that this additional factor may be pessimistic, some loss of embedding power is necessary with increasing degree $d$.  This is explored numerically in \cref{savings} and \cref{numerics}.
\end{remark}

\begin{remark}
The embedding dimension of the KFJLT stated in the abstract and \cref{eq:kfjlt_example} are simplified to omit constants $d, \eta, \eps$ in the logarithmic term of \cref{embedding dimension}.
\end{remark}

\subsection{Preliminaries}
To clearly illustrate our motivation, we first introduce the linear algebra background.
Given matrices $\vx\in\mathbb{R}^{m\times n}$ and $\vY\in\mathbb{R}^{m'\times n'}$, the Kronecker product of $\vx$ and $\vY$ is defined as
\begin{equation}
\vx \otimes \vY= \left[                 
  \begin{array}{cccc}   
   x(1,1)\vY&x(1,2)\vY &\dots&x(1,n)\vY \\  
    \vdots&\vdots &\ddots&\vdots\\
     x(m,1)\vY&x(m,2)\vY &\dots&x(m,n)\vY \\  
  \end{array}
\right]\in\mathbb{R}^{mm'\times nn'}.
\end{equation}
We frequently use the distributive property of the Kronecker product:
\begin{equation}
\vW\vx\otimes \vY\vZ = (\vW\otimes \vY)(\vx\otimes \vZ).
\end{equation}
Another important property is that the Kronecker product of unitary matrices is a (higher-dimensional) unitary matrix.

\subsection{Cost savings when applied to Kronecker vectors}
Although \cref{main} concerns the general embedding property of the KFJLT $\vPhi$, the embedding is particularly efficient when considered as an operator $\vPhi: \bigotimes_{k=d}^{1} \mathbb{C}^{n_{k}} \rightarrow \mathbb{C}^m$ applied to vectors $\vxx =  \bigotimes_{k=d}^{1} \vxx_k \in \bigotimes_{k=d}^{1} \Real^{n_k}$ with Kronecker product structure matches that of the embedding matrix.  In this setting, the Kronecker mixing on $\bigotimes_{k=d}^{1} \vxx_k$ is equivalent to imposing the mixing operation respectively on each component vector $\vxx_k$, and reduces the mixing cost to a much smaller scale.  As the Kronecker structure of the embedded vector is maintained after the mixing, we are able to start from the sampled elements and trace back to find its forming components based on the invertible linear transformation of indices. This strategy restricts the computation objects to only the sampled ones and saves significant amount of floating point operations and memory cost, compared to conventional embedding methods. See \cref{embedding cost} for the comparison in cost between the standard and Kronecker FJLT on Kronecker vectors.

\begin{table}[tbhp]
{\footnotesize
  \caption{{\bf Embedding cost on Kronecker vectors.} (Here $N = \prod_{k=1}^{d} n_k$.)}
\label{embedding cost}
\begin{center}
  \begin{tabular}{ccccc} 
  \toprule[1pt]
 & \bf Forming Kronecker product & \bf Mixing & \bf Sampling \\[1mm]
\bf FJLT & ${N}$ & ${O\left(N\,\log N\right)+ N}$ & ${m_{\text{f}}}$\\[1.5mm]
\bf KFJLT  & none &${O\left(\sum_{k=1}^{d}n_{k}\log n_{k} \right)+\sum_{k=1}^{d}n_{k}}$ & ${d\,m_{\text{kron}}}$\\ 
 \bottomrule[1pt]
  \end{tabular}
\end{center}
}
\end{table}

Note that we treat the construction degree $d$ as a constant in the complexity.
 
\subsection{Applications to CP tensor decomposition}
\label{tensor}

The study of multiway arrays, \emph{tensors}, has been an active research area in large-scale data analysis, because it is a natural algebraic representation for multidimensional data models. 

The KFJLT technique has been applied as a sketching strategy in a randomized algorithm: CPRAND-MIX for CP tensor decomposition. At each iteration, the \emph{alternating least squares} (CP-ALS) problem for fitting a rank-$R$ tensor model solves a problem of the form:

\begin{equation}
\label{cpals}
\min_{\vx\in \Real^{R \times n}}\| \vA\vx - \vbb \|_{F},
\end{equation}
where 
$\va(i)$ and $\vbbb(j)$ are respectively the $i$-th column of $\vA$ and the $j$-th column of $\vbb$, for $i \in [R]$ and $j \in [n]$.
Each column $\va(i)$ has the Kronecker structure:
\begin{align*}
\va(i) = \bigotimes\limits_{k=d}^1 \va_k(i) \in \bigotimes_{k=d}^1\Real^{n_k}\subset \Real^N.
\end{align*}
We refer the readers to \cref{CPRAND-MIX} and \cite{BBK18} for more details. 

This least squares problem is a candidate for the KFJLT sketching approach in \cref{eq:kfjlt_example}. \Cref{main} demonstrates that KFJLT is a low-distortion embedding for a fixed set of points with high probability. Combining with standard covering arguments, we can also derive a bound for the power of KFJLT in embedding an entire subspace of points (rather than a finite set of points), and through this, provide a theoretical guarantee for the sample size of CPRAND-MIX. 

Note that the matrix least squares \cref{cpals} can be factored into $n$ separate vector least squares on each column $\vbbb(j)$: $\min_{\vxx(j)\in \Real^{R}}\| \vA\vxx(j)- \vbbb(j) \|_2$. Thus, without loss of generality, we simplify the problem setting and focus on the vector-based least squares. 

\begin{proposition}
\label{sample size}
For the coefficient matrix $\vA\in\Real^{N \times r}$ in \cref{cpals} and a fixed vector $\vbbb\in \Real^N$, consider the problem: $\min_{\vxx\in\Real^R} \|\vA\vxx-\vbbb\|_2$. Denote $\text{rank}(\vA) = r \leq R.$ Fix $\eps, \eta \in (0,1)$ such that $N \lesssim 1/\eps^r$ and integer $r \geq 2$. Then a degree-$d$ KFJLT $\vPhi = (\sqrt{N/m})\,\vS\bigotimes_{k=d}^{1}(\vF_{k}\vD_{\xi_{k}}) \in \mathbb{C}^{m \times N}$ with 
\begin{equation}
\label{sample size term}
m = O\left(\eps^{-1}\,r^{2d} \,\log^{2d-1}( \frac{r}{\eps}) \, \log^4\left(\frac{r}{\eps}\,\log( \frac{r}{\eps})\right)\, \log N\right)
\end{equation}
uniformly sampled rows is sufficient to output
\begin{equation}
\label{sketched solution}
\hat{\vxx} = \arg\min_{\vxx\in \Real^{R}}\| \vPhi\vA\vxx - \vPhi\vbbb \|_2,
\end{equation}
such that $$\vP\Big(\| \vA\hat{\vxx}- \vbbb \|_2  = (1\pm O(\eps)) \min_{\vxx\in \Real^R}\| \vA\vxx - \vbbb \|_2\Big) \geq 1-\eta-2^{-\Omega(\log N)}.$$ 
\end{proposition}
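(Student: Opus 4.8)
The plan is to reduce the approximation guarantee to two structural conditions on the sketch $\vPhi$ and then extract both from \cref{main} through a covering-net argument. Write $\vxx^{\ast} = \arg\min_{\vxx}\|\vA\vxx-\vbbb\|_2$ and let $\vbbb^{\perp} = \vbbb-\vA\vxx^{\ast}$ be the optimal residual, so that $\mathrm{OPT} := \|\vbbb^{\perp}\|_2 = \min_{\vxx}\|\vA\vxx-\vbbb\|_2$ and $\vbbb^{\perp}$ is orthogonal to $\mathrm{range}(\vA)$. Let $\vU\in\Real^{N\times r}$ have orthonormal columns spanning $\mathrm{range}(\vA)$. Following the classical sketch-and-solve reduction, it suffices to establish (i) a constant-distortion subspace embedding $\|\vPhi\vU\vw\|_2^2 = (1\pm\tfrac12)\|\vw\|_2^2$ for all $\vw\in\Real^r$, and (ii) an approximate-matrix-product (near-orthogonality) bound $\|(\vPhi\vU)^{\top}\vPhi\vbbb^{\perp}\|_2^2 \le \tfrac{\eps}{2}\,\mathrm{OPT}^2$. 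Indeed, writing $\vA\hat{\vxx}-\vbbb = \vU\vw-\vbbb^{\perp}$ for the appropriate $\vw$, the normal equations for $\hat{\vxx}$ give $(\vPhi\vU)^{\top}\vPhi\vU\vw = (\vPhi\vU)^{\top}\vPhi\vbbb^{\perp}$; condition (i) lower-bounds the left side by $\tfrac12\|\vw\|_2$ and condition (ii) upper-bounds the right side by $\sqrt{\eps/2}\,\mathrm{OPT}$, yielding $\|\vw\|_2^2 = O(\eps)\,\mathrm{OPT}^2$. Since $\vU\vw\perp\vbbb^{\perp}$, the Pythagorean identity gives $\|\vA\hat{\vxx}-\vbbb\|_2^2 = \mathrm{OPT}^2 + \|\vw\|_2^2 = (1+O(\eps))\,\mathrm{OPT}^2$, which is the claim after taking square roots.

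Both conditions are finite-set embedding statements in disguise, so I would deduce them from \cref{main}. For (i), I would cover the unit sphere of $\mathrm{range}(\vA)$ by a net $\mathcal N$, apply \cref{main} to $\mathcal N$, and then bootstrap from the net to the whole sphere by the standard Lipschitz/approximation step, upgrading the finite embedding to a subspace embedding. For (ii), I would use polarization: since $\langle\vu_i,\vbbb^{\perp}\rangle = 0$ for each column $\vu_i$ of $\vU$, setting $\vv = \vbbb^{\perp}/\mathrm{OPT}$ gives $\langle\vPhi\vu_i,\vPhi\vv\rangle = \tfrac14(\|\vPhi(\vu_i+\vv)\|_2^2 - \|\vPhi(\vu_i-\vv)\|_2^2)$, so that if \cref{main} embeds the finite set $\{\vu_i\pm\vv : i\in[r]\}$ up to distortion $\delta$, then $|\langle\vPhi\vu_i,\vPhi\vbbb^{\perp}\rangle| \le \delta\,\mathrm{OPT}$ and hence $\|(\vPhi\vU)^{\top}\vPhi\vbbb^{\perp}\|_2^2 \le r\,\delta^2\,\mathrm{OPT}^2$.

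The parameter bookkeeping then dictates the stated bound. Choosing the embedding distortion $\delta \asymp \sqrt{\eps/r}$ makes the estimate in (ii) at most $\tfrac{\eps}{2}\,\mathrm{OPT}^2$, while the same $\delta$ comfortably supplies the constant distortion needed in (i). Taking a sufficiently fine net (resolution polynomially small in $\delta$) yields a finite set of cardinality $p \asymp (C/\eps)^{\Theta(r)}$, so that $\log p \asymp r\log(r/\eps)$, and the hypothesis $p\ge N$ of \cref{main} is exactly what the assumption $N\lesssim 1/\eps^{r}$ guarantees. Substituting $\eps\mapsto\delta\asymp\sqrt{\eps/r}$ (so $\delta^{-2}\asymp r/\eps$) together with this $p$ into the embedding dimension \cref{embedding dimension} gives, up to the same $\log^4$ factor appearing there, $m \asymp \delta^{-2}\,\log^{2d-1}(dp/\eta)\,\log N \asymp \eps^{-1}\,r^{2d}\,\log^{2d-1}(r/\eps)\,\log N$, matching \cref{sample size term}; the success probability inherits the $1-\eta-2^{-\Omega(\log N)}$ form from \cref{main} after a union bound over the two events.

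The step I expect to be the crux is securing the favorable $\eps^{-1}$ dependence: a black-box subspace embedding of the combined span of $\vA$ and $\vbbb$ would only deliver $\eps^{-2}$, so it is essential to separate the constant-distortion subspace embedding (i) from the near-orthogonality estimate (ii) and to exploit that the true inner products $\langle\vu_i,\vbbb^{\perp}\rangle$ vanish, which is precisely what converts $r$ terms of size $\delta\,\mathrm{OPT}$ into an $O(\eps)\,\mathrm{OPT}^2$ budget. The remaining delicate points are organizational: choosing the net resolution fine enough that the bootstrapping step in (i) loses only constants, and carefully propagating both the $p\ge N$ requirement and the two-term failure probability of \cref{main} through the union bound.
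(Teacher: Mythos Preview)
Your proposal is correct and follows essentially the same route as the paper: both reduce to applying \cref{main} with distortion $\delta\asymp\sqrt{\eps/r}$ to a finite set of size $p\asymp(r/\eps)^{\Theta(r)}$ obtained via covering, then invoke the standard sketch-and-solve reduction. The only organizational difference is that the paper packages the two structural conditions as a pair of affine subspace embeddings (on $\mathrm{col}(\vU)$ and on $\mathrm{col}(\vU)-\vbbb$, each with distortion $\sqrt{\eps/r}$, via a separate corollary) and then cites Theorem~2.16 of Woodruff's monograph as a black box, whereas you unpack that black box explicitly into (i) a constant-distortion subspace embedding plus (ii) a polarization-based near-orthogonality estimate on the $2r$ vectors $\{\vu_i\pm \vbbb^{\perp}/\mathrm{OPT}\}$; both lead to the same $m$ and the same failure probability.
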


The proposition is proved using the following corollary of our main result \cref{main}:
\begin{corollary}
\label{subspace embedding}
Denote $\vU \in \Real^{N\times r}$ as the orthogonal basis for $\vA\in\Real^{N \times R}$ with $\text{rank}(\vA) = r \leq R$, and ${\bf{col} (\vU)}$ as the column space of $\vU$. Fix $\eps, \eta \in (0, 1)$ such that $N \lesssim 1/\eps^{r}$, and vector $\mathbf{c} \in \Real^N$.  Draw a degree-$d$ KFJLT $\vPhi \in \mathbb{C}^{m \times N}$, with $m$ satisfying
\begin{equation}
\label{m term}
m \geq C'\,\Big[\eps^{-2}\, \log^{2d-1}\left( \frac{d}{\eps^r \eta} \right)\, \log^4\left(\frac{\log^d(\frac{d}{\eps^r \eta})}{\eps}\right)\, \log N\Big].
\end{equation}
Then, with respect to the draw of $\vPhi$,
\begin{equation*}
\vP\Big(\|\vPhi\vy\|_2^2  = (1\pm \eps)\|\vy\|_2^2, ~\forall\vy \in {\bf{col} (\vU)} - \mathbf{c} \Big) \geq 1-\eta-2^{-\Omega(\log N)}.
\end{equation*}
Here, $C' > 0$ in \cref{m term} is a universal constant.
\end{corollary}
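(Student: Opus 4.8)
The plan is to deduce this \emph{subspace} embedding from the \emph{finite-set} embedding of \cref{main} by a covering-net argument, with the net resolution tuned so as to reproduce exactly the cardinality and distortion recorded in \cref{m term}. First I would replace the infinite affine set $\mathbf{col}(\vU) - \mathbf{c}$ by a compact one. Let $V := \mathrm{span}\big(\mathbf{col}(\vU) \cup \{\mathbf{c}\}\big)$, a linear subspace of dimension $r' \le r+1$ that contains $\mathbf{col}(\vU) - \mathbf{c}$. Since the target property $\|\vPhi \vy\|_2^2 = (1\pm\eps)\|\vy\|_2^2$ is invariant under the rescaling $\vy \mapsto \lambda \vy$ inside the linear space $V$, it suffices to establish the bound uniformly over the unit sphere $S_V := \{v \in V : \|v\|_2 = 1\}$; the estimate for the affine slice $\mathbf{col}(\vU) - \mathbf{c} \subset V$ then follows by homogeneity.

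Next I would build a $\delta$-net $\mathcal{N} \subset S_V$ of resolution $\delta$ equal to a small constant multiple of $\eps$. A volumetric bound gives $|\mathcal{N}| \le (1 + 2/\delta)^{r'} \le (C_0/\eps)^{r}$ (absorbing the harmless $r' \le r+1$ into constants). The hypothesis $N \lesssim 1/\eps^{r}$ is precisely what guarantees $|\mathcal{N}| \ge N$, so with $p := |\mathcal{N}|$ the finite set $\mathcal{N}$ meets the cardinality requirement $p \ge N$ of \cref{main}. Applying \cref{main} to $\mathcal{N}$ with distortion parameter $\eps/2$ and failure probability $\eta$, and substituting $p \sim (1/\eps)^{r}$ into \cref{embedding dimension}, reproduces the stated $m$ of \cref{m term} essentially verbatim: the factor $\log^{2d-1}(dp/\eta)$ becomes $\log^{2d-1}(d/(\eps^{r}\eta))$, the $\log^4(\cdot)$ term matches after absorbing constants, and the replacement $\eps \mapsto \eps/2$ is swallowed by $C'$. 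Thus on an event of probability at least $1 - \eta - 2^{-\Omega(\log N)}$ we have $\|\vPhi v_0\|_2^2 = (1 \pm \eps/2)\|v_0\|_2^2$ simultaneously for every $v_0 \in \mathcal{N}$.

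It then remains to upgrade the net-wise estimate to a uniform one over $S_V$ on the same event. For $v \in S_V$, choose a nearest net point $v_0 \in \mathcal{N}$, so $\|v - v_0\|_2 \le \delta$, and write $\|\vPhi v\|_2^2 - \|\vPhi v_0\|_2^2 = \langle \vPhi(v-v_0),\, \vPhi(v+v_0)\rangle$. Bounding the right-hand side in terms of $M := \sup_{u \in S_V}\|\vPhi u\|_2^2$ and the net bound yields a self-bounding inequality $M \le (1+\eps/2) + 2\delta M$; solving it and taking $\delta$ a sufficiently small constant multiple of $\eps$ forces $\|\vPhi v\|_2^2 = (1\pm\eps)\|v\|_2^2$ for all $v \in S_V$. (Equivalently, one may sidestep the a-priori control of $M$ by a telescoping successive-approximation argument across a dyadic family of nets.) Homogeneity then extends the bound to all of $V$, hence to $\mathbf{col}(\vU) - \mathbf{c}$, completing the proof.

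The hard part will be this final net-to-sphere transfer: \cref{main} constrains only finitely many points, so passing to the continuum demands an a-priori spectral control of $\vPhi$ restricted to $V$, which is inherently circular unless resolved through the self-bounding inequality or the dyadic chaining above. A secondary, purely bookkeeping subtlety is the affine shift $\mathbf{c}$, which nominally raises the working dimension from $r$ to $r+1$; one must verify that the net cardinality simultaneously exceeds $N$ (so that \cref{main} applies) and does not inflate the $\eps^{-r}$ scaling recorded in \cref{m term}.
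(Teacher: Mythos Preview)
Your proposal is correct and follows essentially the same approach as the paper. The paper's own justification is a one-sentence sketch: apply \cref{main} to an $\eps/2$-net of $\mathbf{col}(\vU)-\mathbf{c}$ of cardinality $O(1/\eps^r)$, citing \cite{LGM96} for the net bound and \cite{BDDW07} for the net-to-sphere transfer; you have simply written out in detail what those citations encapsulate, including the homogeneity reduction to the unit sphere, the self-bounding inequality for $M=\sup_{u\in S_V}\|\vPhi u\|_2^2$, and the bookkeeping for the affine shift.
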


\cref{subspace embedding} is obtained from \cref{main} via a standard covering argument; specifically, by applying the KFJLT $\vPhi$ on a net of ${\bf{col} (\vU)} - \mathbf{c}$ with cardinality $O(1/\eps^r)$ \cite{LGM96} with the distortion factor $\eps/2$ \cite{BDDW07}.

\begin{proof}[Proof of \cref{sample size}]

Suppose $\vPhi$ is an $m \times N$ KFJLT with parameters defined as in \cref{sample size}. 

Apply \cref{m term} from \cref{subspace embedding} to each of $\mathbf{c}_1  = {\bf 0}$ and $\mathbf{c}_2 = {\bf b}$, using parameters $\eps'=\sqrt{\eps/r}$ and $\eta'=\eta/2$:
\begin{enumerate}
{\setlength\itemindent{12pt} 
\item 
$\vPhi$ is a $(1\pm \sqrt{\eps/r})$ embedding on ${\bf{col} (\vU)}$;}
{\setlength\itemindent{12pt} 
 \item 
 $\vPhi$ is a $(1\pm \sqrt{\eps/r})$ embedding on the affine space $\bf{col} (\vU)-\vbbb$.}
\end{enumerate}
\cref{subspace embedding}, followed by the union bound, gives that the above conditions hold simultaneously with probability at least $1-2\,(\eta/2+ 2^{-\Omega(\log N)})\geq 1-\eta - 2^{-\Omega(\log N)},$ provided that 
\begin{align*}
m & = O\left(\eps^{-1}\,r\, \log^{2d-1}\left( (\frac{r}{\eps})^{\frac{r}{2}}\,\frac{d}{\eta} \right) \, \log^4\left( (\frac{r}{\eps})^{\frac{1}{2}}\,\log^d((\frac{r}{\eps})^\frac{r}{2}\,\frac{d}{\eta}) \right)
\, \log N\right).
\end{align*}
With the assumption on $d/\eta \leq (r/\eps)^{r/2}$, we can get to that
\begin{align*}
m & = O\left(\eps^{-1}\,r \,\log^{2d-1}\left( (\frac{r}{\eps})^r \right) \, \log^4\left((\frac{r}{\eps})^{\frac{1}{2}}\,\log^d((\frac{r}{\eps})^r) \right)
\, \log N\right)\\
& = O\left(\eps^{-1}\,r^{2d} \,\log^{2d-1}( \frac{r}{\eps}) \, \log^4\left( (\frac{r}{\eps})^{\frac{1}{2}}\,r^d \,\log^d(\frac{r}{\eps}) \right)\, \log N\right)\\
& = O\left(\eps^{-1}\,r^{2d} \,\log^{2d-1}( \frac{r}{\eps}) \, \log^4\left(\frac{r^{d+\frac{1}{2}}}{\eps^{\frac{1}{2}}}\, \log^{d}( \frac{r}{\eps})\right)\, \log N\right)\\
& = O\left(\eps^{-1}\,r^{2d} \,\log^{2d-1}( \frac{r}{\eps}) \,(d+\frac{1}{2})^4\, \log^4\left(\frac{r}{\eps}\,\log( \frac{r}{\eps})\right)\, \log N\right)\\
& = O\left(\eps^{-1}\,r^{2d} \,\log^{2d-1}( \frac{r}{\eps}) \, \log^4\left(\frac{r}{\eps}\,\log( \frac{r}{\eps})\right)\, \log N\right).
\end{align*}

Following the proof of Theorem $2.16$ in \cite{W14}, under such conditions, the solution $\hat{\vxx}$ to the sketched least squares  \cref{sketched solution} achieves
\begin{equation}
\label{success sketch ls}
\| \vA\hat{\vxx}- \vbbb \|_2  = (1\pm O(\eps)) \min_{\vxx\in \Real^R}\| \vA\vxx - \vbbb \|_2.
\end{equation}

\end{proof}

\section{Bounding the embedding dimension in \cref{main}}
\subsection{Background review}
The proof draws on a result established in \cite{KW11} showing that matrices which can stably embed \emph{sparse} vectors  -- or have a certain \emph{restricted isometry property} (RIP) \cite{CRT06, CT06, FR13} -- result in Johnson-Lindenstrauss embeddings if their column signs are randomly permuted.  First let us recall the definition of the RIP.
\begin{definition}
\label{rip}
A matrix $\vPsi \in \mathbb{C}^{m\times N}$ is said to have the RIP of order $T$ and level $\delta\in (0,1)$ ($(T, \delta)$-RIP) if 
\begin{equation}
\|\vPsi \vxx\|_{2}^{2} = (1\pm \delta)\|\vxx\|_{2}^{2}\quad\text{for all $T$-sparse $\vxx\in \mathbb{R}^{N}$}.
\end{equation}
A vector is $T$-sparse if it has at most $T$ nonzero entries.
\end{definition}
The main result of \cref{prop:KW} says that randomizing the columns signs of a $(T, \delta)$-RIP matrix results in a random JL embedding on a fixed set of $p = O(\exp(T))$ points with multiplicative distortion $4\delta$.
\begin{theorem}[Theorem 3.1 from \cite{KW11}]
\label{prop:KW}
Fix $\eta > 0$ and $\eps\in (0,1)$ and consider a finite set $\sE \subset \mathbb{R}^N$ of cardinality $| \sE | = p$.  
Set $s \geq 20 \log(4p/\eta)$ and suppose that $\vPsi \in \mathbb{R}^{m \times N}$ satisfies the $(2s,\delta)$-RIP and $\delta \leq \eps/4$.  Let $\xi \in \mathbb{R}^N$ be a Rademacher vector. Then 
\begin{equation*}
\vP\Big(\| \vPsi \vD_{\xi} \vxx\| _{2}^{2} = (1\pm\varepsilon)\| \vxx\| _{2}^{2}, ~~\forall \vxx \in \sE\Big) \geq 1-\eta.
\end{equation*}
\end{theorem}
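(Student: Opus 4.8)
The plan is to prove the statement for one vector at a time and then union bound over the $p$ points of $\sE$. So first I would fix $\vxx \in \sE$ and, by homogeneity of \cref{eq:distorition}, assume $\|\vxx\|_2 = 1$; it then suffices to show $\vP\big(|\,\|\vPsi\vD_\xi\vxx\|_2^2 - 1| > \eps\big) \le \eta/p$, since summing this over $\sE$ yields the theorem. The choice $s \ge 20\log(4p/\eta)$ is precisely what will make each such per-vector probability small enough for the union bound to close.

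The first substantive move is the algebraic identity $\vD_\xi \vxx = \vD_\vxx \xi$, valid because both diagonal matrices act entrywise. Writing $\vbb = \vPsi \vD_\vxx$, this turns the object of interest into $\|\vPsi \vD_\xi \vxx\|_2^2 = \|\vbb \xi\|_2^2 = \xi^\top \vbb^\top \vbb\, \xi$, a second-order Rademacher chaos in the signs $\xi$. Its expectation is $\ve\|\vbb\xi\|_2^2 = \|\vbb\|_F^2 = \sum_i x_i^2 \|\psi_i\|_2^2$, where $\psi_i$ denotes the $i$-th column of $\vPsi$. Since each standard basis vector is $1$-sparse, the $(2s,\delta)$-RIP forces $\|\psi_i\|_2^2 = 1 \pm \delta$, so the mean equals $(1\pm\delta)\|\vxx\|_2^2 = 1\pm\delta$, already within $\eps/4$ of the target by the hypothesis $\delta \le \eps/4$. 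It remains to show the chaos concentrates about its mean at scale $\eps$.

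For the concentration I would apply a Hanson-Wright-type deviation inequality for Rademacher chaos to the hollow matrix $E = \vD_\vxx\big(\vPsi^\top\vPsi - \operatorname{diag}(\vPsi^\top\vPsi)\big)\vD_\vxx$, using that $\|\vbb\xi\|_2^2 - \|\vbb\|_F^2 = \xi^\top E \xi$ because $\xi_i^2 = 1$ cancels the diagonal. The tail is then governed by $\|E\|$ and $\|E\|_F$, equivalently by the operator norm $\|\vbb\| = \|\vPsi\vD_\vxx\|$ together with $\|\vbb\|_F$. The crux is to bound these quantities from the RIP: the goal is to show that the spread-out part of $\vxx$ contributes an operator norm decaying like $1/\sqrt{s}$, so that with $s \gtrsim \log(4p/\eta)$ the subgaussian exponent exceeds $\log(p/\eta)$ and the per-vector failure probability falls below $\eta/p$.

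The main obstacle is exactly this RIP-based estimate for a general, dense $\vxx$. For a concentrated $\vxx$ the chaos is nearly deterministic — on two coordinates $i,j$ the random part is $\pm\langle\psi_i,\psi_j\rangle$, which $|\langle\psi_i,\psi_j\rangle| \le \delta \le \eps/4$ bounds outright — so the naive incoherence bound $\|E\|_F \le \delta$ is of constant order and, by itself, is too weak to survive the union bound. The resolution is to split $\vxx$ into its largest $O(s)$ entries and a tail: the heavy entries are handled deterministically via the RIP, while for the tail one exploits the full order-$2s$ RIP (not merely the order-$2$ incoherence) by applying it to magnitude-ordered blocks of size $s$, giving $\|\vPsi\vD_\vxx\| \lesssim \|\vxx\|_\infty + \|\vxx\|_2/\sqrt{s}$ and thereby isolating the deterministic contribution of the heavy coordinates from the concentrating contribution of the light ones. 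Carrying this splitting through the chaos inequality and absorbing the mean offset $\delta \le \eps/4$ is the delicate bookkeeping that produces the stated threshold $s \ge 20\log(4p/\eta)$ and distortion $\eps$.
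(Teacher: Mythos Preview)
The paper does not prove this statement --- it is quoted as Theorem~3.1 of \cite{KW11} and used as input. That said, the paper does reproduce the Krahmer--Ward machinery in \cref{modified} and in the base case $d=2$ of \cref{generalized prob bound} (in an asymmetric generalization), and your outline matches that approach: the chaos identity $\|\vPsi\vD_\xi\vxx\|_2^2 - \|\vxx\|_2^2 = \xi^\top E\,\xi$, the observation that the naive bound $\|E\|_F\le\delta$ is too weak for the union bound, and the resolution via the magnitude-ordered block decomposition into size-$s$ chunks.

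One refinement your sketch leaves implicit: the actual argument does not apply Hanson--Wright to $E$ and then split; it first splits the chaos into three pieces and applies different tools to each. With $I_1,\dots,I_r$ the sorted blocks, one writes $\xi^\top E\,\xi = w + 2\,\xi(I_1^c)^\top\vv + \xi^\top\vC\,\xi$, where $w$ collects the within-block contributions (bounded deterministically by $\delta$ via RIP), $\vv$ is the linear form pairing the heaviest block $I_1$ against the rest (handled by Hoeffding, using $\|\vv\|_2\le\delta/\sqrt{s}$), and $\vC$ is the hollow matrix supported on cross-block pairs outside $I_1$ (handled by Hanson--Wright, using $\|\vC\|\le\delta/s$ and $\|\vC\|_F\le\delta/\sqrt{s}$). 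These are exactly the bounds in \cref{modified}. Your ``heavy vs.\ tail'' dichotomy captures $w$ and $\vC$ but elides the Hoeffding step for the cross-term $\vv$; this is the missing bookkeeping, not a genuine gap.
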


The randomly-subsampled discrete Fourier matrix $\sqrt{N/m} \,\vS \vF \in \mathbb{C}^{m \times N}$ is known to satisfy the restricted isometry property with nearly-optimally small embedding dimension $m$ \cite{RV08, CGV13, KMR14, B14, HR16}.  More generally, any randomly-subsampled unitary matrix $\vU$ whose entries are uniformly bounded $\| \vU \|_{\infty} \leq O(1/\sqrt{N})$ also satisfies the RIP with such $m$.  The sharpest known bound on $m$ for such constructions is from \cite{HR16}, stated below.

\begin{proposition}[Theorem 1.1 from \cite{HR16}]
\label{RIPsharp}
For sufficiently large $N$ and $T$, a unitary matrix $\vU \in\mathbb{C}^{N\times N}$ satisfying $\|\vU\|_\infty \leq O(1/\sqrt{N})$, and a sufficiently small $\delta > 0$, the following holds. For some 
\begin{equation}
\label{rip m}
m(T, \delta) = O\left(\delta^{-2} \, T \, \log^2(T/\delta) \, \log^2(1/\delta) \, \log N\right),
\end{equation}
let $\vPsi = \sqrt{N/m} \, \vS\vU\in \mathbb{C}^{m \times N}$. Then, with probability $1 - 2^{-\Omega(\log N\, \log(T/\delta))}$, the matrix $\vPsi$ satisfies $(T, \delta)$-RIP.
\end{proposition}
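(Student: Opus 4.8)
The plan is to reduce the $(T,\delta)$-RIP to a uniform deviation bound for a centered random quadratic process over the set of sparse unit vectors, and then estimate that deviation by chaining. Write $D = \{\vxx \in \mathbb{C}^N : \|\vxx\|_2 \le 1,\ \|\vxx\|_0 \le T\}$; by definition the RIP constant is exactly $\delta_T = \sup_{\vxx \in D}\big|\,\|\vPsi\vxx\|_2^2 - \|\vxx\|_2^2\,\big|$, so the goal is to show $\delta_T \le \delta$ with the stated probability. Let $u_1,\dots,u_N$ be the rows of $\vU$ and let $j_1,\dots,j_m$ be the uniform i.i.d.\ sampled indices defining $\vS$. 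Since $\vU$ is unitary, $\ve\,|\langle u_{j_i},\vxx\rangle|^2 = \|\vxx\|_2^2/N$, so $\ve\,\|\vPsi\vxx\|_2^2 = \|\vxx\|_2^2$ and $\delta_T = \sup_{\vxx\in D}\big|\tfrac{N}{m}\sum_{i=1}^m (|\langle u_{j_i},\vxx\rangle|^2 - \tfrac1N\|\vxx\|_2^2)\big|$ is the supremum of a centered process in the random sample.

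First I would control $\ve\,\delta_T$. After a standard symmetrization introducing Rademacher signs $\eps_i$, the conditional process $\vxx \mapsto \tfrac{N}{m}\sum_i \eps_i |\langle u_{j_i},\vxx\rangle|^2$ is a second-order chaos whose increments are governed by the sampled seminorm $\|\vxx\|_{\vU} := \sqrt N \max_{j}|\langle u_j,\vxx\rangle|$ together with the $\ell_2$-radius of $D$. Invoking the chaos deviation bound of Krahmer--Mendelson--Rauhut / Dirksen, one obtains
\[
\ve\,\delta_T \ \lesssim\ \frac{d_2(D)\,\gamma_2(D,\|\cdot\|_{\vU})}{\sqrt m} + \frac{\gamma_2(D,\|\cdot\|_{\vU})^2}{m},
\]
where $d_2(D)\le 1$ and $\gamma_2$ is Talagrand's functional. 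Everything thus reduces to estimating $\gamma_2(D,\|\cdot\|_{\vU})$, equivalently the Dudley entropy integral $\int_0^\infty \sqrt{\log \mathcal{N}(D,\|\cdot\|_{\vU},t)}\,dt$.

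This entropy estimate is the crux, and the step where the sharp logarithmic powers are won or lost. I would split the integral into two scale regimes. At large scales I would use Maurey's empirical method, which exploits $\|\vU\|_\infty \le O(1/\sqrt N)$ to cover $D$ by sparse convex combinations of signed coordinate atoms, giving covering logarithms $\lesssim T\log N$; at small scales I would use the volumetric bound on the union of $\binom{N}{T}$ coordinate subspaces of dimension $T$, contributing $T\log(N/T)$. Carefully balancing the two regimes — precisely the Haviv--Regev refinement — yields $\gamma_2(D,\|\cdot\|_{\vU}) \lesssim \sqrt{T}\,\log(T/\delta)\,\sqrt{\log N}$ (with the $\log(1/\delta)$ factors from the bookkeeping of scales). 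Substituting back shows $\ve\,\delta_T \le \delta/2$ once $m \gtrsim \delta^{-2}\,T\,\log^2(T/\delta)\,\log^2(1/\delta)\,\log N$, matching \cref{rip m}. I expect this balancing to be the main obstacle: a crude single-regime covering argument only recovers a weaker Rudelson--Vershynin-type bound with extra logarithmic factors, and a naive union bound over supports combined with matrix Chernoff produces a spurious $T^2\log N$, so removing these losses is the real content.

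Finally, to upgrade the bound on $\ve\,\delta_T$ to a high-probability statement I would apply Talagrand's concentration inequality for suprema of chaos processes, using that each rank-one summand of $\vPsi^*\vPsi$ restricted to a $T$-support has operator norm $\lesssim T/m$ by the $\ell_\infty$ bound on $\vU$. Tracking the number of chaining levels entering the deviation tail then produces the stated failure probability $1 - 2^{-\Omega(\log N\,\log(T/\delta))}$, completing the proof.
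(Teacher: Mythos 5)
First, a structural note: the paper contains no proof of \cref{RIPsharp} to compare against. It is imported verbatim as Theorem 1.1 of \cite{HR16}, and the authors use it as a black box (only its conclusion feeds into the proof of \cref{main}). So the only meaningful benchmark for your sketch is the actual Haviv--Regev argument, and there your outline has a genuine gap at exactly the step you yourself identify as the crux.

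The estimate $\gamma_2(D,\|\cdot\|_{\vU}) \lesssim \sqrt{T}\,\log(T/\delta)\,\sqrt{\log N}$, which you propose to obtain by balancing Maurey's empirical method at coarse scales against volumetric coverings at fine scales, is asserted rather than proved, and the assertion that this balancing ``is precisely the Haviv--Regev refinement'' mischaracterizes \cite{HR16}. That two-regime balancing inside a Dudley/$\gamma_2$ integral is exactly the Rudelson--Vershynin computation \cite{RV08}, and it is known to produce strictly more logarithmic factors (an entropy integral of order $\sqrt{T\log N}\,\log T$ times additional iterated-log losses coming from the fact that the sampled seminorm is itself random, forcing the Maurey argument to be run against the empirical measure); feeding the best known entropy estimates into the chaos/empirical-process tail machinery of \cite{KMR14} or Dirksen recovers only RV-type bounds, not \eqref{rip m}. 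The passage from $\log^3$-type factors to $\log^2(T/\delta)$ in \cite{B14,HR16} is achieved by abandoning the entropy-integral route altogether: Bourgain's argument, as refined by Haviv and Regev, decomposes a sparse vector into dyadic level sets, controls cross-level interactions directly with Bernstein-type estimates, and performs a global union bound over an explicitly counted multi-scale net; that count is also the source of the specific failure probability $2^{-\Omega(\log N\,\log(T/\delta))}$, which your sketch instead attributes to Talagrand concentration without computation. (A smaller inaccuracy: after symmetrization the process $\vxx \mapsto \sum_i \eps_i\,|\langle u_{j_i},\vxx\rangle|^2$ is linear in the signs $\eps_i$, not an order-two chaos in them, so the Krahmer--Mendelson--Rauhut theorem does not apply literally; the relevant framework is suprema of empirical processes.) As written, your plan would stall at the $\gamma_2$ estimate, and whether any chaining proof of \eqref{rip m} exists is, to our knowledge, open.
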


Now we relate these known results to the KFJLT \cref{eq:kfjlt_example}.  Since each of the component DFT matrices $\vF_{k}$ is unitary and the Kronecker product preserves orthogonality, it follows $\bigotimes_{k=d}^{1}\vF_{k}$ is a unitary matrix of size $N,$ denoted by $\vU$.  Moreover, since each $\| \vF_{k} \|_{\infty} = 1/\sqrt{n_k}$, it follows that $\| \vU \|_{\infty} = 1/\sqrt{N}$ recalling $N$ is the product of each mode size $n_k$.  Finally, by the distributive property of the Kronecker product, the KFJLT in \cref{eq:kfjlt_example} is equivalent to:
\begin{equation}
\label{transform}
\sqrt{\frac{N}{m_{\text{kron}}}}\,\vS\left(\bigotimes_{k=d}^{1}\vF_{k}\right)\left(\bigotimes_{k=d}^{1}\vD_{\xi_{k}}\right)
\displaystyle= \sqrt{\frac{N}{m_{\text{kron}}}}\,\vS\vU\left(\bigotimes_{k=d}^{1}\vD_{\xi_{k}}\right)
= \sqrt{\frac{N}{m_{\text{kron}}}}\,\vS\vU\vD_\zeta,
\end{equation}
where the random vector $\zeta= \bigotimes_{k=d}^{1}\xi_{k}\in\Real^N$ is the Kronecker product of the Rademacher vectors $\xi_{k}$ for $k \in [d]$.

Thus, the KFJLT is constructed from an optimal RIP matrix (as in \cref{RIPsharp}), and with column signs randomized according to the structured random vector $\zeta= \bigotimes_{k=d}^{1}\xi_{k}$.     
In the special case $d=1$, the KFJLT reduces to the standard FJLT, and the embedding result of \cref{main} is obtained from \cref{prop:KW} directly.   In the case $d\geq 2$, it remains to analyze the effect of applying the Kronecker Rademacher vector $\zeta$, as opposed to an i.i.d. Rademacher vector, to the RIP matrix.

\subsection{Concentration inequality}
We here introduce a more general version of \cref{prop:KW}, which works for any degree-$d$ construction consisting of a RIP matrix with randomized column sign from a Kronecker product of $d$ independent Rademacher vectors.

\begin{theorem}
\label{analog}
Fix $d \geq 2$, and $\delta \in (0, 1/2s^{d-1})$, $\eta \in (0,1)$. Let $n_1, n_2, \dots, n_d$ be positive integers and $N = \prod_{k=1}^d n_k$. Consider a finite set $\sE\subset \mathbb{R}^N$ of cardinality $|\sE|=p \geq N$. Let $\xi_{1} \in \Real^{n_1}, \dots, \xi_{d} \in \Real^{n_d}$ be independent Rademacher vectors and $\zeta = \bigotimes_{k=d}^{1}\xi_{k} \in \Real^N$. Set $\eps = 2\delta\, s^{d-1}\in (0,1)$ and the condition on $s$:
\begin{equation}
\label{condition}
\displaystyle \max_{k\in [d]} n_k \geq s  \geq 128\log\left((d+2)\, N^{2-\frac{2}{d}}\, \frac{2p}{\eta}\right) \geq 66.
\end{equation}
If $\vPsi\in\mathbb{R}^{m\times N}$ is a $(2s, \delta)$-RIP matrix, then
\begin{equation}
\label{success}
\vP\Big(\| \vPsi \vD_{\zeta} \vxx\| _{2}^{2} = (1\pm\varepsilon)\| \vxx\| _{2}^{2}, ~~\forall \vxx \in \sE\Big) \geq 1-\eta.
\end{equation}
\end{theorem}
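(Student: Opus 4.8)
The plan is to reduce the uniform statement \cref{success} to a single-vector tail bound and then close that bound by induction on the degree $d$. First, by homogeneity of \cref{success} in $\|\vxx\|_2$, I may assume every $\vxx \in \sE$ is a unit vector. It then suffices to prove a per-vector estimate --- essentially the content of \cref{prob bound} --- of the form
\[
\vP\left(\left|\,\|\vPsi\vD_{\zeta}\vxx\|_2^2 - 1\,\right| > \eps\right) \;\leq\; 2(d+2)\,N^{2-2/d}\exp(-s/128)
\]
for each fixed unit $\vxx$, and then take a union bound over the $p$ points of $\sE$. The condition \cref{condition} on $s$ is calibrated precisely so that $p$ times this per-vector failure probability is at most $\eta$, which yields \cref{success}.

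To prove the per-vector bound I would exploit $\vD_{\zeta}\vxx = \vD_{\vxx}\zeta$ together with $\zeta = \bigotimes_{k=d}^{1}\xi_k$ to view $\|\vPsi\vD_{\zeta}\vxx\|_2^2 = \zeta^\top \vD_{\vxx}\vPsi^\top\vPsi\vD_{\vxx}\zeta$ as a Rademacher chaos that is quadratic in each independent block $\xi_k$. The key maneuver is to peel off the innermost block $\xi_1$: writing each coordinate as $(i',i_1)$ and setting $\zeta' = \bigotimes_{k=d}^{2}\xi_k$, conditioning on $\xi_2,\dots,\xi_d$ turns the quantity into a genuine quadratic form $\xi_1^\top\tilde M\xi_1$, whose matrix $\tilde M$ is assembled from the Gram matrix $\vPsi^\top\vPsi$ and depends on $\zeta'$. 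Since $\xi_1(i_1)^2 = 1$, the diagonal of $\tilde M$ is deterministic in $\xi_1$ and contributes the conditional mean
\[
\operatorname{tr}\tilde M \;=\; \sum_{i_1}\bigl\|\vPsi^{(i_1)}\vD_{\vxx^{(i_1)}}\zeta'\bigr\|_2^2,
\]
where $\vxx^{(i_1)}$ is the sub-vector of $\vxx$ on the block, $\vPsi^{(i_1)}$ is the corresponding column-submatrix of $\vPsi$, and --- crucially --- a column-submatrix of a $(2s,\delta)$-RIP matrix is again $(2s,\delta)$-RIP. Thus the conditional mean is itself a \emph{sum of degree-$(d-1)$ Kronecker–Rademacher quadratic forms of exactly the same type}, which is what lets an induction close.

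This observation dictates the form of the generalized statement \cref{generalized prob bound}: a tail bound for finite \emph{families} of such RIP-submatrix Kronecker quadratic forms, proved by induction on $d$ with base case $d=1$ supplied by the Krahmer–Ward quadratic-form estimate underlying \cref{prop:KW}. Each inductive step has two ingredients: a single-block (Hanson–Wright / Rademacher-chaos) concentration applied to $\xi_1^\top\tilde M\xi_1$, where RIP is used to control $\|\tilde M\|$ and $\|\tilde M\|_F$; and the inductive hypothesis applied to the conditional mean $\operatorname{tr}\tilde M$. Tracking both effects across the $d-1$ peeling steps produces the distortion $\eps = 2\delta\,s^{d-1}$ (each step amplifying the admissible distortion by a factor $\sim s$, the RIP order entering the conditional norm bounds) and the dimensional prefactor $(d+2)N^{2-2/d}$ appearing in \cref{condition}.

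I expect the main obstacle to be setting up \cref{generalized prob bound} with an inductive invariant strong enough to reproduce itself after each peeling while still yielding sharp control of the conditional matrices. The delicate points are (a) verifying the RIP-based operator- and Frobenius-norm estimates for $\tilde M$ when $\vxx$ is dense and its mass is spread arbitrarily across the blocks --- this is where the factor $s$ per level, and ultimately the $N^{2-2/d}$ dimensional loss (the price of the reduced randomness in the Kronecker sign vector $\zeta$), are incurred --- and (b) bookkeeping the accumulation of the $d-1$ failure terms and distortion factors so that the final constants match \cref{condition}. The base case is routine; the quantitative heart is making the per-level $s$-amplification both correct and no worse than linear.
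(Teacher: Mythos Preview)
Your reduction to a per-vector tail bound plus union bound is exactly what the paper does, and the idea of peeling off one Rademacher block and writing the distortion as a quadratic form in that block is the right structural move. However, there is a genuine gap in your inductive scheme, located in the off-diagonal part of $\tilde M$.

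You propose that each inductive step has two ingredients: Hanson--Wright on $\xi_1^\top\tilde M\xi_1$ with RIP supplying deterministic bounds on $\|\tilde M\|$ and $\|\tilde M\|_F$, and the inductive hypothesis applied only to $\operatorname{tr}\tilde M$. But after conditioning on $\xi_2,\dots,\xi_d$, the off-diagonal entry
\[
\tilde M(i_1,j_1)\;=\;\bigl\langle \vPsi^{(i_1)}\vD_{\vxx^{(i_1)}}\zeta',\;\vPsi^{(j_1)}\vD_{\vxx^{(j_1)}}\zeta'\bigr\rangle
\]
is itself a degree-$(d-1)$ Kronecker--Rademacher chaos in $\zeta'$. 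The vectors $\vD_{\vxx^{(i_1)}}\zeta'$ are dense, so RIP gives you no useful \emph{deterministic} bound on these cross terms (any uniform bound over $\zeta'\in\{\pm1\}^{N/n_1}$ picks up a dimensional factor far too large). Consequently $\|\tilde M\|$ and $\|\tilde M\|_F$ cannot be controlled by RIP alone at the scale you need; they must themselves be handled probabilistically, i.e.\ by the inductive hypothesis.

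This forces the inductive invariant to be stronger than a distortion bound: it must also control the \emph{asymmetric} cross terms $\langle \vPhi_i\vxx_i,\vPhi_j\vxx_j\rangle$ for $i\neq j$. The paper does precisely this: its inductive statement (\cref{generalized prob bound}) is an entrywise tail bound on all $m_d(i,j)$, $i,j\in[n_d]$, and the induction step passes from entrywise control to a quadratic-form bound via a Hanson--Wright corollary for random matrices with entrywise tails (\cref{induction}). Carrying the off-diagonal case through the base level in turn requires an asymmetric extension of the Krahmer--Ward RIP norm estimates (the paper's \cref{modified}, generalizing the symmetric Proposition~5.4 of \cite{KW11}), which your sketch does not anticipate. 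Once you enlarge the hypothesis to include these cross terms, your peeling argument and the paper's coincide; as written, the step ``RIP is used to control $\|\tilde M\|$ and $\|\tilde M\|_F$'' does not go through.
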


\begin{remark}
\label{complexification}
\cref{analog} is stated for real-valued embeddings, though the KFJLTs are in the complex field. The result extends to complex matrices straightforwardly via a standard complexification strategy described below. Suppose a partial Fourier matrix $\vPsi = \vPsi_1 + \vi\,\vPsi_2 \in \mathbb{C}^{m \times N}$ with $\vPsi_1, \vPsi_2 \in \Real^{m \times N}$, we map the embedding to a $2m$ tall matrix $\tilde{\vPsi}$ with $\vPsi_1$ on top and $\vPsi_2$ bottom. The new real-valued matrix satisfies the same RIP if $\vPsi$ has this property by equivalence of their operator norms if $\vxx, \zeta\in\Real^N$ are real-valued.
\begin{equation*}
\|\tilde{\vPsi}\vD_\zeta\vxx\|_2^2 = \|\left[
\begin{array}{l}
\vPsi_1\\
\vPsi_2
\end{array}
\right]\vD_\zeta\vxx\|_2^2 = \|\vPsi_1\vD_\zeta\vxx\|_2^2+\|\vPsi_2\vD_\zeta\vxx\|_2^2=\|\vPsi\vD_\zeta\vxx\|_2^2.
\end{equation*}
Rescaling the result for real-valued embeddings by a factor $1/2$, we obtain the bound of $m_{\text{kron}}$.
\end{remark}

To prove \cref{analog}, we use the following probability bound result.
\begin{proposition}
\label{prob bound}
Fix $d \geq 2$, and an integer $s$ satisfying $66 \leq s \leq \max_{k\in [d]} n_k$, $\delta \in (0, 1/2s^{d-1}).$ Let $n_1, n_2, \dots, n_d$ be positive integers and $N = \prod_{k=1}^d n_k$. Consider a fixed vector $\vxx \in \Real^N$, and suppose $\xi_{1} \in \Real^{n_1}, \dots, \xi_{d} \in \Real^{n_d}$ are independent Rademacher vectors and $\zeta = \bigotimes_{k=d}^{1}\xi_{k} \in \Real^N$. If $\vPsi\in\mathbb{R}^{m\times N}$ is a $(2s, \delta)$-RIP matrix, then
\begin{equation}
\label{simplify}
\vP\Big( \Big|\|\vPsi\vD_\zeta \vxx\| _{2}^{2} - \| \vxx\| _{2}^{2}\Big| > 2\delta\, s^{d-1}  \, \| \vxx\| _{2}^{2} \Big)
\leq 2(d+2)\, N^{2-\frac{2}{d}}\,\exp(-\frac{1}{128}\, s).
\end{equation}
\end{proposition}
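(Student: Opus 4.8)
The plan is to view the quantity as a quadratic form in the sign vector $\zeta$ and reduce everything to a tail bound on a Kronecker-structured Rademacher chaos. Since $\vPsi$, $\zeta$, and $\vxx$ are real and $\vD_{\zeta}\vxx = \vD_{\vxx}\zeta$, set $\mathbf{B} = \vD_{\vxx}\,\vPsi^{T}\vPsi\,\vD_{\vxx}$, a symmetric positive semidefinite matrix, so that
\[
\|\vPsi\vD_{\zeta}\vxx\|_{2}^{2} = \zeta^{T}\mathbf{B}\,\zeta = \mathrm{tr}(\mathbf{B}) + \sum_{i\neq j} B(i,j)\,\zeta(i)\,\zeta(j),
\]
where the diagonal collapses because $\zeta(i)^{2}=1$. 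The diagonal term is deterministic: $\mathrm{tr}(\mathbf{B}) = \sum_{i} x(i)^{2}\,\|\vPsi e_{i}\|_{2}^{2}$, and applying the $(2s,\delta)$-RIP to the $1$-sparse vectors $e_{i}$ gives $\|\vPsi e_{i}\|_{2}^{2} = 1\pm\delta$, whence $|\mathrm{tr}(\mathbf{B}) - \|\vxx\|_{2}^{2}| \le \delta\|\vxx\|_{2}^{2}$, which is already well inside the target window $2\delta s^{d-1}\|\vxx\|_{2}^{2}$. It therefore suffices to bound the off-diagonal chaos $R := \sum_{i\neq j} B(i,j)\,\zeta(i)\,\zeta(j)$.

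The feature distinguishing this from the i.i.d. setting of \cref{prop:KW} is that $\zeta$ is a Kronecker product, so for multi-indices $i=(i_{1},\dots,i_{d})$ and $j=(j_{1},\dots,j_{d})$ one has $\zeta(i)\zeta(j) = \prod_{k\in S(i,j)}\xi_{k}(i_{k})\xi_{k}(j_{k})$, where the difference set $S(i,j) = \{k : i_{k}\neq j_{k}\}$ records the modes that carry randomness. I would peel one mode at a time. Splitting $R = R_{\not\ni d} + R_{\ni d}$ according to whether $d\in S(i,j)$: the part $R_{\not\ni d}$ (indices agreeing in mode $d$) decomposes as a sum over the $n_{d}$ mode-$d$ blocks of degree-$(d-1)$ off-diagonal chaos, each governed by the column-submatrix of $\vPsi$ on that block, which inherits the $(2s,\delta)$-RIP; these are handled by the inductive hypothesis together with a union bound over the $n_{d}$ blocks. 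The part $R_{\ni d}$, after conditioning on $\xi_{1},\dots,\xi_{d-1}$, is a genuine mean-zero order-two Rademacher chaos $\sum_{a\neq b} G(a,b)\,\xi_{d}(a)\,\xi_{d}(b)$ in the independent vector $\xi_{d}$, with conditional Gram coefficients $G(a,b) = \langle \vPsi\,\vu_{a},\ \vPsi\,\vu_{b}\rangle$, where $\vu_{a}\in\Real^{N}$ is the $a$-th mode-$d$ block of $\vD_{\zeta'}\vxx$ and $\zeta' = \bigotimes_{k=d-1}^{1}\xi_{k}$. A Hanson--Wright type tail for Rademacher quadratic forms then controls $R_{\ni d}$ at a scale set by $\|\mathbf{G}\|$ and $\|\mathbf{G}\|_{F}$, yielding a factor $\exp(-\Omega(s))$. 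Iterating this peeling over all $d$ modes is exactly the induction carried out in \cref{generalized prob bound}.

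Tracking the constants along the recursion accounts for the shape of the bound: the RIP level $\delta$ is inflated by a factor $s$ at each of the $d-1$ off-diagonal peels, producing the scale $2\delta s^{d-1}$, while the nested union bounds over block indices, combined with the per-layer chaos tails and the balance condition $\max_{k} n_{k}\geq s$, produce the prefactor $2(d+2)\,N^{2-2/d}\,\exp(-\tfrac{1}{128}s)$. The step I expect to be the main obstacle is the RIP control of the conditional off-diagonal coefficients $G(a,b)=\langle\vPsi\vu_{a},\vPsi\vu_{b}\rangle$ for $a\neq b$: although $\vu_{a}$ and $\vu_{b}$ have disjoint supports, those supports can have size up to $N/n_{d}\gg s$, so the clean disjoint-support estimate $|\langle\vPsi\vu_{a},\vPsi\vu_{b}\rangle|\le\delta\|\vu_{a}\|_{2}\|\vu_{b}\|_{2}$ — which only holds for supports of total size at most $2s$ — is not directly available. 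Handling this requires regrouping coordinates into $s$-sparse blocks and carefully tracking how the RIP sparsity level $2s$ interacts with the growing chaos order across layers, which is precisely why I would prove a strengthened statement, \cref{generalized prob bound}, carrying the extra bookkeeping parameters needed for the induction to close, and then specialize it to obtain \cref{prob bound}.
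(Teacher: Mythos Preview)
Your proposal is correct and matches the paper's approach: both prove the result by induction on $d$ via the strengthened statement \cref{generalized prob bound}, peeling one mode at a time and using Hanson--Wright at each layer, with the $s$-sparse block regrouping (your ``main obstacle'') carried out through \cref{modified}.

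One clarification worth making explicit, since your wording (``regrouping coordinates into $s$-sparse blocks and carefully tracking how the RIP sparsity level $2s$ interacts with the growing chaos order across layers'') could be read as doing the block regrouping at every inductive layer. In the paper's argument, the $s$-block decomposition of \cref{modified} is invoked \emph{only in the base case} $d=2$. For the inductive step $d\to d+1$, the off-diagonal coefficients $G(a,b)=\langle\vPhi_a\vxx_a,\vPhi_b\vxx_b\rangle$ are not re-analyzed from scratch via RIP; rather, they are themselves instances of the degree-$d$ quantity $m_d(a,b)$ in \eqref{m_d}, and the inductive hypothesis of \cref{generalized prob bound} already furnishes an entrywise tail bound on them. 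This is why \cref{generalized prob bound} is formulated to bound \emph{every} entry $m_d(i,j)$, diagonal and off-diagonal alike, and why the induction step goes through the purpose-built \cref{induction} (Hanson--Wright for a random matrix with probabilistic entrywise control), rather than through conditioning plus deterministic norm bounds on $\mathbf G$. Your split $R=R_{\not\ni d}+R_{\ni d}$ with ``induction for $R_{\not\ni d}$, Hanson--Wright for $R_{\ni d}$'' is equivalent once you recognize that the bound on $R_{\ni d}$ also consumes the inductive hypothesis, but as written it slightly understates how much of the work the strengthened hypothesis is doing.
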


The proof of \cref{prob bound} is shown in \cref{proof of prob bound}. We now prove \cref{analog}.
\begin{proof}[Proof of \cref{analog}]
For fixed $\eps = 2\delta\, s^{d-1} \in (0,1), \eta\in (0,1)$, we focus on the event $E$:  $\vPsi \vD_\zeta$ fails to embed at least one of the $p$ vectors in $\sE$ within $\eps$ distortion ratio, which is the complement event of \cref{success}. We apply \cref{prob bound} with $p$ times the bound in \cref{simplify} to get that
\begin{equation*}
\vP(E)\leq 2p\,(d+2)\, N^{2-\frac{2}{d}}\,\exp(-\frac{1}{128}\, s).
\end{equation*} 

Given the conditions \cref{condition} in \cref{analog}, we have
\begin{equation*}
\vP(E) \leq \displaystyle2p\, (d+2)\, N^{2-\frac{2}{d}}\, \exp(-\log((d+2)\, N^{2-\frac{2}{d}}\, \frac{2p}{\eta})) = \eta,
\end{equation*}
hence 
\begin{equation*}
\vP\Big(\| \vPsi \vD_{\zeta} \vxx\| _{2}^{2} = (1\pm\varepsilon)\| \vxx\| _{2}^{2}, ~~\forall \vxx \in \sE\Big) = 1-\vP(E) \geq 1-\eta.
\end{equation*}
\end{proof}

\subsection{Proof of \cref{main}}
\begin{proof}
The case $d=1$ corresponds to standard FJLT and is already known \cite{KW11}; here we focus on $d\geq 2$. 

We now derive \cref{main} from \cref{analog,RIPsharp}. Recall the degree-$d$ KFJLT $\vPhi = \sqrt{N/m_{\text{kron}}} \,\vS\vU_N\vD_\zeta \in \mathbb{C}^{m_\text{kron} \times N}$ shown in \cref{transform}. Following from \cref{RIPsharp}, let $\vPsi = \sqrt{N/m_{\text{kron}}}\,\vS\vU_N$, we use the complexification technique in \cref{complexification} and analyze the construction $\tilde{\vPsi}\vD_\zeta$ instead. 

For fixed $\eps, \eta \in (0,1)$, consider a vector set $\sE \subset \Real^N$ of cardinality $p$, suppose $s$ satisfies the condition \cref{condition}  in \cref{analog} and $\tilde{\vPsi}\in\Real^{2m_\text{kron} \times N}$ has the $(2s, \delta)$-RIP property.  

Without loss of generality, suppose $p \geq N$, because we can always embed a set of vectors of cardinality less than $N$ from $\Real^N$ in $\Real^p$.  Hence from \cref{condition}
\begin{equation}
\label{rough cond1}
s = \displaystyle O\Big(\log(\frac{(d+2)\, p^{3-\frac{2}{d}}}{\eta})\Big) \leq  O\Big(\log\big((\frac{dp}{\eta})^{3-\frac{2}{d}}\big)\Big) = O\Big(\log(\frac{dp}{\eta})\Big),
\end{equation}
then from $\eps = 2\delta \, s^{d-1}$, we have
\begin{equation}
\label{rough cond2}
\delta = \displaystyle \frac{\eps}{s^{d-1}}
= o\Big(\frac{\eps}{\log^{d-1}(\frac{dp}{\eta})}\Big).
\end{equation}

We now apply \cref{RIPsharp}: for $N, s$ sufficiently large and $\delta$ sufficiently small, we obtain an upper bound on $m_{\text{kron}}$ from \cref{rip m}:
\begin{align}
\nonumber
&m_{\text{kron}} = \frac{m(2s,\delta)}{2}  = O\left(\delta^{-2}\, s \, \log^2(\frac{1}{\delta})\, \log^2(\frac{s}{\delta})\,\log N\right)\\
\nonumber
&\leq  O\left(\eps^{-2}\,\log^{2d-2}(\frac{dp}{\eta})\,\log(\frac{dp}{\eta})\,\log^2(\frac{\log^{d-1}(\frac{dp}{\eta})}{\eps})\, \log^2(\frac{\log^d(\frac{p}{\eta})}{\eps})\, \log N\right)\\
\label{m bound}
&\leq  O\left(\eps^{-2}\,\log^{2d-1}(\frac{dp}{\eta})\, \log^4(\frac{\log^d(\frac{dp}{\eta})}{\eps})\, \log N\right).
\end{align}

If we choose $m$ to be at least the order of \cref{m bound}, then 
$\vP\Big(\|\vPhi\vxx\|_2^2 = (1\pm \eps)\|\vxx\|_2^2\Big) \geq 1-\eta.$ Noting finally that the small probability of failure $2^{-\Omega(\log N\, \log(T/\delta))}$ in \cref{RIPsharp} is bounded above by $2^{-\Omega(\log{N})}$, the proof of the main result \cref{main} is complete.
\end{proof}

\section{Bounding the probability in \cref{prob bound}}
\label{proof of prob bound}

\subsection{Proof ingredients}
We recall basic corollaries of the restricted isometry property, whose proofs can be found in \cite{R10}.

\begin{enumerate}
\item
\begin{lemma}\cite{R10}
\label{RIP-norm 1}
Let $\vxx\in\mathbb{R}^{n}$ be a vector and suppose that $\vPsi\in\mathbb{R}^{m\times n}$ has the $(2s,\delta)$-RIP. Then, for an index  subset $I \subset [n]$ of size $|I|\leq 2s$,
\begin{equation}
\|\vPsi(:, I)\vxx(I)\|_{2}^{2} = (1\pm \delta)\|\vxx(I)\|_2^2.
\end{equation}
\end{lemma}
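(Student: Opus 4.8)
The plan is to deduce the statement directly from the definition of the $(2s,\delta)$-RIP by embedding the restricted problem into the ambient space $\Real^n$ via zero-padding. The key observation is that the restricted matrix-vector product $\vPsi(:,I)\vxx(I)$ coincides with $\vPsi$ applied to a suitable sparse vector, so that the hypothesis can be invoked verbatim through \cref{rip}.

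First I would introduce the zero-padded vector $\vy \in \Real^n$ defined by $\vy(i) = \vxx(i)$ for $i \in I$ and $\vy(i) = 0$ for $i \notin I$. Since $|I| \leq 2s$, the vector $\vy$ has at most $2s$ nonzero entries and is therefore $2s$-sparse, so the assumption that $\vPsi$ satisfies the $(2s,\delta)$-RIP applies to $\vy$ and yields $\|\vPsi\vy\|_2^2 = (1\pm\delta)\|\vy\|_2^2$.

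It then remains to identify both sides of this inequality with the quantities in the statement. Because the entries of $\vy$ outside $I$ vanish, only the columns of $\vPsi$ indexed by $I$ contribute to the product, so $\vPsi\vy = \vPsi(:,I)\vxx(I)$; likewise the nonzero entries of $\vy$ are precisely the entries of $\vxx$ restricted to $I$, so $\|\vy\|_2^2 = \|\vxx(I)\|_2^2$. Substituting these two identities into the RIP bound gives $\|\vPsi(:,I)\vxx(I)\|_2^2 = (1\pm\delta)\|\vxx(I)\|_2^2$, as claimed. There is no real obstacle here: the lemma is essentially a bookkeeping reformulation of \cref{rip}, and the only point requiring (routine) care is the index matching that shows the submatrix-subvector product equals the zero-padded full product.
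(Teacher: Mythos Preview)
Your proposal is correct; this is exactly the standard zero-padding argument. The paper does not give its own proof of this lemma but cites it from \cite{R10}, and your argument is the canonical one found there.
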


\item
\begin{lemma}\cite{R10}
\label{RIP-norm 2}
Let $\vxx\in\mathbb{R}^{n}$ be a vector and suppose that $\vPsi\in\mathbb{R}^{m\times n}$ has the $(2s,\delta)$-RIP. Then, for any pair of disjoint index subsets $I, J\subset [n]$ of size $|I|, |J|\leq s$,
\begin{equation}
\langle \vPsi(:, I)\vxx(I), \vPsi(:, J)\vxx(J)\rangle =\pm \delta\|\vxx(I)\|_{2}\,\|\vxx(J)\|_{2}.
\end{equation}
\end{lemma}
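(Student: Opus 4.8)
The plan is to prove this near-orthogonality bound by polarization combined directly with the $(2s,\delta)$-RIP hypothesis. First I would recast the two quantities in terms of full-length sparse vectors: let $\vu \in \Real^n$ be the vector agreeing with $\vxx$ on $I$ and vanishing off $I$, and let $\vv \in \Real^n$ be defined analogously from $J$, so that $\vPsi(:,I)\vxx(I) = \vPsi\vu$ and $\vPsi(:,J)\vxx(J) = \vPsi\vv$, with $\vu$ supported on $I$ and $\vv$ on $J$. Since the claimed inequality is homogeneous of degree one in $\vxx(I)$ and in $\vxx(J)$ separately, I may normalize so that $\|\vu\|_2 = \|\vxx(I)\|_2 = 1$ and $\|\vv\|_2 = \|\vxx(J)\|_2 = 1$ (the degenerate cases where either restriction vanishes make both sides zero), reducing the goal to showing $|\langle \vPsi\vu, \vPsi\vv\rangle| \le \delta$.

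Next I would apply the polarization identity
\begin{equation*}
\langle \vPsi\vu, \vPsi\vv\rangle = \tfrac14\Big(\|\vPsi(\vu+\vv)\|_2^2 - \|\vPsi(\vu-\vv)\|_2^2\Big).
\end{equation*}
The key structural observation is that, because $I$ and $J$ are disjoint with $|I|,|J|\le s$, both $\vu+\vv$ and $\vu-\vv$ are supported on $I\cup J$, which has size at most $2s$; hence each is $2s$-sparse and the $(2s,\delta)$-RIP of $\vPsi$ (Definition \ref{rip}) applies to each. Moreover, disjointness of the supports gives the Pythagorean identity $\|\vu\pm\vv\|_2^2 = \|\vu\|_2^2 + \|\vv\|_2^2 = 2$, so RIP yields $\|\vPsi(\vu+\vv)\|_2^2 = (1\pm\delta)\cdot 2$ and $\|\vPsi(\vu-\vv)\|_2^2 = (1\pm\delta)\cdot 2$.

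I would then substitute these into the polarization identity and take the worst-case sign assignment: the inner product is at most $\tfrac14\big(2(1+\delta) - 2(1-\delta)\big) = \delta$ and at least $\tfrac14\big(2(1-\delta) - 2(1+\delta)\big) = -\delta$, giving $|\langle \vPsi\vu,\vPsi\vv\rangle|\le\delta$. Undoing the normalization restores the stated bound $\pm\delta\,\|\vxx(I)\|_2\,\|\vxx(J)\|_2$. I do not expect a serious obstacle here: the argument is a one-line consequence of RIP once set up, and the only points requiring care are the sparsity bookkeeping $|I\cup J|\le 2s$ (which is exactly why RIP of order $2s$ rather than $s$ is hypothesized) and the use of disjointness to obtain $\|\vu\pm\vv\|_2^2 = 2$, so that the cross-norm factor $\|\vxx(I)\|_2\|\vxx(J)\|_2$ in the final bound appears with the correct normalization rather than an extraneous constant.
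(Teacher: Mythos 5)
Your proposal is correct and complete: the polarization identity applied to the $2s$-sparse vectors $\vu\pm\vv$ (with $\|\vu\pm\vv\|_2^2=2$ by disjointness of supports) together with the $(2s,\delta)$-RIP gives exactly $|\langle\vPsi\vu,\vPsi\vv\rangle|\leq\delta$, and the homogeneity/degenerate-case bookkeeping is handled properly. The paper itself does not prove this lemma but defers to the cited reference, and your argument is precisely the standard proof given there, so there is no substantive difference in approach.
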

\end{enumerate}

Then let us recall standard concentration inequalities in both linear and quadratic forms, particularly for Rademacher vectors:
\begin{lemma}[Hoeffding's inequality]
\label{Hoeffding}
Let $\vxx\in\mathbb{R}^{n}$ be a vector and $\xi\in\mathbb{R}^{n}$ be a Rademacher vector. Then, for any $t>0$,
\begin{equation}
\vP(|\xi^{\top}\vxx|>t)\leq 2\exp{(-\frac{t^{2}}{2\|\vxx\|_{2}^{2}})}.
\end{equation}
\end{lemma}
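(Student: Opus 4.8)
The plan is to apply the standard exponential-moment (Chernoff) method to the Rademacher sum $\xi^{\top}\vxx = \sum_{i=1}^{n}\xi(i)\,x(i)$, exploiting both the symmetry of the Rademacher distribution and the independence of the coordinates of $\xi$. Since $-\xi$ has the same law as $\xi$, the quantity $\xi^{\top}\vxx$ is symmetric about $0$, so that $\vP(|\xi^{\top}\vxx| > t) = 2\,\vP(\xi^{\top}\vxx > t)$, and it suffices to bound the upper tail.

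For any $\lambda > 0$, Markov's inequality applied to the monotone map $u \mapsto e^{\lambda u}$ gives $\vP(\xi^{\top}\vxx > t) \leq e^{-\lambda t}\,\ve\big[e^{\lambda\,\xi^{\top}\vxx}\big]$. By independence the moment generating function factorizes as $\ve\big[e^{\lambda\,\xi^{\top}\vxx}\big] = \prod_{i=1}^{n}\ve\big[e^{\lambda\,\xi(i)\,x(i)}\big] = \prod_{i=1}^{n}\cosh(\lambda\,x(i))$. The one analytic ingredient I would invoke is the elementary sub-Gaussian bound $\cosh(u) \leq e^{u^{2}/2}$, valid for every $u \in \Real$, which follows from a term-by-term comparison of the Taylor expansions using $(2k)! \geq 2^{k}\,k!$. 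Applying it coordinatewise yields $\ve\big[e^{\lambda\,\xi^{\top}\vxx}\big] \leq \exp\big(\tfrac{1}{2}\lambda^{2}\|\vxx\|_{2}^{2}\big)$, and therefore $\vP(\xi^{\top}\vxx > t) \leq \exp\big(-\lambda t + \tfrac{1}{2}\lambda^{2}\|\vxx\|_{2}^{2}\big)$.

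Finally I would optimize over the free parameter: minimizing the exponent $-\lambda t + \tfrac{1}{2}\lambda^{2}\|\vxx\|_{2}^{2}$ over $\lambda > 0$ selects $\lambda = t/\|\vxx\|_{2}^{2}$ and gives value $-t^{2}/(2\|\vxx\|_{2}^{2})$; combined with the factor of $2$ from the symmetrization step, this produces the stated inequality. There is no genuine obstacle here, as this is a classical concentration estimate and the coordinatewise factorization is immediate; the only step deserving any care is the verification of the sub-Gaussian moment bound $\cosh(u) \leq e^{u^{2}/2}$, which underlies the whole argument.
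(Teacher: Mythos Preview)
Your proof is correct and is the standard Chernoff/exponential-moment argument for Rademacher sums. The paper does not actually prove this lemma; it only records the statement and attributes it to Theorem~2 of Hoeffding's original paper~\cite{H63}, so there is no in-paper argument to compare against.
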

This version of Hoeffding's inequality is derived directly from Theorem 2 of \cite{H63}.

\begin{lemma}[Hanson-Wright inequality]\cite{HW71}
\label{Hanson-Wright}
Let $\vx\in\mathbb{R}^{n\times n}$ have zero diagonal entries, and $\xi\in\mathbb{R}^{n}$ be a Rademacher vector. Then, for any $t>0$,
\begin{equation}
\vP(|\xi^{\top} \vx \xi|>t)\leq 2\exp{(-\frac{1}{64}\min{(\frac{t^{2}}{\| \vx \|_{F}^{2}}, \frac{\frac{96}{65}t}{\|\vx\|})})}.
\end{equation}
\end{lemma}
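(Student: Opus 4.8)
The plan is to prove the bound by the Laplace-transform (Chernoff) method: establish a sub-exponential bound on the moment generating function $\mathbb{E}\exp(\lambda\,\xi^{\top}\vx\xi)$ and then optimize over $\lambda>0$. Two preliminary reductions come first. Since $\vx$ has zero diagonal and the $\xi(i)$ are independent and mean-zero, $\mathbb{E}[\xi^{\top}\vx\xi]=\sum_{i\ne j}x(i,j)\,\mathbb{E}[\xi(i)]\,\mathbb{E}[\xi(j)]=0$, so the chaos is already centered and the stated event concerns a genuine off-diagonal quadratic form. Next, replacing $\vx$ by its symmetric part $(\vx+\vx^{\top})/2$ leaves $\xi^{\top}\vx\xi$ unchanged and increases neither $\|\vx\|_F$ nor $\|\vx\|$, so I may assume $\vx$ symmetric. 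It then suffices to bound the upper tail $\vP(\xi^{\top}\vx\xi>t)$; applying the result to $-\vx$ controls the lower tail, and a union bound produces the factor $2$.

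For the MGF, first I would decouple. By the standard decoupling inequality for zero-diagonal Rademacher chaos \cite{FR13} applied to the convex function $u\mapsto e^{\lambda u}$, one has $\mathbb{E}_{\xi}\exp(\lambda\,\xi^{\top}\vx\xi)\le\mathbb{E}_{\xi,\xi'}\exp(4\lambda\,\xi^{\top}\vx\xi')$, where $\xi'$ is an independent Rademacher copy. Conditioning on $\xi'$, the quantity $\xi^{\top}\vx\xi'=\sum_i\xi(i)\,(\vx\xi')(i)$ is a linear form in $\xi$, so the elementary MGF bound $\cosh(u)\le e^{u^2/2}$ underlying \cref{Hoeffding} gives $\mathbb{E}_{\xi}\exp(4\lambda\,\xi^{\top}\vx\xi')\le\exp(8\lambda^2\|\vx\xi'\|_2^2)$. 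Averaging over $\xi'$ reduces the task to controlling $\mathbb{E}_{\xi'}\exp(s\,\|\vx\xi'\|_2^2)$ with $s=8\lambda^2$, a positive-semidefinite quadratic form in the Rademacher vector $\xi'$.

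The crux is this residual quadratic form, and the clean route I would take is Gaussian linearization: for $s\ge 0$ and any real vector $\vu$ one has $\exp(s\|\vu\|_2^2)=\mathbb{E}_h\exp(\sqrt{2s}\,\langle h,\vu\rangle)$ with $h$ a standard Gaussian. Taking $\vu=\vx\xi'$, applying Fubini, and then $\cosh(u)\le e^{u^2/2}$ again over $\xi'$ dominates the Rademacher form by its Gaussian counterpart: $\mathbb{E}_{\xi'}\exp(s\,\|\vx\xi'\|_2^2)\le\mathbb{E}_h\exp(s\,h^{\top}\vx\vx^{\top}h)=\prod_k(1-2s\sigma_k^2)^{-1/2}$, valid for $s<1/(2\|\vx\|^2)$, where the $\sigma_k$ are the singular values of $\vx$. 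Using $-\tfrac12\log(1-u)\le\tfrac{u}{2(1-u)}$ termwise, together with $\sum_k\sigma_k^2=\|\vx\|_F^2$ and $\sigma_k^2\le\|\vx\|^2$, yields $\mathbb{E}_{\xi'}\exp(s\,\|\vx\xi'\|_2^2)\le\exp\!\big(s\,\|\vx\|_F^2/(1-2s\|\vx\|^2)\big)$.

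Assembling everything with $s=8\lambda^2$, the Chernoff bound reads $\vP(\xi^{\top}\vx\xi>t)\le\exp\!\big(-\lambda t+8\lambda^2\|\vx\|_F^2/(1-16\lambda^2\|\vx\|^2)\big)$ on the admissible range $\lambda<1/(4\|\vx\|)$. Optimizing the exponent over $\lambda$ produces the two-regime behaviour: the quadratic term dominates, giving $-c\,t^2/\|\vx\|_F^2$, when $t\lesssim\|\vx\|_F^2/\|\vx\|$, while the endpoint $\lambda\sim 1/\|\vx\|$ governs the linear-tail regime, giving $-c\,t/\|\vx\|$; together these are exactly $\min(t^2/\|\vx\|_F^2,\;t/\|\vx\|)$ up to constants. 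The conceptual content lives entirely in the decoupling and Gaussian-linearization steps; I expect the main nuisance to be the constant bookkeeping required to land on the precise values $1/64$ and $96/65$, which means carrying the decoupling factor $4$, the factor from $\cosh(u)\le e^{u^2/2}$, and the $1/(1-16\lambda^2\|\vx\|^2)$ denominator faithfully through the optimization. These exact constants are the ones recorded in the compressive-sensing literature \cite{R10,FR13}, so an alternative is simply to invoke \cite{HW71} as cited.
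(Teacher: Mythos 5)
Your argument is correct, but it follows a genuinely different route from the paper, which in fact offers no proof of its own: the lemma is stated with a citation to Hanson--Wright, and the explicit constants $1/64$ and $96/65$ are imported from the proof of Theorem 17 in the cited work of Boucheron--Lugosi--Massart, an \emph{entropy-method} argument (modified logarithmic Sobolev inequalities for functions of independent Rademacher variables) rather than the decoupling/Chernoff scheme you propose. Your chain of steps is the standard self-contained proof from the compressive-sensing literature and it checks out: decoupling the zero-diagonal chaos with constant $4$, the conditional bound $\prod_i\cosh\bigl(4\lambda(\vx\xi')(i)\bigr)\le\exp\bigl(8\lambda^2\|\vx\xi'\|_2^2\bigr)$, Gaussian linearization giving $\mathbb{E}_{\xi'}\exp\bigl(s\|\vx\xi'\|_2^2\bigr)\le\prod_k(1-2s\sigma_k^2)^{-1/2}\le\exp\bigl(s\|\vx\|_F^2/(1-2s\|\vx\|^2)\bigr)$, and the two-regime optimization. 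Moreover, your worry about constant bookkeeping is unfounded, since your route delivers constants at least as strong as the lemma's: restricting to $16\lambda^2\|\vx\|^2\le 1/2$ so the denominator is at least $1/2$, the exponent is at most $-\lambda t+16\lambda^2\|\vx\|_F^2$; the optimum $\lambda=t/(32\|\vx\|_F^2)$ yields exactly $-t^2/(64\|\vx\|_F^2)$, and when $t\ge 4\sqrt{2}\,\|\vx\|_F^2/\|\vx\|$ the endpoint $\lambda=1/(4\sqrt{2}\,\|\vx\|)$ yields exponent at most $-t/(8\sqrt{2}\,\|\vx\|)$, and since $1/(8\sqrt{2})\approx 0.088$ exceeds $\tfrac{1}{64}\cdot\tfrac{96}{65}=\tfrac{3}{130}\approx 0.023$, your bound implies the stated one after the two-sided union bound. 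As for what each approach buys: yours is elementary, self-contained, and extends immediately to general sub-Gaussian entries, while the paper's citation-based route gives the precise (and slightly odd-looking) coefficient $96/65$ that downstream reappears as the $3/130$ factor in \cref{induction}; either set of constants suffices for the paper's purposes. One cosmetic remark: your symmetrization step, while harmless (the operator and Frobenius norms do not increase by the triangle inequality), is not actually needed, since decoupling and the Gaussian comparison work for arbitrary zero-diagonal $\vx$.
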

This Hanson-Wright bound with explicit constants is derived from the proof of Theorem 17 in \cite{BLM03}.

We will use the following corollary of Hanson-Wright.
\begin{corollary}
\label{induction}
Suppose we have a random matrix $\vx \in \mathbb{R}^{n \times n}$,  positive vectors $\vy_1, \vy_2 \in \mathbb{R}^{n},$ and $\vt> 0$, $\vp > 0$ (assume $\vp < 1/n^2$) such that, for each pair $(i,j) \in [n]$,
\begin{equation*}
    \vP\left(|x(i, j)| > \vt \, \vy_1(i)\, \vy_2(j)\right) \leq \vp.
\end{equation*}
Then for a Rademacher vector $\xi \in \mathbb{R}^{n}$, and $t > 0$ such that $\vt \leq t/66$, we have
\begin{equation}
    \vP \left(|\xi^\top \vx \xi| > t\, \|\vy_1\|_2\, \|\vy_2 \|_2\right) \leq n^2\, \vp + 2\exp(-\frac{1}{44}\, \frac{t}{\vt}),
\end{equation}
where the probability is with respect to both $\vx$ and $\xi$.
\end{corollary}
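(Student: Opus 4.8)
The plan is to bound $|\xi^\top \vx \xi|$ by conditioning on the random matrix $\vx$ and then applying the Hanson–Wright inequality (\cref{Hanson-Wright}) to the quadratic form in the Rademacher vector $\xi$. First I would split the probability into two pieces using the law of total probability. Define the "good" event $G$ on which every entry of $\vx$ obeys the deterministic bound $|x(i,j)| \leq \vt\,\vy_1(i)\,\vy_2(j)$ simultaneously over all pairs $(i,j) \in [n]^2$. By a union bound over the $n^2$ entries and the hypothesis $\vP(|x(i,j)| > \vt\,\vy_1(i)\,\vy_2(j)) \leq \vp$, the complementary event $G^c$ has probability at most $n^2\,\vp$. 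On $G^c$ I simply bound the conditional probability by $1$, which contributes the $n^2\,\vp$ term in the conclusion.

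The main work is on the good event $G$. Conditioned on $\vx$ lying in $G$, I want to apply Hanson–Wright, but \cref{Hanson-Wright} requires a matrix with zero diagonal. So I would first separate the diagonal: write $\xi^\top \vx \xi = \sum_i x(i,i)\,\xi(i)^2 + \xi^\top \vx^{\circ}\xi$, where $\vx^{\circ}$ is $\vx$ with its diagonal zeroed out. Since $\xi(i)^2 = 1$ deterministically, the diagonal part is a fixed number $\sum_i x(i,i)$ on $G$, bounded in absolute value by $\vt\sum_i \vy_1(i)\,\vy_2(i) \leq \vt\,\|\vy_1\|_2\,\|\vy_2\|_2$ via Cauchy–Schwarz. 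I would then control the off-diagonal quadratic form $\xi^\top \vx^\circ \xi$ using Hanson–Wright, which needs bounds on $\|\vx^\circ\|_F$ and $\|\vx^\circ\|$. On $G$ the Frobenius norm satisfies $\|\vx^\circ\|_F^2 \leq \sum_{i,j}\vt^2\,\vy_1(i)^2\,\vy_2(j)^2 = \vt^2\,\|\vy_1\|_2^2\,\|\vy_2\|_2^2$, and the spectral norm is bounded by the Frobenius norm, $\|\vx^\circ\| \leq \vt\,\|\vy_1\|_2\,\|\vy_2\|_2$. Writing $M = \vt\,\|\vy_1\|_2\,\|\vy_2\|_2$ and choosing a threshold of the form $t'M$ for the off-diagonal part (with $t'$ chosen so that $t'M$ plus the diagonal contribution $M$ totals at most $tM$, i.e. $t' = t/\vt - 1$), Hanson–Wright gives a tail of the form $2\exp\!\bigl(-\tfrac{1}{64}\min(t'^2, \tfrac{96}{65}t')\bigr)$.

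The last step is to reconcile the exponent with the stated $2\exp(-\tfrac{1}{44}\,t/\vt)$. Here the hypothesis $\vt \leq t/66$, equivalently $t/\vt \geq 66$, is what does the work: it forces $t' = t/\vt - 1 \geq 65$ to be large, so that in the Hanson–Wright minimum the \emph{linear} term $\tfrac{96}{65}t'$ is the binding one (since $t'^2 \geq \tfrac{96}{65}t'$ once $t' \geq 96/65$), and moreover $t' = t/\vt - 1$ is comparable to $t/\vt$ up to the constant absorbed by changing $1/64$ to $1/44$. Concretely, one checks $\tfrac{1}{64}\cdot\tfrac{96}{65}\,(t/\vt - 1) \geq \tfrac{1}{44}\,t/\vt$ for all $t/\vt \geq 66$, which is a routine one-variable inequality. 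The main obstacle is the bookkeeping in this final constant-chasing: one must verify that zeroing the diagonal, the Cauchy–Schwarz slack in the diagonal bound, and the loss in passing from $t'$ to $t/\vt$ all fit within the slack afforded by the constant $66$ and the relaxed constant $1/44$. I expect no conceptual difficulty beyond arranging these elementary inequalities so the advertised exponent comes out cleanly.
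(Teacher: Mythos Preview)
Your proposal is correct and is essentially the same argument as the paper's proof: condition on the entry-wise good event via a union bound (cost $n^2\vp$), split off the diagonal $\sum_i x(i,i)$ (bounded by $\vt\,\|\vy_1\|_2\,\|\vy_2\|_2$ via Cauchy--Schwarz), apply Hanson--Wright (\cref{Hanson-Wright}) to the zero-diagonal part using $\|\vx^\circ\|\leq\|\vx^\circ\|_F\leq \vt\,\|\vy_1\|_2\,\|\vy_2\|_2$, and then check that for $t/\vt\geq 66$ the linear term in the Hanson--Wright exponent is binding and satisfies $\tfrac{3}{130}(t/\vt-1)\geq \tfrac{1}{44}\,t/\vt$. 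The paper's write-up phrases the conditioning event in terms of the trace and norm bounds rather than the entry-wise bounds, but this is purely cosmetic since the latter implies the former.
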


\begin{proof}
With probability at least $1 - n^2 \,\vp$ with respect to the draw of $\vx$, $\{|x(i,j)| \leq \vt\, \vy(i) \, \vy(j)\}$ for all $i,j \in [n]$.

By the law of total probability,
\begin{align*}
\begin{array}{l}
\vP\left(|\xi^\top \vx\xi| > t\, \| \vy_1\| _2 \, \| \vy_2\| _2\right)  \\\\
\leq n^2\, \vp +\vP\left(|\xi^\top \vx\xi| > t\, \| \vy_1\| _2 \, \| \vy_2\| _2 ~ \Big| ~ \{|x(i,j)| \leq \vt\, \vy_1(i) \, \vy_2(j)\} \text{ for all } i,j \in [n]   \right) \nonumber \\\\
\leq n^2\, \vp +\vP\left(|\xi^\top \vx\xi| > t\, \| \vy_1\| _2 \, \| \vy_2\| _2~ \Big| ~ {E}   \right),
\end{array}
\end{align*}
where $E$ is the event
$$
{E} = \{ |\text{Tr}(\vx)| \leq \vt\,\| \vy_1\| _2 \, \| \vy_2\| _2, \quad  \|\tilde{\vx}\| \leq \vt\,\| \vy_1\| _2 \, \| \vy_2\| _2, \quad \|\tilde{\vx}\|_{F} \leq \vt\,\| \vy_1\| _2 \, \| \vy_2\| _2  \},
$$
and $\text{Tr}(\vx)$ is the trace of $\vx$, and $\tilde{\vx}$ is formed from $\vx$ by setting the diagonal entries to zero.  

Then, applying \cref{Hanson-Wright},
\begin{align*}
\begin{array}{l}
\vP\left(\Big|\xi^\top \vx\xi\Big| > t\,\| \vy_1\| _2 \, \| \vy_2\| _2~ \Big| ~ {E} \right)\\\\
\leq \vP\left(|\text{Tr}(\vx)| > \vt\,\| \vy_1\| _2 \, \| \vy_2\| _2~ \Big| ~ {E} \right)  +\vP\left(\Big|\xi^{\top}\tilde{\vx}\xi\Big| > (t - \vt)\,\| \vy_1\| _2 \, \| \vy_2\| _2~ \Big| ~{E} \right)\\\\
= \vP\left(\Big|\xi^{\top}\tilde{\vx}\xi\Big| > (t - \vt)\,\| \vy_1\| _2 \, \| \vy_2\| _2~ \Big| ~{E} \right)\\\\
\leq 2\displaystyle\exp\left( -\min(\frac{1}{64}\,\frac{(t - \vt)^{2}}{\vt^{2}}, \frac{3}{130}\,\frac{t - \vt}{\vt}) \right) = 2\exp(-\frac{3}{130}\,\frac{t - \vt}{\vt})\\\\
\leq 2\displaystyle\exp(-\frac{1}{44}\,\frac{t}{\vt})~~~(\text{if}~ \frac{t}{\vt} \geq 66).
\end{array}
\end{align*}

Therefore we obtain a upper bound: 
$$
\vP\left(\Big|\xi^\top \vx\xi\Big| > t\,\| \vy_1\| _2 \, \| \vy_2\| _2\right) \leq n^{2}\, \vp+2\exp(-\frac{1}{44}\,\frac{t}{\vt}).
$$
\end{proof}

The following proposition concerns norm bounds for a class of (possibly asymmetric) RIP-based quadratic forms.

\begin{proposition}
\label{modified}
Fix integers $s, n, m$ such that $s \leq n$ and let $r = \lceil n/s \rceil \geq 1$. Consider vectors $\vxx, \vy \in \Real^n$. Let $\vPsi = (\vPsi_L, \vPsi_R) \in \Real^{m \times 2n}$, where $\vPsi_L, \vPsi_R \in \Real^{m \times n}$ respectively denote the first and the second sets of $n$ columns, have the $(2s, \delta)$-RIP. After sorting the indices by the magnitude of the entries in $\vxx$, let $I_1$ denote the first $s$ sorted indices, $I_2$ denote the second $s$ (possibly less than $s$) sorted indices, and up to $I_r$, where $|I_1| = \dots = |I_{r-1}| = s$ and $|I_r| = n-(r-1)\, s$. The corresponding index notations for $\vy$ are the sets $J_1, \dots, J_r$. We write $i \sim j$ if the two indices are associated in the same block location respectively of $\vxx$ and $\vy$, i.e. $i \in I_p, j \in J_p$, $p \in [r]$. Consider the matrix $\vC_{\vxx, \vy} \in \Real^{n\times n}$ with entries:
\begin{equation*}
\vC_{\vxx, \vy}(i, j) = 
\left\{
\begin{array}{lc}
x(i)\vPsi_L(:,i)^\top \vPsi_R(:,j)y(j), & i \not\sim j,~~ i \in I_1^c, j \in J_1^c,\\
0, & \text{else}.
\end{array}
\right.
\end{equation*}
And for $\vbbb\in \{-1, 1\}^{s}$, $\vdd \in \{-1, 1\}^n$,
\begin{align*}
\vv_{\vxx, \vy}  &= \vD_{\vxx(I_1^c)}\vPsi_L(:, I_1^c)^\top \vPsi_R(:, J_1)\vD_{\vy(J_1)}\vbbb \in \Real^{n-s},\\
w_{\vxx, \vy} & = \sum_{p=1}^{r}\vdd(I_p)^\top \vD_{\vxx(I_p)}\vPsi_L(:, I_p)^\top \vPsi_R(:, J_p)\vD_{\vy(J_p)}\vdd(J_p)\in \Real.
\end{align*}
Then
\begin{align*}
\|\vC_{\vxx, \vy}\| \leq \frac{\delta}{s}\, \|\vxx\|_2\, \|\vy\|_2, 
&~~~~\|\vC_{\vxx, \vy}\|_F \leq \frac{\delta}{\sqrt{s}}\, \|\vxx\|_2\, \|\vy\|_2,\\ \|\vv_{\vxx, \vy}\|_2 \leq \frac{\delta}{\sqrt{s}}\, \|\vxx\|_2\, \|\vy\|_2,
&~~~~|w_{\vxx, \vy}| \leq \delta\, \|\vxx\|_2\, \|\vy\|_2.
\end{align*}
\end{proposition}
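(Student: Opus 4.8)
The plan is to handle all four bounds with a common toolkit assembled from the RIP corollaries and the block structure induced by magnitude-sorting. The workhorse is the operator-norm form of \cref{RIP-norm 2}: since $\vPsi_L$ and $\vPsi_R$ are disjoint column-blocks of the $(2s,\delta)$-RIP matrix $\vPsi$, for any index sets $I, J$ with $|I|,|J| \leq s$ the columns $\vPsi_L(:,I)$ and $\vPsi_R(:,J)$ occupy disjoint supports inside $[2n]$; taking the supremum of the inner-product bound in \cref{RIP-norm 2} over the free coordinates on $I$ and on $J$ (which are independent because the supports are disjoint) yields the operator bound $\|\vPsi_L(:,I)^\top \vPsi_R(:,J)\| \leq \delta$. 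The second device is the sorting inequality: because indices are sorted by decreasing magnitude of $\vxx$, every entry of block $I_p$ is dominated by every entry of $I_{p-1}$, so $\|\vxx(I_p)\|_\infty \leq \tfrac{1}{\sqrt{s}}\|\vxx(I_{p-1})\|_2$ for $p \geq 2$, whence $\sum_{p\geq2}\|\vxx(I_p)\|_\infty^2 \leq \tfrac{1}{s}\|\vxx\|_2^2$; the same holds for $\vy$ with blocks $J_q$.

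First I would dispatch $w_{\vxx,\vy}$, which is easiest. Writing $\va_p = \vD_{\vxx(I_p)}\vdd(I_p)$ and $\vbbb_p = \vD_{\vy(J_p)}\vdd(J_p)$, each summand is $\langle \vPsi_L(:,I_p)\va_p,\, \vPsi_R(:,J_p)\vbbb_p\rangle$, bounded by $\delta\|\va_p\|_2\|\vbbb_p\|_2 = \delta\|\vxx(I_p)\|_2\|\vy(J_p)\|_2$ since the $\pm1$ signs leave $\ell_2$ norms unchanged; summing and applying Cauchy--Schwarz over $p$ gives $|w_{\vxx,\vy}| \leq \delta\|\vxx\|_2\|\vy\|_2$. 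Next, for $\vv_{\vxx,\vy}$ I would split it into blocks $\vv(I_p)$, $p \geq 2$. Absorbing the signs gives $\|\vD_{\vy(J_1)}\vbbb\|_2 = \|\vy(J_1)\|_2$, and combining $\|\vD_{\vxx(I_p)}\| = \|\vxx(I_p)\|_\infty$ with $\|\vPsi_L(:,I_p)^\top\vPsi_R(:,J_1)\| \leq \delta$ yields $\|\vv(I_p)\|_2 \leq \delta\|\vxx(I_p)\|_\infty\|\vy(J_1)\|_2$. Summing the squares and invoking the sorting inequality collapses $\sum_{p\geq2}\|\vxx(I_p)\|_\infty^2$ to $\tfrac1s\|\vxx\|_2^2$, producing $\|\vv_{\vxx,\vy}\|_2 \leq \tfrac{\delta}{\sqrt s}\|\vxx\|_2\|\vy\|_2$.

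The two bounds on $\vC_{\vxx,\vy}$ carry the most weight, and I expect the spectral bound to be the main obstacle. I would first note that $\vC_{\vxx,\vy}$ is block-structured with nonzero blocks $\vC(I_p,J_q) = \vD_{\vxx(I_p)}\vPsi_L(:,I_p)^\top\vPsi_R(:,J_q)\vD_{\vy(J_q)}$ only for $p,q \geq 2$ and $p \neq q$. For the Frobenius bound I would use $\|AXB\|_F \leq \|A\|\,\|X\|_F\,\|B\|$ together with the rank bound $\|\vPsi_L(:,I_p)^\top\vPsi_R(:,J_q)\|_F \leq \sqrt{s}\,\|\vPsi_L(:,I_p)^\top\vPsi_R(:,J_q)\| \leq \sqrt{s}\,\delta$ (this matrix has rank at most $s$ and operator norm at most $\delta$), giving $\|\vC(I_p,J_q)\|_F \leq \sqrt{s}\,\delta\,\|\vxx(I_p)\|_\infty\|\vy(J_q)\|_\infty$; summing squares and applying the sorting inequality on both sides yields $\|\vC_{\vxx,\vy}\|_F \leq \tfrac{\delta}{\sqrt s}\|\vxx\|_2\|\vy\|_2$. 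For the spectral norm I would pass to the $(r-1)\times(r-1)$ matrix of block operator norms $\mathbf{Q}$, with $Q(p,q) = \|\vC(I_p,J_q)\|$, via the standard majorization $\|\vC_{\vxx,\vy}\| \leq \|\mathbf{Q}\|$; then $\|\vC(I_p,J_q)\| \leq \delta\|\vxx(I_p)\|_\infty\|\vy(J_q)\|_\infty$ followed by the sorting inequality gives $Q(p,q) \leq \tfrac{\delta}{s}\|\vxx(I_{p-1})\|_2\|\vy(J_{q-1})\|_2$, so that $\|\mathbf{Q}\| \leq \|\mathbf{Q}\|_F \leq \tfrac{\delta}{s}\|\vxx\|_2\|\vy\|_2$. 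The delicate points, and the reason the spectral case is hardest, are justifying the block-operator-norm majorization $\|\vC_{\vxx,\vy}\| \leq \|\mathbf{Q}\|$ and bookkeeping the two separate factors of $\tfrac{1}{\sqrt s}$ from sorting $\vxx$ and $\vy$ so they combine to the claimed $\tfrac{1}{s}$; the remainder reduces to careful but routine application of the RIP corollaries.
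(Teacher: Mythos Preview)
Your proposal is correct and is essentially the paper's proof, built on the same two ingredients (the RIP cross-term bound $\|\vPsi_L(:,I)^\top\vPsi_R(:,J)\|\le\delta$ and the sorting inequality $\|\vxx(I_p)\|_\infty\le s^{-1/2}\|\vxx(I_{p-1})\|_2$), and handling the four quantities in the same order of difficulty. The only differences are organizational: for $\|\vC_{\vxx,\vy}\|$ the paper carries out the variational argument directly and closes with an AM--GM step $\|\vu(I_p)\|_2\,\frac{\|\vxx(I_{p-1})\|_2}{\|\vxx\|_2}\le\frac12\big(\|\vu(I_p)\|_2^2+\frac{\|\vxx(I_{p-1})\|_2^2}{\|\vxx\|_2^2}\big)$, whereas you package the same computation via the block-operator-norm majorization $\|\vC\|\le\|\mathbf Q\|$ followed by $\|\mathbf Q\|\le\|\mathbf Q\|_F$; and for $\|\vC_{\vxx,\vy}\|_F$ the paper sums column-by-column over $i\in I_1^c$ while you use the rank-$s$ bound $\|\vPsi_L(:,I_p)^\top\vPsi_R(:,J_q)\|_F\le\sqrt{s}\,\delta$ on each block. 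Both packagings are valid and yield the identical constants.
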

The detailed proof of \cref{modified} can be found in \cref{modified proof}.

\begin{remark}
\label{extension}
Proposition 5.4 from \cite{KW11} can be viewed as a special case of \cref{modified} in which the quadratic form is symmetric.
\end{remark}

\subsection{Notation}
\label{notation}
Let us define the notation for this section.
\begin{enumerate}
\item Suppose $\xi_1 \in \Real^{n_1}, \dots, \xi_d \in \Real^{n_d}$ are independent Rademacher vectors, let $N = \prod_{k=1}^d n_k,$ and define 
$$
\vD_{\zeta} = \bigotimes^{1}_{k=d}\vD_{\xi_k} \in \Real^{N \times N}.
$$
We consider the diagonal block form of $\vD_{\zeta}:$
\begin{equation}
\label{D block}
\vD_\zeta \displaystyle= \vD_{\xi_d}\otimes \left(\bigotimes^{1}_{k=d-1}\vD_{\xi_k}\right).
\end{equation}

\item $\vPsi \in \Real^{m \times N}$ is a deterministic matrix with corresponding column block decomposition 
\begin{equation}
\label{block}
\vPsi = \left(\vPsi_1, \vPsi_2, \dots, \underbrace{\vPsi_i}_{m \times \prod_{k=1}^{d-1} n_k}, \dots, \vPsi_{n_d}\right) \in \Real^{m \times \prod_{k=1}^d n_k}.
\end{equation}

\item $\vPhi \in \Real^{m \times N}$ is the random matrix $\vPhi = \vPsi \vD_{\zeta}$.
We also consider each $\vPhi_i \in \Real^{m \times \prod_{k=1}^{d-1} n_k}$ defined by the corresponding block forms of $\vPsi, \vD_\zeta: $
$$\vPhi_i = \displaystyle \vPsi_i\, \left(\bigotimes^{1}_{k=d-1}\vD_{\xi_k}\right), \quad \text{for}~~i\in[n_d].$$
Importantly, $\vPhi_i$ with randomized signs is a column block of $\vPhi$ by \eqref{D block} \eqref{block}:
\begin{equation}
\label{Phi block}
\vPhi = \left(\xi_{d}(1)\, \vPhi_1, \dots, \xi_{d}(i)\, \vPhi_i, \dots, \xi_{d}(n_d)\, \vPhi_{n_d}\right) \in \Real^{m \times \prod_{k=1}^d n_k}.
\end{equation}

\item For fixed $\vxx \in \Real^N$, we consider the corresponding vector block decomposition 
\begin{equation}
\vxx = \left(\vxx_1, \vxx_2, \dots, \underbrace{\vxx_i}_{\in\Real^{\prod_{k=1}^{d-1} n_k}}, \dots, \vxx_{n_d}\right)^\top \in \Real^{\prod_{k=1}^d n_k}.
\end{equation}
\end{enumerate}

Note that the distortion 
\begin{equation}
\label{dist}
\|\vPhi \vxx\|_2^2 - \|\vxx\|_2^2 = \|\vPsi \vD_{\zeta}\vxx\|_2^2-\|\vxx\|_2^2
\end{equation}
can be equivalently expressed as the quadratic form
$
\zeta^{\top}\vM\zeta,
$
where $\vM = \vD_{\vxx}(\vPsi^\top\vPsi- {\mathbf{I}_N})\vD_{\vxx}\in \mathbb{R}^{N\times N}$.

Based on the block decomposition in \cref{Phi block}, the distortion \cref{dist} can also be expressed as the quadratic form 
\begin{equation}
\label{qua2}
\xi_{d}^\top \vM_d \xi_{d},
\end{equation}
where $\vM_d \in \mathbb{R}^{n_d\times n_d}$ has entries $\{m_d(i,j)\}_{i,j=1}^{n_d}$ given by
\begin{align}
\label{m_d}
m_d(i, j) & = \left\{
\begin{array}{lc}
   \|\vPhi_i\vxx_i\|_2^2-\|\vxx_i\|_2^2, & i = j \\
   \langle\vPhi_i\vxx_i, \vPhi_j\vxx_j\rangle, & i \neq j
\end{array}
\right..
\end{align}

\subsection{Proof of \cref{prob bound}}
We prove \cref{prob bound} for a general $d \geq 2$ via induction on the degree $d$.  Note that \cref{prob bound} with degree $d$ is simply the following proposition restricted to $i=j$ of degree $d+1$.

\begin{proposition}
\label{generalized prob bound}
 Fix $d\geq 2$ and consider vectors $\{\vxx_i\}_{i \in [n_d]}\in\Real^{\prod_{k=1}^{d-1} n_k}$.  Let $\vPsi \in \Real^{m \times N}$ and $\vPhi \in \Real^{m \times N}$ be as defined above.  Fix an integer $s$ satisfying $66 \leq s \leq \max_{k\in [d]} n_k$, and $\delta \in (0,1/(2 s^{d-2}))$. Suppose that $\vPsi$ has the $(2s, \delta)$-RIP. Then, for each pair $(i, j) \in [n_d]\times [n_d]$,
\begin{align}
\label{md concentration}
\begin{array}{lc}
\vP\Big(|m_d(i, j)|> 2\delta\, s^{d-2}  \, \|\vxx_i\|_2\,\|\vxx_j\|_2\Big) \leq 
(6\displaystyle\prod_{k=2}^{d-1}  n_k^2+2\sum_{k=2}^{d-1}\prod_{\ell = k+1}^{d-1} n_\ell^2)\,\exp(-\frac{1}{128}\, s),
\end{array}
\end{align}
where we define $\prod_{k=d}^{d-1} n_k^2 =1$ and $\prod_{\ell=3}^1 n_\ell^2 =0$.
\end{proposition}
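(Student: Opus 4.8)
The plan is to prove \cref{generalized prob bound} by induction on the degree $d$, handling $d=2$ directly and, in the inductive step, peeling off one Rademacher factor to reduce the degree-$d$ entries to degree-$(d-1)$ quantities. Recall from \cref{m_d} that $m_d(i,j)$ is the centered norm $\|\vPhi_i\vxx_i\|_2^2-\|\vxx_i\|_2^2$ when $i=j$ and the cross inner product $\langle\vPhi_i\vxx_i,\vPhi_j\vxx_j\rangle$ when $i\neq j$, where $\vPhi_i=\vPsi_i\bigl(\bigotimes_{k=d-1}^{1}\vD_{\xi_k}\bigr)$ and each block $\vPsi_i$, being a column submatrix of the $(2s,\delta)$-RIP matrix $\vPsi$, is itself $(2s,\delta)$-RIP. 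Writing $B_d$ for the coefficient of $\exp(-s/128)$ on the right-hand side of \cref{md concentration}, a short computation shows $B_d=n_{d-1}^2\,B_{d-1}+2$ with $B_2=6$, and the two ingredients below reproduce exactly this recursion. After permuting the Kronecker factors we may assume the innermost mode $n_1$ is the largest, so that the running hypothesis $66\le s\le\max_k n_k$ and the nested range $\delta\in(0,1/(2s^{d-2}))\subset(0,1/(2s^{d-3}))$ persist at every level of the recursion.

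For the base case $d=2$, the inner mixing $\bigotimes_{k=1}^{1}\vD_{\xi_k}=\vD_{\xi_1}$ is a single Rademacher scaling, so $m_2(i,j)=\xi_1^\top\vD_{\vxx_i}\vPsi_i^\top\vPsi_j\vD_{\vxx_j}\xi_1$ (with the $\|\vxx_i\|_2^2$ centering on the diagonal when $i=j$) is a quadratic form in one Rademacher vector with a deterministic coefficient matrix. The naive entrywise estimate $|\vPsi_i(:,k)^\top\vPsi_j(:,k')|\lesssim\delta$ is too weak to reach the target threshold $2\delta\|\vxx_i\|_2\|\vxx_j\|_2$, so instead I invoke \cref{modified} with $\vPsi_L=\vPsi_i$, $\vPsi_R=\vPsi_j$, $\vxx=\vxx_i$, $\vy=\vxx_j$. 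Sorting the coordinates by magnitude and splitting the form into the far off-diagonal matrix $\vC_{\vxx,\vy}$, the near-diagonal cross term $\vv_{\vxx,\vy}$, and the block-diagonal scalar $w_{\vxx,\vy}$ yields the sharp RIP-based bounds of \cref{modified}; feeding $\|\vv_{\vxx,\vy}\|_2\le\tfrac{\delta}{\sqrt s}\|\vxx\|_2\|\vy\|_2$ into Hoeffding (\cref{Hoeffding}) and $\|\vC_{\vxx,\vy}\|,\|\vC_{\vxx,\vy}\|_F$ into Hanson--Wright (\cref{Hanson-Wright}), while bounding the block-diagonal part deterministically by $|w_{\vxx,\vy}|\le\delta\|\vxx\|_2\|\vy\|_2$ (and, when $i=j$, absorbing the $O(\delta)$ deviation of the column norms from $1$ into the centering), controls the deviation by a constant number of Hoeffding and Hanson--Wright tails, which sum to $6\exp(-s/128)=B_2\exp(-s/128)$.

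For the inductive step, assume the bound at degree $d-1$ and peel off the outermost inner factor by writing $\bigotimes_{k=d-1}^{1}\xi_k=\xi_{d-1}\otimes\bigl(\bigotimes_{k=d-2}^{1}\xi_k\bigr)$. Decomposing $\vxx_i$ and $\vPsi_i$ into $n_{d-1}$ blocks indexed by the $\xi_{d-1}$ coordinate as in \cref{block}, I obtain $\vPhi_i\vxx_i=\sum_{l=1}^{n_{d-1}}\xi_{d-1}(l)\,\vPhi_{i,l}\vxx_{i,l}$ with $\vPhi_{i,l}:=\vPsi_{i,l}\bigl(\bigotimes_{k=d-2}^{1}\vD_{\xi_k}\bigr)$, whence $m_d(i,j)=\xi_{d-1}^\top\vA^{(i,j)}\xi_{d-1}$ for a matrix $\vA^{(i,j)}$ with entry $\vA^{(i,j)}(l,l')=\langle\vPhi_{i,l}\vxx_{i,l},\vPhi_{j,l'}\vxx_{j,l'}\rangle$ (the diagonal centered by $\|\vxx_{i,l}\|_2^2$ when $i=j$, so that $\xi_{d-1}(l)^2=1$ reproduces the subtraction of $\|\vxx_i\|_2^2$). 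Each entry is precisely a degree-$(d-1)$ object controlled by the inductive hypothesis — a centered norm on the diagonal and a (possibly cross-family) inner product off it, between column blocks of the common RIP matrix, which \cref{generalized prob bound} and \cref{modified} are designed to accommodate — giving the entrywise tail $\vP(|\vA^{(i,j)}(l,l')|>2\delta s^{d-3}\|\vxx_{i,l}\|_2\|\vxx_{j,l'}\|_2)\le B_{d-1}\exp(-s/128)$. I then apply \cref{induction} with $\vy_1=(\|\vxx_{i,l}\|_2)_l$, $\vy_2=(\|\vxx_{j,l'}\|_2)_{l'}$ (so $\|\vy_1\|_2=\|\vxx_i\|_2$, $\|\vy_2\|_2=\|\vxx_j\|_2$), per-entry scale $\vt=2\delta s^{d-3}$, probability $\vp=B_{d-1}\exp(-s/128)$, and target $t=2\delta s^{d-2}$; the decisive check $t/\vt=s\ge66$ is exactly the standing hypothesis on $s$, so \cref{induction} yields $\vP(|m_d(i,j)|>2\delta s^{d-2}\|\vxx_i\|_2\|\vxx_j\|_2)\le n_{d-1}^2\vp+2\exp(-\tfrac{s}{44})\le(n_{d-1}^2B_{d-1}+2)\exp(-s/128)=B_d\exp(-s/128)$, using $\tfrac1{44}>\tfrac1{128}$. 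This closes the induction.

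The main obstacle is the base case, where the self-improving structure is not yet available: one must extract the full strength of the restricted isometry property through the sorting argument of \cref{modified} to beat the naive entrywise estimate by the required factor of $s$, and must handle the asymmetry of the $i\neq j$ form. By contrast the inductive step is largely bookkeeping, its only delicate points being the verification that the peeled entries genuinely match degree-$(d-1)$ objects (including the cross-family inner products, for which the two-block structure of \cref{modified} is essential) and the tracking of the ratio $t/\vt=s$, which simultaneously makes \cref{induction} applicable and produces the clean recursion $B_d=n_{d-1}^2B_{d-1}+2$.
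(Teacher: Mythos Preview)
Your proposal is correct and follows essentially the same approach as the paper: induction on $d$, with the base case $d=2$ handled via the decomposition of \cref{modified} combined with \cref{Hoeffding} and \cref{Hanson-Wright} to obtain the constant $6$, and the inductive step carried out by peeling off one Rademacher factor and invoking \cref{induction} with $t/\vt=s\ge 66$ to produce the recursion $B_d=n_{d-1}^2B_{d-1}+2$. Your explicit permutation to make $n_1$ the largest mode (so that $s\le n_1$ holds at the base of the recursion, as required by \cref{modified}) and your flagging of the cross-family inner products in the inductive step are points the paper treats more implicitly, but the overall argument is the same.
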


\begin{proof}[Proof of \cref{generalized prob bound}]

\begin{enumerate}[leftmargin=*, labelindent =0.2cm, listparindent=1.5em]
\item For the base case $d=2$, the matrix $\vM_2 \in \Real^{n_2 \times n_2}$ as defined in \cref{m_d} has entries
\begin{equation}
\label{m2}
m_2(i, j)  = \xi_{1}^\top\vA_{i, j}\xi_{1},    
\end{equation} 
where the matrix $\vA_{i, j} \in \Real^{n_1\times n_1}$ is given by
\begin{equation*}
\left\{
\begin{array}{lc}
\vD_{\vxx_i}(\vPsi_i^\top\vPsi_i -\Id_{n_1})\vD_{\vxx_i},    & i =j, \\
\vD_{\vxx_i}\vPsi_i^\top\vPsi_j\vD_{\vxx_j},    & i \neq j.
\end{array}
\right.
\end{equation*}
The proof of the proposition in the case  $i=j$ follows from Proposition 5.4 in \cite{KW11} used to analyze the standard FJLT.  However,  the case $i \neq j$ requires an extension of \cref{prop:KW} in \cite{KW11} to quadratic forms that are not necessarily symmetric, which constitutes our Proposition \ref{modified}.  

After sorting the indices by the magnitude of the entries in $\vxx_i$, let $I_1$ denote the first $s$ sorted indices, $I_2$ denote the second set of $s$ sorted indices, and up to $I_r$, where $|I_1| = \dots = |I_{r-1}| = s$ and $|I_r| = n_1-(r-1)\, s$. The corresponding index notations for $\vxx_j$ are the sets $J_1, \dots, J_r$. We write $i_1 \sim j_1$ if the two indices are associated in the same block location respectively of $\vxx$ and $\vy$, i.e. $i_1 \in I_p, j_1 \in J_p$, $p \in [r]$.

\begin{remark}
A significant difference between our Proposition \ref{modified}  and Proposition 5.4 in \cite{KW11} is that the entries in the vector $\vxx$ of length $N$ cannot be ordered by magnitude freely in our case. We are able to only sort the entries within each vector block of length $n_1.$ The reason is the corresponding diagonal entries in the matrix $\vD_{\zeta}$ are no longer independent, where $\zeta = \bigotimes_{k=d}^1 \xi_{k}.$
\end{remark}

Consider the matrix $\vC_{i, j} \in \Real^{n_1\times n_1}$ with entries:
\begin{equation*}
\vC_{i, j}(i_1, j_1) = 
\left\{
\begin{array}{lc}
m_{i,j}(i_1, j_1), & i_1 \not\sim j_1, i_1 \in I_1^c, j_1 \in J_1^c,\\
0, & \text{else}.
\end{array}
\right.
\end{equation*}

And, 
\begin{align*}
\vv_{i, j}  &= \vA_{i,j}(I_1^c, J_1)\xi_{1}(J_1) \in \Real^{n_1-s},\\\\
w_{i, j} & = \sum_{p=1}^{r}\xi_{1}(I_p)^\top \vA_{i, j}(I_p, J_p)\xi_{1}(J_p) \in \Real.
\end{align*}

Each of the blocks $\{\vPsi_i\}_{i \in [n_2]}$ has the $(2s, \delta)$-RIP because $\vPsi$ has the $(2s, \delta)$-RIP and $s \leq n_1$ by assumption.  Therefore, we apply the result from \cref{modified} to derive the norm bounds. In particular, the symmetric case when $i=j$ refers to the special case when $\vPsi_L = \vPsi_R$ and $\vxx = \vy$ in \cref{modified}.
\begin{align*}
\|\vC_{i, j}\| \leq \frac{\delta}{s}\, \|\vxx_i\|_2\, \|\vxx_j\|_2, 
&~~~~\|\vC_{i, j}\|_F \leq \frac{\delta}{\sqrt{s}}\, \|\vxx_i\|_2\, \|\vxx_j\|_2,\\
\|\vv_{i, j}\|_2 \leq \frac{\delta}{\sqrt{s}}\, \|\vxx_i\|_2\, \|\vxx_j\|_2,
&~~~~|w_{i,j}| \leq \delta\,\|\vxx_i\|_2\, \|\vxx_j\|_2.
\end{align*}

Moreover, 
\begin{align*}
m_2(i, j) & = \xi_{1}^\top \vA_{i, j}\xi_{1} = \sum_{p, q =1}^{r}\xi_{1}(I_p)^\top \vA_{i, j}(I_p, J_q)\xi_{1}(J_q)\\
&= \xi_{1}^\top \vC_{i, j}\xi_{1}+ \xi_{1}(I_1^c)^\top\vv_{i,j}+\xi_{1}(J_1^c)^\top\vv_{j,i}+w_{i,j},
\end{align*}
since 
\begin{align*}
\xi_{1}^\top \vC_{i, j}\xi_{1} &=  \sum_{p, q =2 \atop p \neq q}^{r}\xi_{1}(I_p)^\top \vA_{i, j}(I_p, J_q)\xi_{1}(J_q),\\
\xi_{1}(I_1^c)^\top\vv_{i,j}& = \sum_{p=2}^{r}\xi_{1}(I_p)^\top \vA_{i, j}(I_p, J_1)\xi_{1}(J_1),\\
\vv_{j,i}^\top\xi_{1}(J_1^c) & = \sum_{q=2}^{r}\xi_{1}(I_1)^\top \vA_{i, j}(I_1, J_q)\xi_{1}(J_q) = \xi_{1}(J_1^c)^\top\vv_{j,i}.
\end{align*}
By the standard concentration inequalities \cref{Hoeffding} and \cref{Hanson-Wright},
\begin{align*}
\vP\left(|\xi_{1}(I_1^c)^\top\vv_{i,j}| > \frac{1}{8}\, \delta\,\|\vxx_i\|_{2} \,\|\vxx_j\|_2\right)& \leq 2\exp\left(-\frac{1}{2}\,\frac{\delta^2}{64}\, \frac{s}{\delta^2}\right)\\
&= 2\exp(-\frac{1}{128}\, s) ,\\\\
\vP\left(|\xi_{1}(J_1^c)^\top\vv_{j, i}| > \frac{1}{8}\, \delta\,\|\vxx_i\|_{2} \,\|\vxx_j\|_2\right)& \leq 2\exp(-\frac{1}{128}\, s),\\\\
\vP\left(|\xi_{1}^\top \vC_{i, j}\xi_{1}| > \frac{3}{4}\, \delta\,\|\vxx_i\|_{2} \,\|\vxx_j\|_2\right)& \leq 2\exp\left(-\min(\frac{1}{64}\, \frac{9\delta^2}{16}\, \frac{s}{\delta^2}, \frac{3}{130}\, \frac{3\delta}{4}\, \frac{s}{\delta})\right)\\
&= 2\exp{\left(-\min(\frac{9}{1024}\, s, \frac{9}{520}\, s)\right)}\\
&<  2\exp(-\frac{1}{128}\, s).
\end{align*}

Since 
\begin{equation*}
\left\{
\begin{array}{lc}
\displaystyle|\xi_{1}(I_1^c)^\top\vv_{i,j}| \leq \frac{1}{8}\, \delta\,\|\vxx_i\|_{2} \,\|\vxx_j\|_2,\\\\
\displaystyle|\xi_{1}(J_1^c)^\top\vv_{j, i}| \leq \frac{1}{8}\, \delta\,\|\vxx_i\|_{2} \,\|\vxx_j\|_2,\\\\
\displaystyle|\xi_{1}^\top \vC_{i, j}\xi_1| \leq\frac{3}{4} \, \delta\,\|\vxx_i\|_{2} \,\|\vxx_j\|_2,\\\\
|w_{i, j}|\leq \delta\,\|\vxx_i\|_{2} \,\|\vxx_j\|_2,
\end{array}
\right.
\end{equation*}
imply $|m_2(i, j)| \leq 2\delta\,\|\vxx_i\|_{2} \,\|\vxx_j\|_2$, by the law of total probability, we obtain that for each fixed pair $(i, j) \in [n_2]\times [n_2]$,
\begin{align*}
\vP\left(\Big|m_2(i, j)\Big| > 2\delta\, \|\vxx_i\|_{2} \,\|\vxx_j\|_2\right) &< 2\exp(-\frac{1}{128}\, s)+2\exp(-\frac{1}{128}\, s)+2\exp(-\frac{1}{128}\, s)\\
& \leq 6\exp(-\frac{1}{128}\, s).
\end{align*}

\item Suppose now that \cref{generalized prob bound} is true up to degree $d$ ($d \geq 2$). We aim to show that the statement must then hold also for degree $d+1$. 

In the degree-$(d+1)$ case, for fixed $i\in [n_{d+1}]$, consider one further step of the block decomposition:
\begin{align*}
\vxx_i &= \left(\vxx_{i, 1}, \vxx_{i, 2}, \dots, \vxx_{i, i'}, \dots, \vxx_{i, n_d}\right)\in \Real^{\prod_{k=1}^{d}n_k},\\
\vPsi_i &= \left(\vPsi_{i, 1}, \vPsi_{i, 2}, \dots, \vPsi_{i, i'}, \dots, \vPsi_{i, n_d}\right)\in\Real^{m \times \prod_{k=1}^{d}n_k}.
\end{align*}

Recall the form \cref{qua2} of $m_{d+1}$. 
Note that when the degree jumps to $d+1$, there are now $n^2_{d+1}$ matrices $\{(\vM_d)_{i,j}\}_{i,j\in[n_{d+1}]}$ in the similar forms of $\vM_d$ \eqref{m_d}.
As a consequence, $\vM_{d+1}$ expands from simply a number $m_{d+1}$ to becoming a $n_{d+1}$ by $n_{d+1}$ square matrix, where each quadratic form squeezes $(\vM_d)_{i,j}$ into a number and thus form an entry of $\vM_{d+1}.$ More specifically, 
\begin{equation*}
\vM_{d+1} = 
\underbrace{\left[\begin{array}{lllll}
\xi_d^\top(\vM_d)_{1,1}\xi_d&&&&\\
&\ddots&&&\\
&&\underbrace{\xi_d^\top\underbrace{(\vM_d)_{i,j}}_{\in \Real^{n_d\times n_d}}\xi_d}_{\in \Real}&&\\
&&&\ddots&\\
&&&&\xi_d^\top(\vM_d)_{n_{d+1}, n_{d+1}}\xi_d
\end{array}\right]}_{n_{d+1}\times n_{d+1}}.
\end{equation*}
For each pair of $(i, j) \in [n_{d+1}]\times [n_{d+1}]$,
\begin{equation*}
m_{d+1}(i, j) = \xi_{d}^\top (\vM_d)_{i, j}\xi_{d},
\end{equation*}
where the matrix $(\vM_d)_{i, j}\in\Real^{n_d\times n_d}$ has entries:
\begin{align*}
(m_d)_{i,j}(i', j')& = 
\left\{
\begin{array}{lc}
 \|\vPhi_{i, i'}\vxx_{i, i'}\|_2^2 - \|\vxx_{i, i'}\|_2^2,    & \text{if}~~ i = j ~~\text{and}~~ i' = j',\\
 \langle \vPhi_{i, i'}\vxx_{i, i'}, \vPhi_{j, j'}\vxx_{j, j'}\rangle,  & \text{else},
\end{array}
\right.
\end{align*}
have the form $m_d$ from \cref{m_d} since each $\vPhi_{i, i'}$ is a $(2s, \delta)$-RIP $\vPsi_{i, i'}$ with randomized column signs from $\bigotimes_{k=d-1}^1 \xi_{k}$ and the index sets of $\vPhi_{i, i'}, \vPhi_{j, j'}$ are disjoint except when both $i = j$ and $i' = j'$.

By the concentration bound for degree $d$ which we assume is true,  
\begin{align}
\label{generalized prob bound term}
\begin{array}{l}
\nonumber
\vP\Big( |(m_d)_{i, j}(i', j')|>2\delta\, s^{d-2} \, \|\vxx_{i, i'}\|_2\,\|\vxx_{j, j'}\|_2 \Big)\\\\
\leq 
(6\displaystyle\prod_{k=2}^{d-1} n_k^2+2\sum_{k=2}^{d-1}\prod_{\ell = k+1}^{d-1} n_\ell^2)\,\exp(-\frac{1}{128}\, s).
\end{array}
\end{align}

Thus with the assumption $s = (2\delta \, s^{d-1})/(2\delta \, s^{d-2})\geq 66$, we now apply \cref{induction}, to get that
\begin{equation*}
\begin{array}{l}
\vP\Big(|m_{d+1}(i, j)|> 2\delta\, s^{d-1} \, \|\vxx_i\|_2\,\|\vxx_j\|_2\Big) \\\\
\leq \displaystyle n_d^2 \, [(6\prod_{k= 2}^{d-1} n_k^2+2\sum_{k=2}^{d-1}\prod_{\ell = k+1}^{d-1} n_\ell^2)\,\exp(-\displaystyle\frac{1}{128}\, s)]+ 2\exp(-\frac{1}{44}\,\frac{2\delta\, s^{d-1}}{2\delta\, s^{d-2}})\\\\
\displaystyle\leq (6\prod_{k= 2}^d n_k^2+2\sum_{k=2}^{d-1}\prod_{\ell = k+1}^{d} n_\ell^2)\,\exp(-\displaystyle\frac{1}{128}\, s)+2\exp(-\frac{1}{44}\, s)\\\\
\displaystyle\leq (6\prod_{k= 2}^d n_k^2+2\sum_{k=2}^{d}\prod_{\ell = k+1}^{d} n_\ell^2)\,\exp(-\displaystyle\frac{1}{128}\, s).
\end{array}
\end{equation*}
Note that we simply replace $\exp(-s/44)$ with $\exp(-s/128)$ in the last step derivation as the latter term is of bigger value and works for an upper bound to make a more organized result. 
\end{enumerate}

We finally obtain the result for $d+1$ and complete the proof of \cref{generalized prob bound}.
\end{proof}

We now show the proof of \cref{prob bound}.

\begin{proof}[Proof of \cref{prob bound}]
Let $n_1^* \geq n_2^*\geq \dots \geq n_d^*$ be the decreasing arrangement of $\{n_k\}_{k\in [d]}$. As the setting in \cref{prob bound} with degree $d$ is the case of $m_{d+1}$  restricted to $i=j$ shown in \cref{generalized prob bound}, we obtain
\begin{equation}
\label{prob bound term}
\vP\Big( \Big|\|\vPsi\vD_\zeta \vxx\| _{2}^{2} - \| \vxx\| _{2}^{2}\Big| > 2\delta\, s^{d-1}  \, \| \vxx\| _{2}^{2} \Big)
\leq\displaystyle\underbrace{[6\prod_{k= 2}^d (n_k^*)^2+2\sum_{k=2}^d\prod_{\ell = k+1}^{d} (n_\ell^*)^2]}_{C_d}\displaystyle\,\exp(-\frac{1}{128}\, s).
\end{equation}
We can further reduce the bound in \cref{prob bound term} to
$$
2(d+2)\, N^{2-\frac{2}{d}}\,\exp(-\frac{1}{128} \, s),
$$
due to $C_d \leq (6+2(d-1)) \,(N/n^*_1)^2$ and $n_1^* \geq N^{1/d}$ when $d \geq 2$. The proof is complete.
\end{proof}

\begin{remark}
The RIP assumption in \cref{generalized prob bound} is only used in the base case $d=2$, and not in the induction step.  This indicates that our proof is not optimal; it remains an interesting open problem to improve the resulting bound from $m \sim \eps^{-2} \log(1/\eta)^{2d-1}$ to $m \sim \eps^{-2} \log(1/\eta)^{d+1}$.
\end{remark}

\section{Numerical experiments and further discussions}
\label{numerics}
In this section, we run numerical experiments to study the empirical embedding performance of  KFJLT. It is of value to discuss and compare the performances of KFJLT with varying degree $d$, including the standard FJLT corresponding to $d=1$, in order to evaluate the trade-off between distortion power and computational speed-up. Here we define the embedding distortion ratio:
\begin{equation*}
\left\vert\frac{\|\vPhi\vxx\|_2^2-\|\vxx\|_2^2}{\|\vxx\|_2^2}\right\vert.
\end{equation*}

\subsection{FJLT vs Kronecker FJLT}

FJLT and KFJLT differ in the mixing operation. We show the numerical result Figure 2 comparing the embedding performance of standard FJLT, degree-$2$ and degree-$3$ KFJLTs on a set of randomly constructed Kronecker vectors. The numerical observation suggests that KFJLTs take slightly more rows to achieve the same quality of embeddings and lose some stability compared to standard FJLT, which is consistent with the theory.

\begin{figure}[!ht]
\label{ratio}
\centering 
\subfloat[uniform $\lbrack0,1\rbrack$]{\includegraphics[width=.48\textwidth]{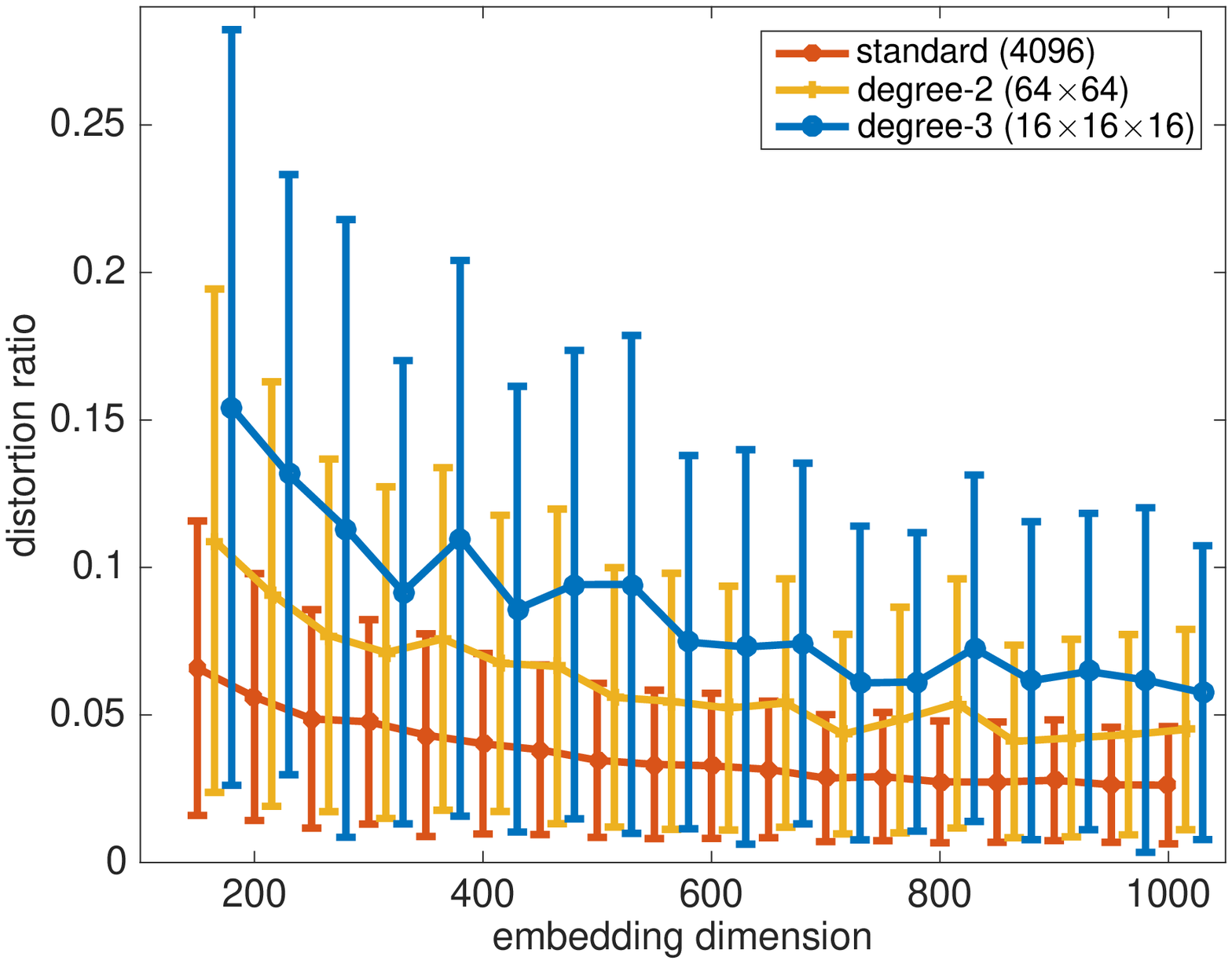}}~~
\subfloat[normally distributed]{\includegraphics[width=.48\textwidth]{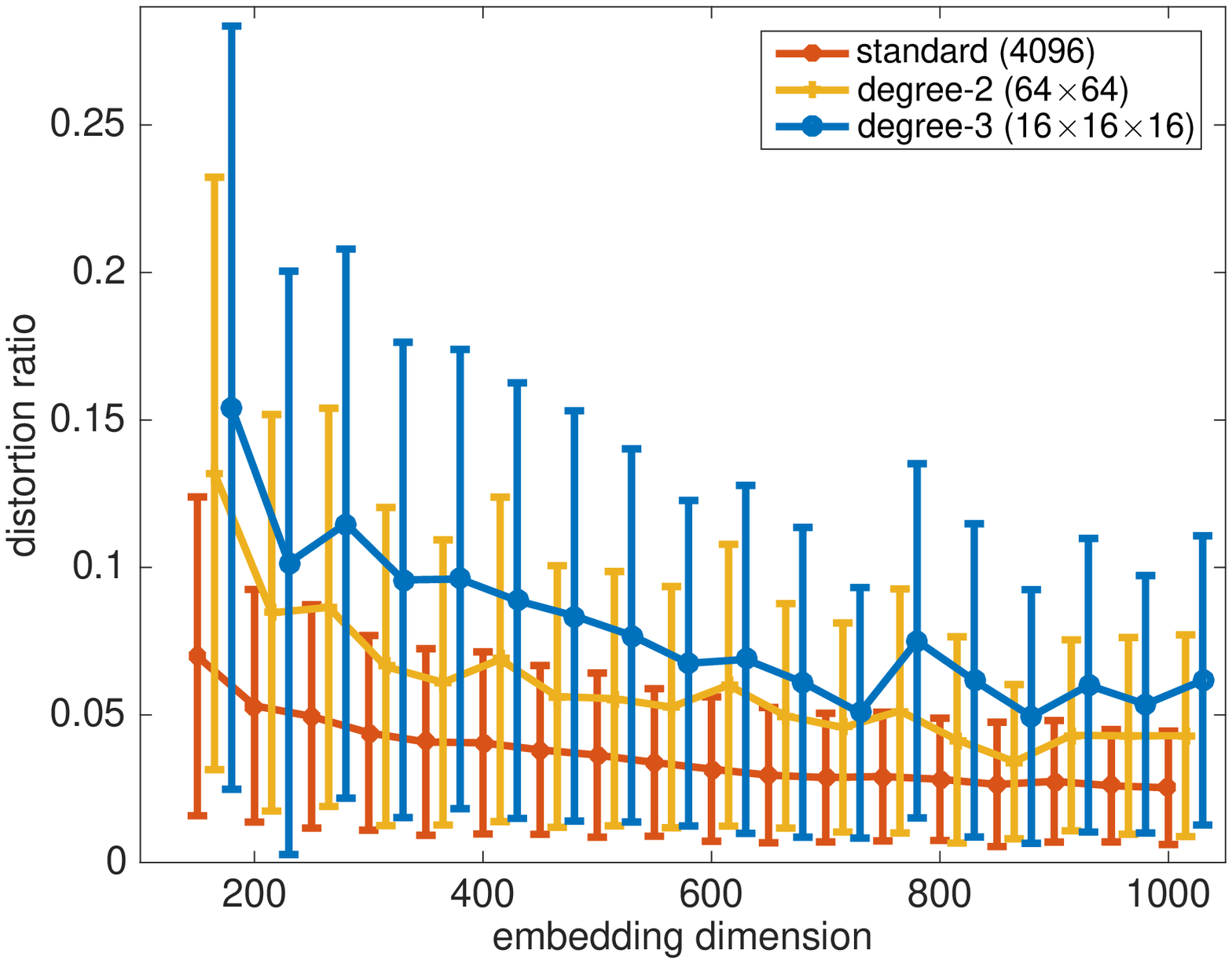}}~~
\caption{Comparing the embedding performance between the standard FJLTs and the KFJLTs of degree $2$ and $3$. Each dot with error bar represents the average distortion ratio and the standard deviation based on $1000$ trials for a given embedding dimension. In each trial, we generate the subsampled standard or Kroneckerized FFTs and the i.i.d. or Kroneckerized random sign-flipping operations for the three constructions respectively and test them on the same vector. The vectors to be embedded are $(\Real^4)^{\bigotimes 6}$ Kronecker vectors, hence simultaneously degree-$2$ and degree-$3$ Kronecker vectors, and they consist of respectively uniform $[0,1]$ (on the left) and normally distributed elements (on the right) in each component vector. The x-axis of the degree-$2$ and degree-$3$ plots are shifted by $+15, +30$ respectively.}
\end{figure}

\subsection{Kronecker-structured vs general vectors}
\label{unstructured}
It is clearly of interest to study the general case of KFJLT embedding arbitrary Euclidean vectors since it is needed for the theoretical analysis of CPRAND-MIX algorithm, though KFJLT is designed to accelerate dimension-reduction for tall Kronecker-structured vectors. One might also wonder if the main embedding results can be improved if we just restrict to Kronecker vectors, but the experiments Figure 3 suggests that, the Kronecker-structured vectors result in worst-case embedding compared to general random vectors. 

\begin{figure}[!ht]
\label{unstructured plot}
\centering 
\subfloat[Degree-$2 ~(125\times 125)$ KFJLT]{\includegraphics[width=.48\textwidth]{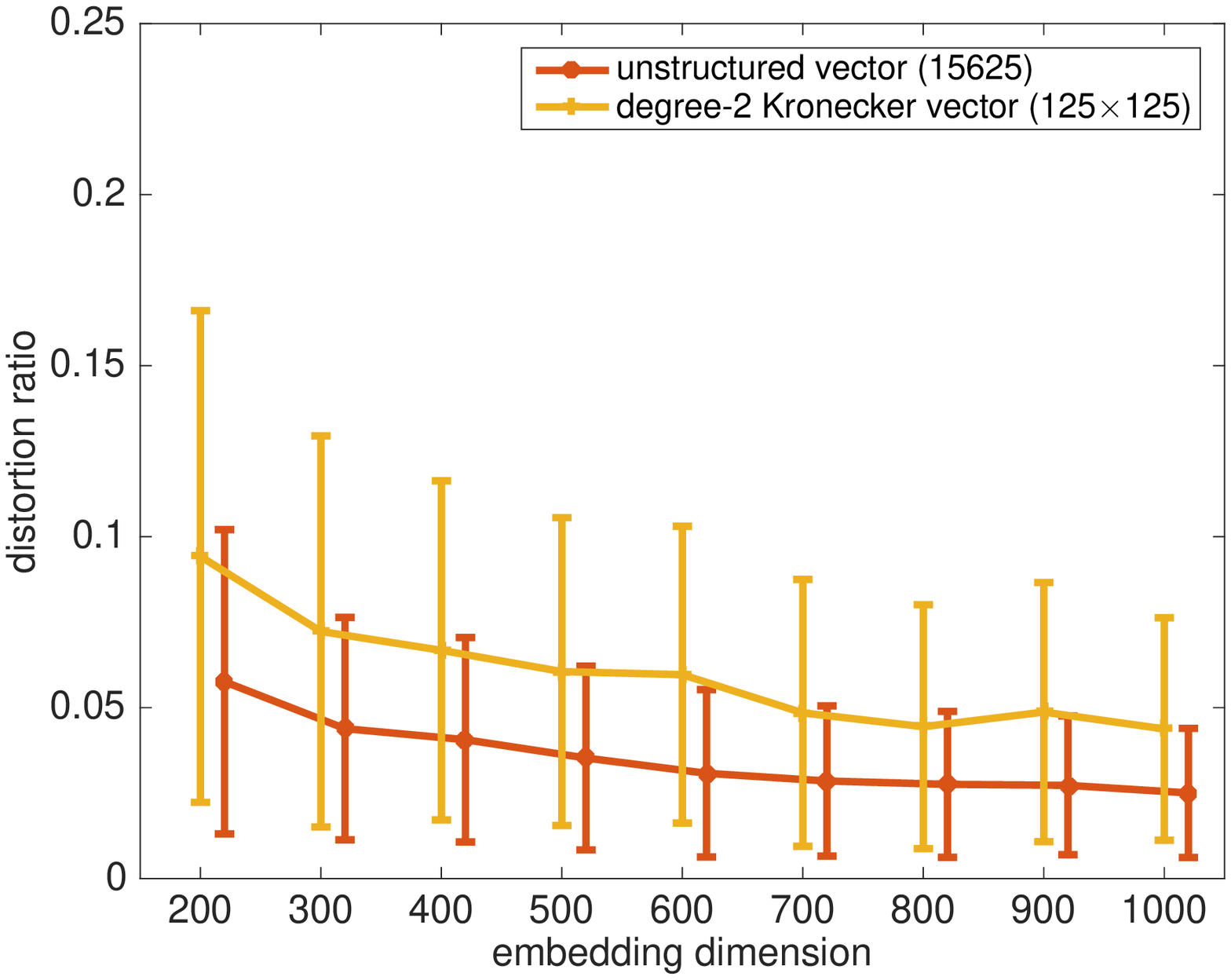}}~~
\subfloat[Degree-$3 ~(25\times 25 \times 25)$ KFJLT]{\includegraphics[width=.48\textwidth]{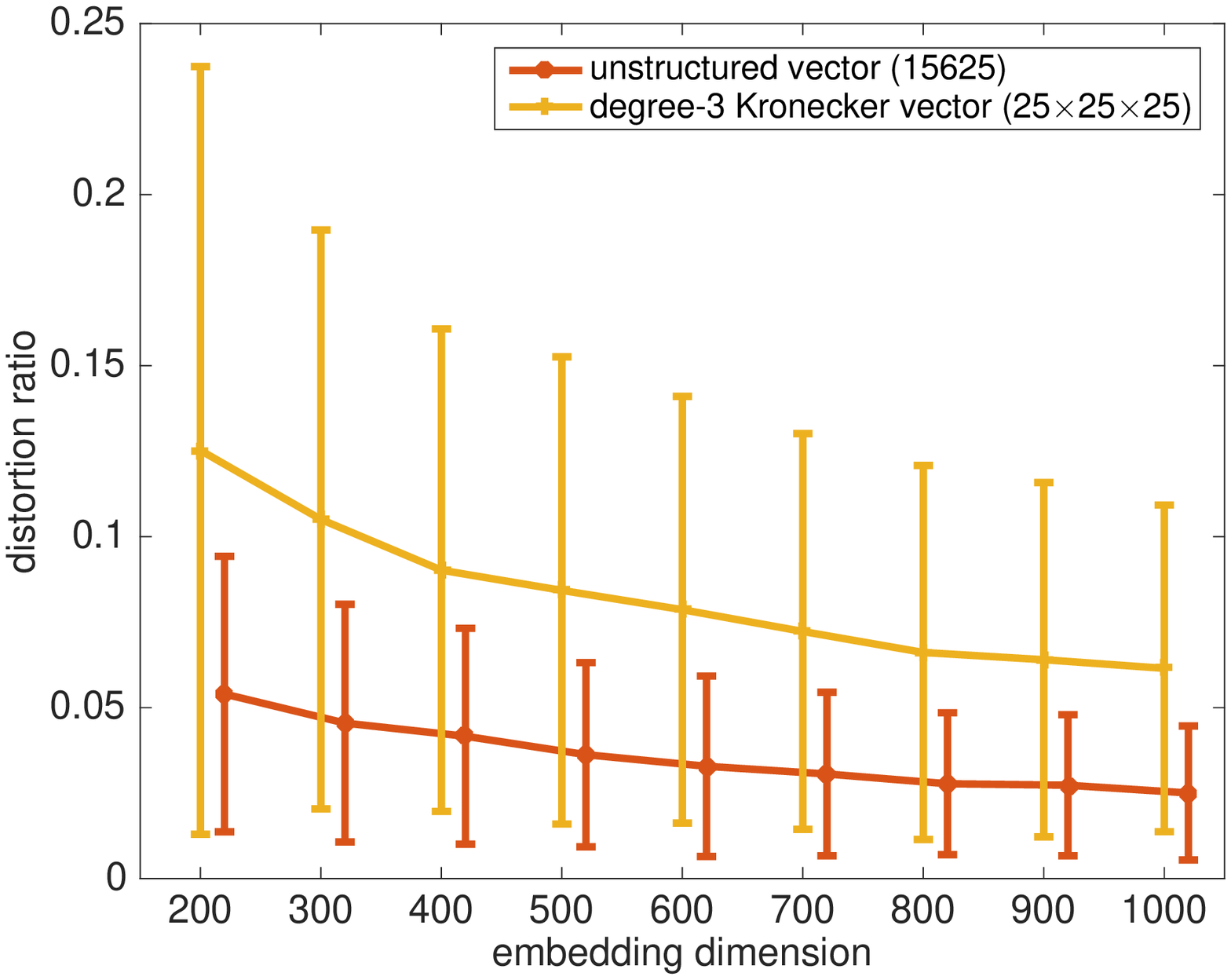}}
\caption{Comparing the embedding performance of the KFJLT embedding on general and Kronecker Euclidean vectors. Each dot with error bar represents the average distortion ratio and the standard deviations based on $1000$ trials for a given embedding dimension. In each trial, we generate a general vector as well as a Kronecker vector and embed each of them using the same KFJLT. Each vector consists of normally distributed elements, either in full or each component vector. The x-axis of the unstructured vector plot is shifted by $+20$.}
\end{figure}

To understand how the Kronecker structure contributes to the gap, we go back to the technical proof. One possible reason could be that, from the result shown in \cref{prob bound}, the probability bound in \cref{simplify} is determined by 
$$(d+2)\, N^{2-\frac{2}{d}}\,\vp/3,
$$
where we denote $\vp$ as the probability bound for $|m_2(i, j)|$ concentrating in the scale $2\delta$ for $i, j \in [n_2]$, which serves as the base bound in the overall derivation (in our result, $\vp = 6\exp(-s/128)$). Given a certain tolerance $\eps = 2\delta\, s^{d-1} \in (0,1)$, it is more unlikely to control the distortion with a bigger base bound. 

When $i \neq j$, by a general version of Hanson-Wright inequality \cite{RV08}, 
\begin{equation*}
\vP(\vert m_2(i, j) - \mathbb{E}\left(m_2(i, j)\right) \vert > t) \leq 2\exp~[-c\,\min(\frac{t^2}{\|\vM_{i, j}\|_F^2}, \frac{t}{\|\vM_{i, j}\|})],
\end{equation*}
$\vp$ increases if $m_2(i, j)$ tends to concentrate around a greater expectation. Moreover,
\begin{equation*}
\mathbb{E}\left(m_2(i, j)\right) = \sum_{i_1=1}^{n_1} x_i(i_1)\, x_j(i_1)\, \vPsi_i(i_1)^\top \vPsi_j(i_1),
\end{equation*}
the correlation between entries in blocks $\vxx_i$ and $\vxx_j$ can make a difference in the estimation of $\mathbb{E}\left(m_2(i, j)\right)$.  Following the \emph{rearrangement inequality}, the expectation tends to reach its highest value among all the choices of pairwise arrangements when $x_j(i_1)$ is in the same position as $x_i(i_1)$ after reordering according to their decreasing arrangements. Vectors with Kronecker structure happen to be in this particular situation, thus achieving larger distortion in general, compared to general vectors. 

One other possible explanation for the discrepancy showed in Figure 3 is that the Kronecker product of Gaussians is no longer Gaussian. The lack of good properties from the normal distribution might contribute to the gap.

\subsection{Sampling strategy in KFJLT}
In constructing the KFJLT, it might seem less natural to first construct the Kronecker product $\bigotimes_{k=d}^{1} \vF_{k}\vD_{\xi_{k}}$ and then subsample rows uniformly (sample after Kronecker), as we propose, compared to first uniformly subsampling each $\vF_{k}$ and then taking the Kronecker product of the resulting subsampled matrices (sample before Kronecker).  On the one hand, the sampling operation does not affect the computational savings for KFJLT, hence there is no major difference in the computational cost between two sampling methods.  However, uniformly subsampling in the final step as we do does lead to a better JL embedding.

Indeed, consider instead sampling components $\sqrt{n_k/m_k}\,\vS_k\in \mathbb{R}^{m_{k}\times n_{k}}$ for $k\in [d]$, and forming the alternative embedding
\begin{equation}
\label{alternate}
\vPhi \vxx=\left(\bigotimes\limits_{k=d}^{1}\sqrt{\frac{n_k}{m_k}}\,\vS_{k}\vF_{k}\vD_{\xi_{k}}\right)\left(\bigotimes\limits_{k=d}^{1}\vxx_k\right)=\bigotimes\limits_{k=d}^{1}\underbrace{(\sqrt{\frac{n_k}{m_k}}\,\vS_{k}\vF_{k}\vD_{\xi_{k}})}_{\text{standard FJLT:}~\vPhi_k}\vxx_k.
\end{equation}

We have the distortion estimation:
\begin{equation*}
\|\vPhi \vxx\|_2^2 - \|\vxx\|_2^2 = O\left(\max\limits_{k\in[d]}\Big\vert\|\vPhi_k \vxx_k\|_2^2 - \|\vxx_k\|_2^2\Big\vert\right).
\end{equation*}

To achieve a $(1\pm\varepsilon)$ approximation, each $m_k$ must be of the scale $\varepsilon^{-2}$ based on \cref{eq:fjlt_example}. Hence the total embedding dimension $m = \prod_{k=1}^{d}m_k$ must be at least of the order $\varepsilon^{-2d}$, which is significantly worse than the scaling we obtain with uniform sampling, $\varepsilon^{-2}$.
 
We corroborate this calculation empirically below in Figure 4, comparing the distortions resulting from our KFJLT with those resulting from a Kronecker-factored sampling strategy as in \cref{alternate}.

\begin{remark}
Note that there are other practical tensor-structured sampling strategies besides for the one introduced above. For example, \cite{CLNW19} and \cite{SGTU18} consider the row-wise Kronecker sampling. Such row-wise design not only reduces the embedding cost with the help of the Kronecker factorization but also keeps the distortion from growing exponentially with the degree $d$. We refer the readers to the first discussion in Section $1.3$ for details. Also, a recent new paper \cite{LK20} proposed an efficient leverage-score based sampling strategy for sparse tensor-structured matrices.
\end{remark}

\begin{figure}[!ht]
\centering 
\label{sampling}
\subfloat[Degree-$2 ~(125\times 125)$]{\includegraphics[width=.48\textwidth]{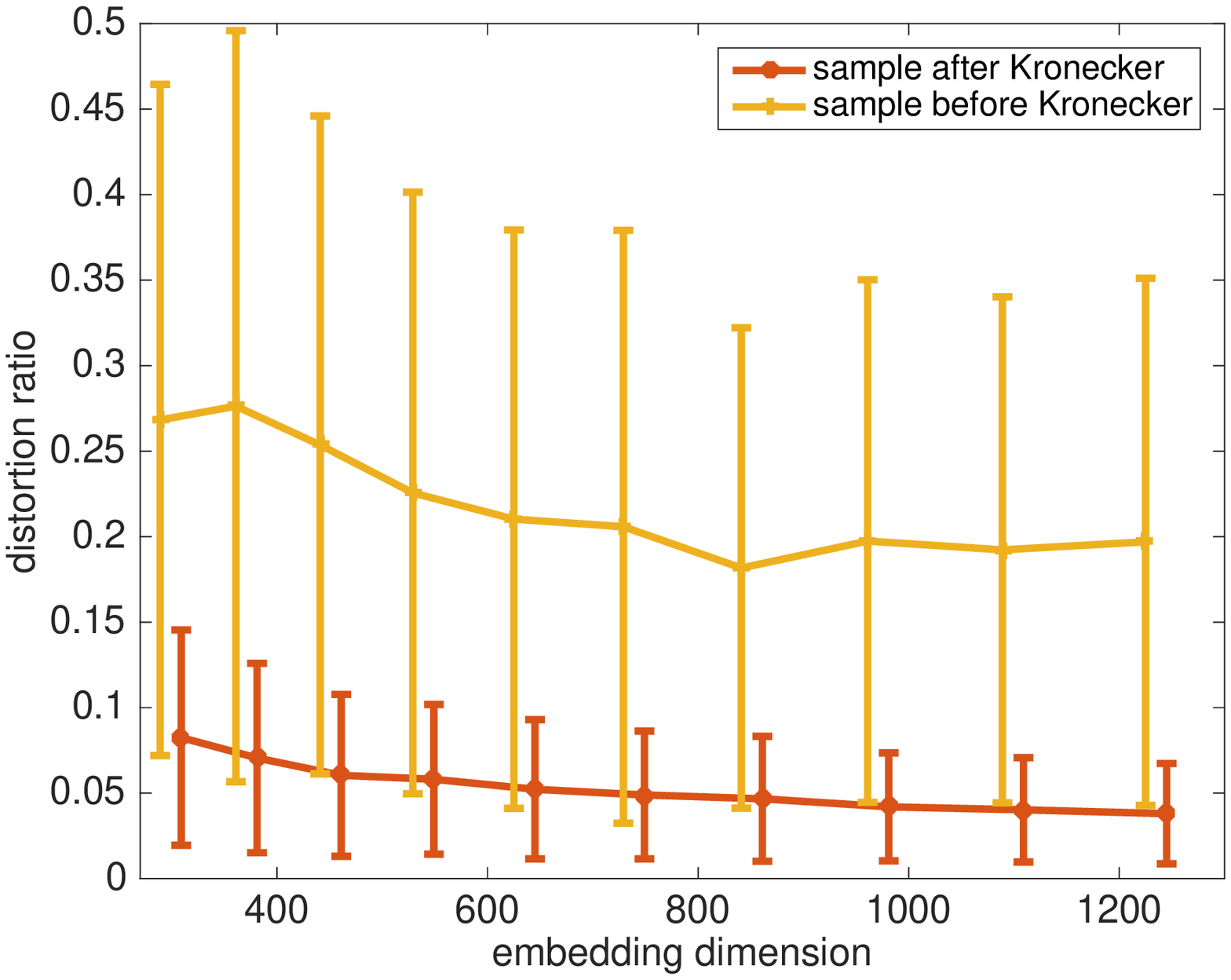}}~~
\subfloat[Degree-$3 ~(25\times 25 \times 25)$]{\includegraphics[width=.48\textwidth]{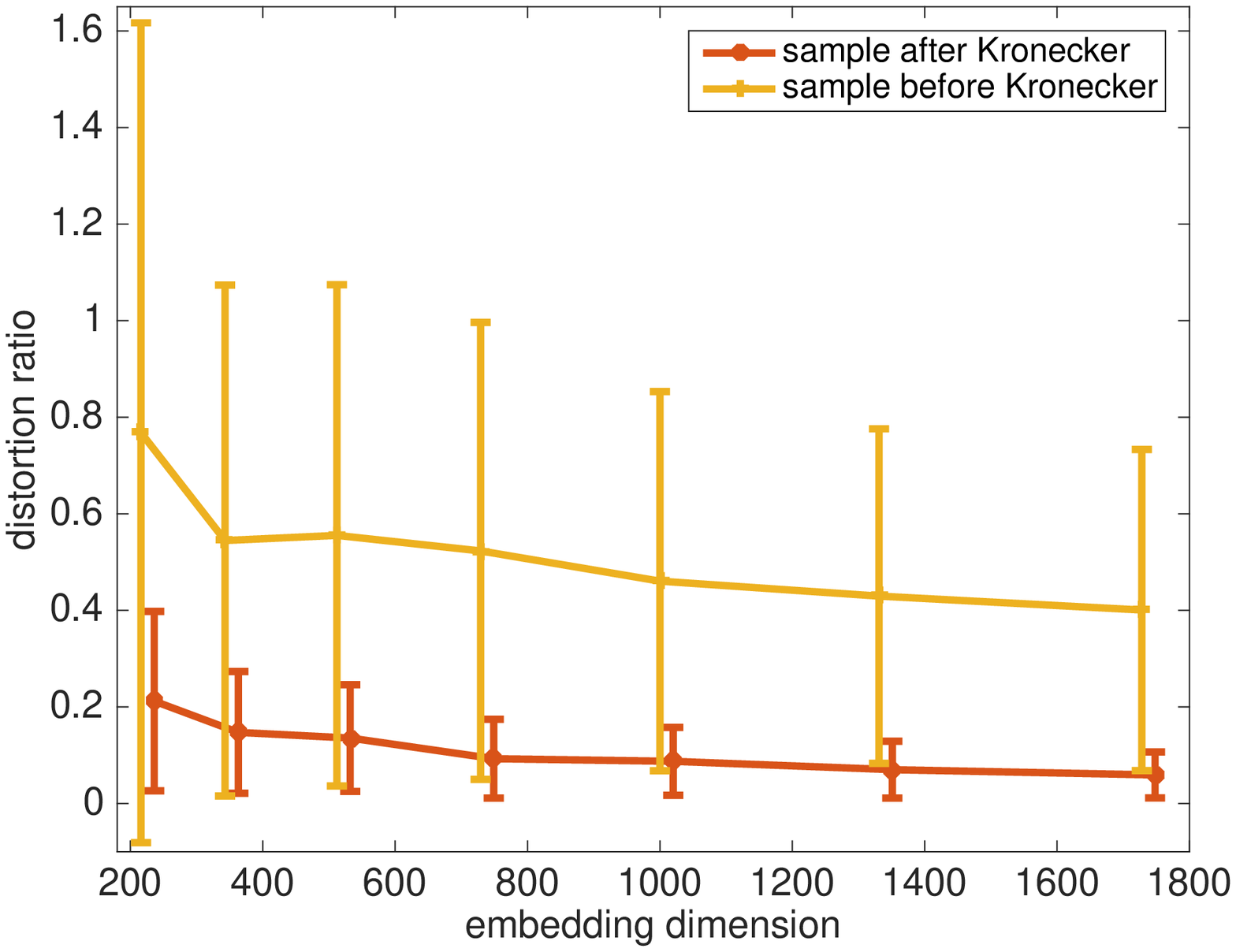}}
\caption{We compare the embedding performance in first constructing the Kronecker product and then sampling the rows uniformly,  to the performance in first uniformly sampling each constituent Fourier transform and then taking the Kronecker product of the results. Each dot with error bar represents the average distortion ratio and standard deviation based on $1000$ trials for a given embedding dimension. In each trial, we generate the same sign-flipped FFT for each Kronecker component but different sampling instructions on the same embedding dimension for two constructions. We test them on the same vector. The embedded objects are respectively degree-$2$ and degree-$3$ Kronecker vectors in the two plots. They consist of normally distributed elements in each component vector. The x-axis of the uniform sampling plot is shifted by $+20$.}
\label{sampling comparison}
\end{figure}

\section*{Acknowledgments}
The authors would like to thank Bosu Choi, Joe Neeman, Eric Price, Per-Gunnar Martinsson, George David Torres, David Woodruff and Gordan \u Zitokovi\'c for valuable help and discussions. Part of this work was done while Rachel Ward and Ruhui Jin were visiting the Simons Institute for the Theory of Computing. We also thank the anonymous reviewers for their careful reading of our manuscript and their many insightful comments and suggestions. 

\section*{Funding}
R. Jin and R. Ward are supported by AFOSR MURI Award N00014-17-S-F006. R. Ward is supported by the National Science Foundation under Grant No. DMS-1638352.  T. G. Kolda is supported by the U.S. Department of Energy, Office of Science, Office of Advanced Scientific Computing Research, Applied Mathematics program. Sandia National Laboratories is a multimission laboratory manage and operated by National Technology and Engineering Solutions of Sandia, LLC., a wholly owned subsidiary of Honeywell International, Inc., for the U.S. Department of Energy's National Nuclear Security Administration under contract DE-NA-0003525.

\bibliographystyle{plain}
\bibliography{references}

\appendix
\section{Proof of \cref{modified}}
\label{modified proof}
\begin{proof}
Due to the $(2s, \delta)$-RIP property of $\vPsi$, apply the result from \cref{RIP-norm 2}, for any row and column index sets $I, J \subset [n]$, $\|\vPsi_L(I)^\top \vPsi_R(J)\| \leq \delta$, if $|I|\leq s, |J|\leq s$.

\begin{align*}
&\|\vC_{\vxx, \vy}\| = \sup_{\|\vu\|_2=\|\vv\|_2 =1}|\langle \vu, \vC_{\vxx, \vy}\vv\rangle|\\
& \leq \sup_{\|\vu\|_2=\|\vv\|_2 =1}\sum_{p, q=2\atop p \neq q}^r |\langle \vu(I_p), \vC_{\vxx, \vy}(I_p, J_q)\vv(J_q)\rangle|\\
& \leq \sup_{\|\vu\|_2=\|\vv\|_2 =1}\sum_{p, q=2\atop p \neq q}^r \|\vu(I_p)\|_2\,\|\vv(J_q)\|_2\, \|\vD_{\vxx(I_p)}\vPsi_L(:, I_p)^\top \vPsi_R(:, J_q)\vD_{\vy(J_q)}\|\\
& \leq \sup_{\|\vu\|_2=\|\vv\|_2 =1}\sum_{p, q=2\atop p \neq q}^r
\|\vu(I_p)\|_2\,\|\vv(J_q)\|_2\,\|\vxx(I_p)\|_\infty \, \|\vy(J_q)\|_\infty\, \delta\\
& \leq \delta\sup_{\|\vu\|_2=\|\vv\|_2 =1}\sum_{p, q=2\atop p \neq q}^r\|\vu(I_p)\|_2\,\|\vv(J_q)\|_2\,\frac{1}{\sqrt{s}}\, \|\vxx(I_{p-1})\|_2\, \frac{1}{\sqrt{s}}\, \|\vy(J_{q-1})\|_2\\
& \leq \frac{\delta}{s}\, \|\vxx\|_2\, \|\vy\|_2 \sup_{\|\vu\|_2=\|\vv\|_2 =1}\sum_{p, q=2\atop p \neq q}^r \left(\frac{1}{2}\|\vu(I_p)\|_2^2 +\frac{1}{2}\frac{\|\vxx(I_{p-1})\|_2^2}{\|\vxx\|_2^2}\right) \left(\frac{1}{2}\|\vv(J_q)\|_2^2 +\frac{1}{2}\frac{\|\vy(J_{q-1})\|_2^2}{\|\vy\|_2^2}\right)\\
& \leq \frac{\delta}{s}\, \|\vxx\|_2\, \|\vy\|_2.
\end{align*}

\begin{align*}
\|\vC_{\vxx, \vy}\|_F^2& = \sum_{i \not\sim j \atop i\in I_1^c, j\in J_1^c}^{n}\left(x(i)\vPsi_L(:, i)^\top\vPsi_R(:, j) y(j)\right)^2\\
& = \sum_{q=2}^r \sum_{i \in I_1^c\atop i \not\in J_q}^{n} x(i)^2\|\vD_{\vy(J_q)}\vPsi_R(:, J_q)^\top \vPsi_L(:, i)\|^2\\
& \leq \sum_{q=2}^r \sum_{i \in I_1^c\atop i \not\in J_q}^{n}x(i)^2\, \|\vy(J_q)\|_\infty^2\, \|\vPsi_R(:, J_q)^\top \vPsi_L(:, i)\|^2\\
& \leq \sum_{q=2}^r\frac{\delta^2}{s}\, \|\vy(J_{q-1})\|_2^2\, \sum_{i = 1}^n x(i)^2\\
& \leq \frac{\delta^2}{s}\, \|\vxx\|_2^2 \, \|\vy\|_2^2.
\end{align*}

\begin{align*}
\|\vv_{\vxx, \vy}\|_2&\leq \sup_{\|\vu\|_2 =1}\sum_{p=2}^r \langle \vu(I_p), \vD_{\vxx(I_p)}\vPsi_L(:, I_p)^\top\vPsi_R(:, J_1)\vD_{\vbbb}\vy(J_1)\rangle\\
& \leq \sup_{\|\vu\|_2 =1}\sum_{p=2}^r \|\vu(I_p)\|_2 \, \|\vxx(I_p)\|_\infty\,\|\vbbb\|_\infty\, \|\vPsi_L(:, I_p)^\top\vPsi_R(:, J_1)\| \,\|\vy(J_1)\|_2\\
& \leq \sup_{\|\vu\|_2 =1}\sum_{p=2}^r \|\vu(I_p)\|_2 \, \frac{1}{\sqrt{s}}\,\|\vxx(I_{p-1})\|_2\, \delta\, \|\vy\|_2\\
& \leq\frac{\delta}{\sqrt{s}}\, \|\vxx\|_2\, \|\vy\|_2\sup_{\|\vu\|_2 =1}\sum_{p=2}^r\left(\frac{1}{2}\|\vu(I_p)\|_2^2 +\frac{1}{2}\frac{\|\vxx(I_{p-1})\|_2^2}{\|\vxx\|_2^2}\right)\\
& \leq \frac{\delta}{\sqrt{s}}\, \|\vxx\|_2\, \|\vy\|_2.
\end{align*}

\begin{align*}
|w_{\vxx, \vy}| & \leq \sum_{p=1}^r \Big|\vdd(I_p)^\top \vD_{\vxx(I_p)}\vPsi_L(:, I_p)^\top \vPsi_R(:, J_p)\vD_{\vy(J_p)}\vdd(J_p)\Big|\\
& \leq \sum_{p=1}^r \|\vxx(I_p)\|_2\, \|\vy(J_p)\|_2\, \|\vPsi_L(:, I_p)^\top \vPsi_R(:, J_p)\|,\\
& \leq \delta\sum_{p=1}^r\|\vxx(I_p)\|_2\, \|\vy(J_p)\|_2\\
& \leq \delta \, \|\vxx\|_2\, \|\vy\|_2\sum_{p=1}^r \frac{1}{2}\left(\frac{\|\vxx(I_p)\|_2^2}{\|\vxx\|_2^2}+\frac{\|\vy(J_p)\|_2^2}{\|\vy\|_2^2}\right)\\
& = \delta \, \|\vxx\|_2\, \|\vy\|_2.
\end{align*}

\end{proof}

\section{Fitting CP model with alternating randomized least squares}
\label{CPRAND-MIX}
In this section, we give supplemental material on the CPRAND-MIX algorithm and show that the application of KFJLT to the alternating least squares problem greatly reduces the workload of CP tensor decomposition. 

The Khatri-Rao product, also called the column-wise Kronecker product denoted by $\bigodot$, is defined as: given matrices $\vx\in\mathbb{R}^{m\times n}$ and $\vY\in\mathbb{R}^{m'\times n}$,
\begin{equation}
\vx \odot \vY= \left[                 
  \begin{array}{cccc}   
    \vxx_{1}\otimes\vy_{1}&\vxx_{2}\otimes\vy_{2}&\dots&\vxx_n\otimes\vy_n  
  \end{array}
\right]\in\mathbb{R}^{mm'\times n}.
\end{equation}

The Khatri-Rao product also satisfies the distributive property
\begin{equation}
\vW\vx\odot \vY\vZ = (\vW\otimes \vY)(\vx\odot \vZ).
\end{equation}

\subsection{Problem set-up}
Let $\vX : \Real^{n_1} \times \dots \times \Real^{n_d}$ be a $d$-way tensor, and $\vm$ be a low-rank approximation of $\vX$ such that rank$(\vm) \leq R$. $\vm$ is defined by $d$ factor matrices, i.e. $\vA_k \in \Real^{n_k \times R}$ via
\begin{equation}
\label{low rank}
\vm = \sum_{j=1}^{R} \vA_1(:,j) \circ \vA_2(:,j) \circ \dots \circ \vA_d(:,j).
\end{equation}

The goal of fitting tensor CP model is to find the factor matrices that minimize the nonlinear least squares objective:
\begin{equation}
\label{total ls}
\| \vX - \vm \|^2 = \sum_{i_1 = 1}^{n_1}\sum_{i_2 = 1}^{n_2}\dots \sum_{i_d = 1}^{n_d} (x(i_1, i_2, \dots, i_d) - m(i_1, i_2, \dots, i_d))^2
\end{equation}
subject to $\vm$ being low rank as \cref{low rank}. The tensor $\vX$ has $\prod_{k=1}^{d} n_k$ parameters whereas $\vm$ has only $R\sum_{k=1}^{d} n_k$ parameters.

For ease of the notation, we define $N = \prod_{k=1}^{d}n_k$ and $N_k = N/n_k$. 

The mode-$k$ unfolding of $\vX$ recognizes the elements of the tensor into a matrix $\vx_{(k)}$ of size $n_k \times \vN_k$. The mode-$k$ unfolding of $\vm$ has a special structure:
\begin{equation}
\label{flatten m}
\vm_{(k)} = \vA_k \underbrace{(\vA_d \odot \dots \odot \vA_{k+1} \odot \vA_{k-1} \dots \odot \vA_1)^\top}_{\vZ_k^\top}.
\end{equation}

The idea behind \emph{alternating least squares} (ALS) for CP is solving for one factor matrix $\vA_k$ at a time, repeating the cycle until the method converges. This takes advantage of the fact that we can rewrite the minimization problem using \cref{flatten m} as
\begin{equation}
\label{explicit cpals}
\min_{\vA_k} \|\vZ_k\vA_k^\top - \vx_{(k)}^\top\|_{F},
\end{equation}
which is a linear least square problem with a closed form solution. The cost of solving the least square problem is $O(R N)$ due to the particular structure of $\vZ_k$. But we need to solve $d$ such problems per outer loop and run tens or hundreds of outer loops to solve a typical CP-ALS problem. Hence, reducing the cost of \cref{explicit cpals} is of interest.

\subsection{Randomized least squares}
Since we expect that the number of rows $\vN_k$ is much greater than the number of columns $R$ in $\vZ_k$, \cref{explicit cpals} can benefit from randomized sketching methods. Instead of solving the full least square, we can instead solve a reduced problem by a sketch matrix $\vPhi\in\mathbb{C}^{m \times \vN_k}$:
\begin{equation}
\min_{\vA_k} \|\vPhi\vZ_k\vA_k^\top - \vPhi\vx_{(k)}^\top\|.
\end{equation}

For $\vPhi$ being a FJLT, the dominant cost are applying the FFT to $\vZ_k$ and $\vx_{(k)}$ and solving the least square: $O((R+n_k) \vN_k \log \vN_k + R^2m_{\text{f}})$.

We then change the sketching form of $\vPhi$ to be a KFJLT: 
\begin{equation}
\vPhi = \vS\left(\bigotimes_{\ell = d \atop \ell \neq k}^1 \vF_{\ell} \vD_{\ell} \right),
\end{equation}
as we can compute \cref{explicit cpals}  more efficiently.

First consider the multiplication with $\vx_{(k)}^\top$. We pay an \emph{one-time} upfront cost to reduce the cost per iteration. The corresponding computation is to mix the original tensor:
\begin{equation}
\hat{\vX} = \vX \times_1 \vF_{1} \vD_{1} \dots \times_d \vF_{d} \vD_{d}.
\end{equation}
The total cost is $N \log N$.

We observe that
\begin{equation}
\vPhi\vx_{(k)}^\top = \left(\vS \hat{\vx}_{(k)}^\top\right)\vF_{k}^{\ast} \vD_{k}.
\end{equation}
The asterisk $\ast$ denotes the conjugate transpose. This equation shows that we just need to sample and then apply the inverse FFT and diagonal. The work per iteration is $O(m_{\text{kron}} n_k \log n_k)$.

Next consider $\vPhi\vZ_k$. We finish the mixing for $\vA_k$: $\hat{\vA}_k = \vF_{k} \vD_{k} \vA_k$, which costs\\
 $O(R n_k \log n_k)$, before sketching the least square in mode $k$. Then the cost of computing
\begin{equation}
\vPhi \vZ_k = \vS \left(\bigodot_{\ell = d \atop \ell\neq k}^1 \vF_{\ell}\vD_{\ell}\vA_{\ell}\right) = \vS\left(\bigodot_{\ell = d \atop \ell\neq k}^1 \hat{\vA}_{\ell}\right)
\end{equation}
is just the cost of sampling the Khatri-Rao product: $Rm_{\text{kron}}$.  

To conclude the comparison of the cost in \cref{embed cost}:
\begin{table}[tbhp]
{\footnotesize
  \caption{\bf Cost per inner iteration}
\label{embed cost}
\begin{center}
  \begin{tabular}{cccc} \toprule[1pt]
\bf Regular CP-ALS & \bf FJLT~-~sketched & \bf Kronecker FJLT~-~sketched \\ \hline
$\bm{O(RN)}$ &  $\bm{O((R+n_k) \vN_k \log(\vN_k) +R^2m_{\text{f}})}$ & $\bm{O((R+m_{\text{kron}}) n_k \log n_k +R^2m_{\text{kron}})}$\\
\bottomrule[1pt] 
\end{tabular}
\end{center}
}
\end{table}

It is natural to choose KFJLT as the sketch strategy for solving CP alternating least squares, as it helps reduce the cost of the inner iteration greatly to the order of $O(n_k \log n_k)$ compared to the original cost: $O(N)$. This idea has been developed into a randomized algorithm: CPRAND-MIX. We refer the readers to \cite{BBK18} for the completed algorithm. 
\end{document}